\documentclass[11pt]{amsart}
\usepackage{hyperref} 
 \usepackage{amsmath}
\usepackage{amssymb}
\usepackage{algorithmic}
\usepackage{algorithm}
\usepackage{comment}
\usepackage{amsmath}
\usepackage{mathrsfs}
\usepackage{color}
\usepackage{enumerate}
 \usepackage{subfig}
 \usepackage{comment}
  \usepackage{algorithmic}

\renewcommand{\eqref}[1]{(\ref{#1})}

\newcommand{\A}{{\mathcal A}}

\newcommand{\vx}{{\boldsymbol x}}
\newcommand{\vh}{{\boldsymbol h}}
\newcommand{\vX}{{\boldsymbol X}}
\newcommand{\vW}{{\boldsymbol W}}
\newcommand{\vy}{{\boldsymbol y}}
\newcommand{\vY}{{\boldsymbol Y}}
\newcommand{\vA}{{\boldsymbol A}}
\newcommand{\vz}{{\boldsymbol z}}
\newcommand{\vw}{{\boldsymbol w}}

\newcommand{\va}{{\boldsymbol a}}
\newcommand{\vb}{{\boldsymbol b}}
\newcommand{\vm}{{\boldsymbol m}}
\newcommand{\vd}{{\boldsymbol d}}
\newcommand{\vv}{{\boldsymbol v}}
\newcommand{\vH}{{\boldsymbol H}}
\newcommand{\vD}{{\boldsymbol D}}
\newcommand{\vC}{{\boldsymbol C}}

\newcommand{\vu}{{\boldsymbol u}}
\newcommand{\vf}{{\boldsymbol f}}
\newcommand{\vZ}{{\boldsymbol Z}}

\newcommand{\vU}{{\boldsymbol U}}
\newcommand{\vV}{{\boldsymbol V}}

\newcommand{\ve}{{\boldsymbol e}}

\newcommand{\vF}{{\boldsymbol F}}

\allowdisplaybreaks[4]

\usepackage[T1]{fontenc}
\usepackage[latin9]{inputenc}
\usepackage{geometry}
\geometry{verbose,tmargin=3cm,bmargin=3cm,lmargin=2.5cm,rmargin=2.5cm}
\usepackage{amsmath}
\usepackage{amsthm}
\usepackage{amssymb}
\usepackage{graphicx}
 \usepackage{subfig}
\theoremstyle{plain}

\newtheorem{prop}{Proposition}[section]
\newtheorem{lem}[prop]{Lemma}
\newtheorem{thm}{Theorem}[section]
\newtheorem{remark}{Remark}[section]

\newtheorem{defn}{Definition}[section]
\newtheorem{coro}{Corollary}[section]

\ifx\proof\undefined
\newenvironment{proof}[1][\protect\proofname]{\par
\normalfont\topsep6\p@\@plus6\p@\relax
\trivlist
\itemindent\parindent
\item[\hskip\labelsep\scshape #1]\ignorespaces
}{%
\endtrivlist\@endpefalse
}
\providecommand{\proofname}{Proof}
\fi
\numberwithin{equation}{section}
\makeatother

\usepackage{babel}

\begin{document}
\title{Stable Recovery Guarantees for Blind Deconvolution under Random Mask Assumption}

\author{Song Li}
\thanks{ Song Li is supported  by NSFC grant (U21A20426, 12071426)}
\address{School of Mathematical Sciences, Zhejiang University, 38 Zheda Road, Hangzhou, 310027, China}
\email{songli@zju.edu.cn}
\author{Yu Xia}
\thanks{ Yu Xia was supported by NSFC grant (12271133, 11901143) and the key project of Zhejiang Provincial Natural Science Foundation grant (LZ23A010002)}
\address{School of Mathematics, Hangzhou Normal University, Hangzhou 311121, China}
\email{yxia@hznu.edu.cn}

\subjclass[2020]{Primary 94A15, 46C05; Secondary 94A12, 49N45}

\keywords{Self calibration, General random mask, Optimal complexity, Alternating minimization}

\maketitle

\begin{abstract}
This study addresses the blind deconvolution problem with modulated inputs, focusing on a measurement model where an unknown blurring kernel $\boldsymbol{h}$ is convolved with multiple random modulations $\{\boldsymbol{d}_l\}_{l=1}^{L}$(coded masks) of a signal $\boldsymbol{x}$, subject to $\ell_2$-bounded noise. We introduce a more generalized framework for coded masks, enhancing the versatility of our approach. Our work begins within a constrained least squares framework, where we establish a robust recovery bound for both $\boldsymbol{h}$ and $\boldsymbol{x}$, demonstrating its near-optimality up to a logarithmic factor. Additionally, we present a new recovery scheme that leverages sparsity constraints on $\boldsymbol{x}$. This approach significantly reduces the sampling complexity to the order of $L=O(\log n)$ when the non-zero elements of $\boldsymbol{x}$ are sufficiently separated. Furthermore, we demonstrate that incorporating sparsity constraints yields a refined error bound compared to the traditional constrained least squares model. The proposed method results in more robust and precise signal recovery, as evidenced by both theoretical analysis and numerical simulations. These findings contribute to advancing the field of blind deconvolution and offer potential improvements in various applications requiring signal reconstruction from modulated inputs.
\end{abstract}

\section{Introduction}
\subsection{Problem Setup}\label{sec: problems}
Blind deconvolution is an inverse problem that aims to reconstruct two unknown signals, $\boldsymbol{h}, \boldsymbol{x} \in \mathbb{C}^n$, from their circular convolution $\boldsymbol{y}\in \mathbb{C}^n$, defined as $\vy: = \boldsymbol{h} \circledast \boldsymbol{x}$, where $\circledast$ denotes the circular convolution operator. This operation can be equivalently expressed in matrix form as $\boldsymbol{y}=\boldsymbol{h}\circledast\boldsymbol{x}=\vC_{\boldsymbol{h}}\boldsymbol{x}$, where $\vC_{\boldsymbol{h}}$  is the circulant matrix generated by $\boldsymbol{h}=[{h}_{1},\cdots,{h}_{n}]^{T}$, defined as:
 \[
 \vC_{\boldsymbol{h}}=\begin{bmatrix}h_{1} & h_{n} & \cdots & h_{2}\\
h_{2} & h_{1} & \cdots & h_{3}\\
\vdots & \vdots & \ddots & \vdots\\
h_{n} & h_{n-1} & \cdots & h_{1}
\end{bmatrix}.
\]
 This problem arises in numerous fields, including astronomy, optics, image processing, and communications engineering \cite{JC93, CW98, LWDF11, WBSJ15}. The blind deconvolution problem is inherently ill-posed due to the presence of scaled-shift symmetry, which implies that there are infinitely many signal pairs that can yield the same convolution result. Consequently, incorporating prior information is crucial to overcoming this ill-posedness. For example, one might impose a subspace condition on $\boldsymbol{x}$ \cite{ARR14}, or enforce a short support condition on $\boldsymbol{h}$ in combination with a sparsity constraint on $\boldsymbol{x}$ \cite{SAS, Ji}.
 
 In this work, we examine a related class of blind deconvolution problems, where the blur kernel ${\vh} \in \mathbb{C}^{n}$ is convolved with multiple modulated inputs. Specifically, the observations $\vy_l\in \mathbb{C}^{n}$, $l=1,\ldots,L$,  take the form
\begin{equation}
\label{eq: original measurement}
{\boldsymbol{y}}_{l}:=\boldsymbol{h}\circledast\left(\boldsymbol{d}_{l}\odot\boldsymbol{x}\right),\qquad\text{for}\  l=1,\cdots,L,
\end{equation}
where $\odot$ denotes the Hadamard product (element-wise multiplication), defined for vectors $\va=[a_1,\ldots,a_n]^{T}$  and $\vb=[b_1,\ldots,b_n]^{T}$ as $\va\odot\vb=[a_1b_1,\ldots,a_nb_n]^{T}$. This formulation can also be interpreted as a self-calibration problem \cite{nonconvex_random_calibration, BD_LS}. Here, $\boldsymbol{d}_l\in \mathbb{C}^{n}$, for $l = 1, \dots, L$, are known coded masks, and the objective is to recover both $\boldsymbol{h}$ and $\boldsymbol{x}$ from the observations $\boldsymbol{y}_{l}$, using the smallest possible number of measurements $L$.

Such modulations can be practically implemented via optical diffraction gratings \cite{diffractiongrating}. Prior theoretical analyses predominantly focused on Rademacher-distributed random masks, where the elements of $\boldsymbol{d}_l$, $l=1,\ldots,L$, are independently sampled from the discrete uniform distribution on $\{ \pm 1 \}$ \cite{BD_randomSign, Romberg_RandomMask, BD_LS}. 

However, in practical applications, a diverse spectrum of mask configurations is frequently employed. For deconvolution problems, particularly in motion deblurring, rapid modulation of camera exposure using broad-band coded apertures is widely adopted. These apertures often utilize  Walsh-Hadamard codes and Modified Uniformly Redundant Arrays (MURA) codes \cite{motion_blur_coded}.
Furthermore, complex-valued random masks, exemplified by phase masks characterized by the complex exponential $\exp(\mathrm{i}\theta)$ where $\theta\in[0,2\pi)$, are extensively utilized in image deconvolution problems \cite{phasemask}. The theoretical foundation for these phase masks is rooted in the concept of Orbital Angular Momentum (OAM) of electromagnetic waves \cite{physics}.
Beyond deblurring applications, both real-valued and complex-valued masks find significant applications in adjacent areas. Notable among these are phase retrieval problems \cite{Candes15, CDP, Chen15}, where the mask designs play a crucial role in the reconstruction of phase information from intensity measurements.

Motivated by this broader context, we propose a more generalized formulation for coded masks, wherein the elements of $\boldsymbol{d}_l$, $l=1,\ldots,L$, are independently and identically distributed (i.i.d.) realizations of a  random variable $g \in \mathbb{C}$, subject to the following probabilistic constraints: 
\begin{defn}
 \label{def: g}
  Let $g \in \mathbb{C}$ be a complex-valued random variable. For some parameter $\nu \in [1, \infty)$, $g$ satisfies the following moment conditions: (1) $|g| \leq \nu$; (2) $\mathbb{E} g = 0$; (3) $\mathbb{E} |g|^2 = 1$.
   \end{defn}

The first moment condition, $|g|\leq \nu$, is widely recognized as an admissibility criterion in the context of random modulation schemes \cite{ASH20}. If $g$ does not satisfy the second and third moment conditions, these can be met through appropriate shifting and scaling. Notably, the Rademacher-distributed random mask represents a special case of our generalized model, corresponding to the parameter value $\nu = 1$. In practical applications, binary masks are commonly $\{0, 1\}$-valued. To convert these into $\{\pm 1\}$-valued masks, we can employ an additional all-one mask, as described in \cite{Romberg_RandomMask}.

When $g$ is complex-valued, as in the case of phase masks, a pertinent example is given by the following discrete probability distribution: 
\begin{equation}
\label{eqn: proper distribution}
 g=
 \begin{cases} 
 1, & \text{with probability } 1/4;\\
 -1, & \text{with probability } 1/4;\\
  \mathrm{i}, & \text{with probability } 1/4;\\
   -\mathrm{i}, & \text{with probability } 1/4. 
   \end{cases}
    \end{equation}
This distribution, which we shall refer to as the quaternary phase distribution, is symmetric about the origin in the complex plane and uniformly distributed on the unit circle's principal axes \cite{CDP}. For this particular choice of $g$, we again have $\nu = 1$, thereby preserving the fundamental properties of the modulations while extending to the complex domain.

Under the generalized random mask assumption, we consider the scenario of $\ell_2$-norm bounded noise corrupted observations, which are modeled as:
\begin{equation}
\label{noise_time_domain}
\boldsymbol{y}_{l}=\boldsymbol{h}\circledast\left(\boldsymbol{d}_{l}\odot\boldsymbol{x}\right)+\vz_l,\qquad\text{for}\  l=1,\cdots,L.
\end{equation}
Let the noise matrix $\vZ$ be defined as:
\begin{equation}
\label{eqn: z_term}
\vZ: = [\vz_1, \ldots, \vz_L] \in \mathbb{C}^{n \times L}.
\end{equation} 
 We assume that the Frobenius norm of $\vZ$, $\|\vZ\|_F$, which can be equivalently expressed as the $\ell_2$-norm of the vectorized form of $\vZ$, is bounded.

The primary objectives of this paper are to analyze the observation model in (\ref{noise_time_domain}) under general mask assumption described in Definition \ref{def: g} and address the following two fundamental questions:
\begin{enumerate} [(i)]
\item For general signals $\boldsymbol{h}, \boldsymbol{x} \in \mathbb{C}^n$, is it possible to robustly recover $\boldsymbol{h}$ and $\boldsymbol{x}$ in the presence of $\ell_2$-noise perturbation, such that the error bound cannot be further improved?
 \item When the signal $\boldsymbol{x}$ is subject to a sparsity constraint, can the sampling complexity $L$ required for robust recovery be improved under a specific recovery model?
  \end{enumerate}

\subsection{Related Works}
\subsubsection{General Signal Case}
In the noiseless model described by (\ref{eq: original measurement}), a convex programming technique known as "lifting" was proposed, which reformulates the blind deconvolution problem as the estimation of a rank-1 matrix \cite{ARR14, Romberg_RandomMask}. The observations in (\ref{eq: original measurement}) are identical to those in \cite{Romberg_RandomMask}, but without the subsampling applied.  In \cite{Romberg_RandomMask}, Bahmani and Romberg demonstrated that, when $\boldsymbol{d}_l$, $l = 1, \dots, L$, are Rademacher random vectors, the reconstruction of $\widehat{\boldsymbol{h}} \boldsymbol{x}^T$ can be achieved through nuclear-norm minimization, provided the sampling complexity satisfies
 \[
L\gtrsim\mu\log^{2}n\log(n/\mu)\log\log(n+1),
\]  
where 
\begin{equation}
\label{eqn: mu}
\mu=\|\widehat{\boldsymbol{h}}\|_{\infty}^{2}/\|\boldsymbol{h}\|_{2}^{2}
\end{equation} 
represents the coherence parameter of the blurring kernel $\boldsymbol{h}$. Here $\widehat{\boldsymbol{h}}$ is the discrete fourier transform of $\vh$. However, their results were restricted to the noiseless case, and no conclusions were drawn for scenarios involving noise.

The sampling complexity was subsequently improved by Lin and Strohmer \cite{BD_LS}, who investigated the minimization of a least squares problem. In their work, they showed that the optimal solution $\boldsymbol{z}^{\#}$ for the least squares problem satisfies
\begin{equation}\label{eqn: LS_error}
\frac{\|{\boldsymbol{z}^{\#}}-\tau\boldsymbol{z}_0\|_{2}}{\|\tau\boldsymbol{z}_0\|_{2}}\leq\kappa(\mathcal{A}_{\boldsymbol{w}})\eta\left(1+\frac{2}{1-\kappa(\mathcal{A}_{\boldsymbol{w}})\eta}\right),
\end{equation}
provided that $L \gtrsim \log^2 n$, where $\boldsymbol{z}_0 = \begin{bmatrix} \boldsymbol{s} \\ \boldsymbol{x} \end{bmatrix}$ with $\boldsymbol{s}$ being the element-wise inverse of $\widehat{\boldsymbol{h}}$, and $\tau = \frac{c}{\boldsymbol{w}^* \boldsymbol{z}_0}$ for suitably chosen $\boldsymbol{w} \in \mathbb{C}^{2n}$ and $c \in \mathbb{C}$. The noise level $\eta$ is related to the Frobenius norm of $\vZ$ as described in (\ref{eqn: z_term}) in the frequency domain, specifically, $\eta = \|\vF \vZ\|_F = {\sqrt{n}} \|\vZ\|_F,$ where $\vF$ is the $n \times n$ discrete Fourier transform (DFT) matrix, with its $(j,k)$-th element given by $F_{j,k} = \exp\left(-\frac{2\pi \mathrm{i}(j-1)(k-1)}{n}\right).$ Additionally, $\kappa(\mathcal{A}_{\boldsymbol{w}})$  denotes the condition number of the matrix $\mathcal{A}_{\boldsymbol{w}}$, which is explicitly given in \cite{BD_LS}. It follows from this analysis that robust recovery is achievable only when the noise level $\eta$ satisfies the condition $\eta \leq \frac{1}{\kappa(\mathcal{A}_{\boldsymbol{w}})}.$

\subsubsection{Subspace Signal or Sparse Signal Case. }
If $\boldsymbol{x}$ resides within a specified subspace, expressed as $\boldsymbol{x} = \boldsymbol{D}\boldsymbol{z}$, where $\boldsymbol{D}$ is a known $n \times K$ tall orthonormal matrix, a sequence generated by a gradient descent algorithm will converge to the true solution in the noiseless case, provided that the sampling complexity satisfies $L \gtrsim \nu^2(\mu^2\nu^2_{\max}\frac{KL^2}{n} + \nu^2)\log^4n$ \cite{BD_randomSign}. Here, $\nu_{\max}^2 = n \|\boldsymbol{D}\|_\infty^2$ and $\widetilde{\nu}^2 = \frac{n \|\boldsymbol{D} \boldsymbol{z}\|_\infty^2}{\|\boldsymbol{z}\|_2^2}$. However, when $\boldsymbol{D}$ is chosen as the first $K$ columns of the identity matrix $\boldsymbol{I}$, it follows that $\nu_{\max}^2 = n$ and $\widetilde{\nu}^2 \geq 1$. In this case, it becomes untenable to satisfy the condition  $L \gtrsim \widetilde{\nu}^2(\mu^2\nu^2_{\max}\frac{KL^2}{n} + \widetilde{\nu}^2)\log^4n$ even for sufficiently large $L$.

When sparsity is incorporated into the recovery model, several algorithmic advancements have been proposed to address blind deconvolution problem \cite{CE16, SAS, MC99, Ji,XS18}. In the context of calibration, Corollary 3 in \cite{CompressedDeconvolution} shows that when $\boldsymbol{h} \in \mathbb{C}^n$
  is known, the true signal $\boldsymbol{x}$ can be recovered with probability at least $1 - 2 \exp(-Ck)$ for some positive constant $C$, provided that the sampling complexity satisfies $L \gtrsim K \log n$, assuming $\boldsymbol{x}$ is $K$-sparse. However, in our self-calibration problem, $\boldsymbol{h}$ is unknown. Moreover, when $K$ becomes sufficiently large, the required sampling complexity in \cite{CompressedDeconvolution} exceeds the typical $O(\text{polylog}(n))$ bound for the general $\boldsymbol{x}$ case. The results in \cite{Qiao2025} indicate that the proximal alternating linearized minimization (PALM) model can effectively recover both $\boldsymbol{x}$ and $\boldsymbol{h}$ in noisy environments. Convergence results demonstrate that the generated sequence converges to a critical point of the corresponding optimization model. Nevertheless, the sampling complexity required for successful or robust recovery remains unknown. 

In summary, although several studies have tackled the sampling complexity and algorithmic guarantees for sparse recovery, the existing sampling complexities for self-calibration problem  remain suboptimal and have yet to fully address the challenges posed by noisy scenarios.

\subsection{{Our Contributions}}
In this paper, we address the questions posed in Section \ref{sec: problems} when the observations are corrupted by $\ell_2$
 -bounded noise. One of our key contributions is the extension of theoretical analysis to encompass a broader class of random masks. Unlike previous studies that predominantly focused on Rademacher masks, our work introduces a more general mask framework as defined in Definition \ref{def: g}. This generalization allows for a more comprehensive understanding of various mask properties and their effects on signal reconstruction performance.

With respect to Question (i), our analysis centers on the reconstruction of arbitrary signals $\vh$ and $\vx$ within the framework of the constrained least squares model delineated in (\ref{eqn: noise1}). Theorem \ref{thm: stable} establishes that when the sampling complexity adheres to the condition $L \gtrsim \mu C_{\nu} \text{polylog}(n)$, the recovery error is provably bounded above by $\sqrt{n} \|\vZ\|_F$, where $\vZ$ is precisely defined in equation (\ref{eqn: z_term}). The optimality of this result is substantiated by Theorem \ref{thm: lower_bound}, which demonstrates the near-tightness of the error bound established in Theorem \ref{thm: stable}.

Addressing Question (ii), we demonstrate that the introduction of a sparsity constraint leads to a significant reduction in the required sampling complexity. Theorem \ref{thm: h_recovery} delineates a robust reconstruction scheme for the recovery of $\vh$ in the presence of $\ell_2$-bounded noise. Under the assumptions that the blur kernel $\vh$ exhibits compact support and the signal $\vx$ is $k$-sparse and its non-zero elements are sufficiently separated, the sampling complexity $L$ attains the  order of ${O}(\log n)$, see as in Corollary \ref{coro: h_optimal}. Subsequent to the estimation of $\vh$, we employ a LASSO-based model for the recovery of $\vx$, which can be formulated as a constrained optimization problem in the $\ell_1$-regularized least squares framework. The theoretical guarantees for this recovery process are expounded in Theorem \ref{thm: x_estimation}.

Our research extends beyond theoretical analysis to incorporate extensive numerical experiments. These experiments explore the performance of our proposed methods under a variety of mask settings, encompassing both real and complex cases. We conduct extensive simulations to empirically validate the superior performance of the constrained least squares model (\ref{eqn: noise1}) in comparison to the least squares model proposed in previous work for general configurations of $\vx$ and $\vh$. The results of these comparisons are illustrated in Figure \ref{fig: robust_comparison0}. Furthermore, by exploiting the sparsity prior of $\vx$, we demonstrate the enhanced efficacy of the Proximal Alternating Linearized Minimization (PALM) algorithm with specific initializations $\vh_0$ and $\vx_0$, as delineated in Algorithm \ref{alg1}. Our approach exhibits superior performance compared to other state-of-the-art algorithms, as evidenced in our numerical results presented in Figure \ref{fig: comparison2} and Figure \ref{fig: comparison3}. To further validate our approach, we extend our analysis to two-dimensional imaging applications, adopting a setup for blind deconvolution in random mask imaging. The visual results of this extension are shown in Figure \ref{fig: output}, demonstrating the effectiveness of our method in practical scenarios.

\subsection{Notations and Definitions}
Matrices and vectors are denoted by boldface uppercase and lowercase letters, respectively. For a vector $\boldsymbol{x} \in \mathbb{C}^n$, the $p$-norm ($1 \leq p \leq \infty$) is defined as $\|\boldsymbol{x}\|_p = \left( \sum_{j=1}^n |x_j|^p \right)^{1/p}.$ For a matrix $\boldsymbol{X} \in \mathbb{C}^{m \times n}$, we denote the operator norm, nuclear norm  and the Frobenius norm of $\boldsymbol{X}$ as $\|\boldsymbol{X}\|$, $\|\vX\|_{*}$ and $\|\boldsymbol{X}\|_F$, respectively. The transpose and the complex conjugate transpose of $\boldsymbol{X}$ are written as $\boldsymbol{X}^T$ and $\boldsymbol{X}^*$. For any matrices $\boldsymbol{X}, \boldsymbol{Y} \in \mathbb{C}^{m \times n}$, the inner product is defined as $\langle \boldsymbol{X}, \boldsymbol{Y} \rangle := \text{Tr}(\boldsymbol{X}^* \boldsymbol{Y})$.  Additionally, we denote $C_{\alpha}$
  as a constant that depends on the parameter $\alpha$. We use the notation $A \gtrsim B$ to indicate that $A \geq C B$, where $C$ is a positive absolute constant. Similarly, $A \lesssim B$ is defined in the same way. Furthermore, $A = \Theta(B)$ means that there exist positive absolute constants $C_1$ and $C_2$ such that $C_1 B \leq A \leq C_2 B$.

Let $\vF$ denote the $n \times n$ Discrete Fourier Transform (DFT) matrix, where the $(j,k)$-th element is given by $F_{j,k} = \exp\left(-\frac{2\pi \mathrm{i}(j-1)(k-1)}{n}\right).$ For a vector $\boldsymbol{z} = [z_1, \dots, z_n]^T \in \mathbb{C}^n$, the matrix $\text{diag}(\boldsymbol{z}) \in \mathbb{C}^{n \times n}$ is the diagonal matrix with the $k$-th diagonal entry $z_k$. Moreover, the cyclic shift $s_{\tau}(\boldsymbol{z})$
  of $\boldsymbol{z}$ for some $\tau \in \{0, \ldots, n\}$ is denoted as
\begin{equation}\label{eqn: s_tau}
s_{\tau}(\boldsymbol{z}):=[{z}_{l(1-\tau)},\ldots,{z}_{l(n-\tau)}]^{T},
\end{equation}
where 
\[
l(j)=\begin{cases}
j, & \text{if}\ j>0;\\
j+n, & \text{otherwise.}
\end{cases}
\]
We also denote $\vC_{\vz}$ and $\check{\vC}_{\vz}$ respect to vector $\vz$ as follows:
\begin{equation}
\label{eqn: check_C}
 \vC_{\vz}=\begin{bmatrix}z_{1} & z_{n} & \cdots & z_{2}\\
z_{2} & z_{1} & \cdots & z_{3}\\
\vdots & \vdots & \ddots & \vdots\\
z_{n} & z_{n-1} & \cdots & z_{1}
\end{bmatrix}\quad \text{and}\quad \check{\vC}_{\vz}:=\begin{bmatrix}z_1 & z_2 & \cdots & z_n\\
z_2 & z_3 & \cdots & z_1\\
\vdots & \vdots & \ddots & \vdots\\
z_n & z_1 & \cdots & z_{n-1}
\end{bmatrix}.
\end{equation}
\subsection{Orgainizations}
The structure of this paper is as follows:
Section \ref{sec: main_result} presents the principal theoretical contributions of our work. This includes: (i) Stability theorems for general signal reconstruction, establishing near optimal error bounds and sampling complexity requirements. (ii) Refined stability results for sparse signals, incorporating sparsity priors to achieve improved recovery guarantees. (iii) Algorithmic frameworks that bridge theoretical insights with practical implementations for efficient signal recovery. Section \ref{sec: numerical} is dedicated to extensive numerical experiments, providing empirical validation of our theoretical results and demonstrating the practical efficacy of our proposed methods across various scenarios.
Sections \ref{sec: stable} through \ref{sec: noise_upper} contain rigorous mathematical proofs of the main theorems and crucial intermediate results. 

\section{Main Results}\label{sec: main_result}
The theoretical results presented herein consider random masks under the general assumptions outlined in Definition \ref{def: g}. It broadens the scope of analysis compared to previous theoretical works, which primarily focused on Rademacher masks \cite{BD_LS,BD_randomSign, Romberg_RandomMask}. By adopting a more generalized framework, our analysis offers a comprehensive understanding of mask properties and their impact on signal reconstruction.
\subsection{General Signal Case.}
To align the model and results delineated in \cite{Romberg_RandomMask}, we rewrite the observations in (\ref{noise_time_domain}) by introducing a normalization factor, specifically:
\[
{\vy}_l^{\mathrm{normal}}=\frac{1}{\sqrt{L}}\vh\circledast(\vd_l\odot \vx)+\frac{1}{\sqrt{L}}\vz_l,\quad \text{for}\ l=1,\ldots,L.
\]
By applying the discrete Fourier transform to the observations ${\boldsymbol{y}}_l$, for $l = 1, \dots, L$, in (\ref{noise_time_domain}), the observations can be transformed as follows:
\begin{equation}\label{eqn: equivalent model}
\widehat{{\boldsymbol{y}}}_{l}^{\mathrm{normal}}=\frac{1}{\sqrt{L}}\widehat{\boldsymbol{h}}\odot\widehat{\boldsymbol{D}_l\boldsymbol{x}}+\frac{1}{\sqrt{L}}\widehat{\vz_l}=\frac{1}{\sqrt{L}}(\vF\odot(\widehat{\vh}\vx^{T}))\vd_{l}+\frac{1}{\sqrt{L}}\widehat{\vz_l},\qquad\text{for}\ \   l=1,\ldots,L,
\end{equation}
where $\boldsymbol{D}_{l}:=\text{diag}(\boldsymbol{d}_{l})$ and $\widehat{\boldsymbol{z}}:=\boldsymbol{F}\boldsymbol{z}$ for any $\boldsymbol{z}\in \mathbb{C}^n$.
Defining the following:
\begin{equation}
\label{eqn: Y_g}
\widehat{\boldsymbol{Y}} := [\widehat{{\boldsymbol{y}}}_{1}^{\mathrm{normal}}, \dots, \widehat{{\boldsymbol{y}}}_{L}^{\mathrm{normal}}],\quad\widehat{\vZ}:=\frac{1}{\sqrt{L}}[\widehat{\vz}_1,\ldots,\widehat{\vz}_L],\quad  \text{and}\quad \vD_g: = [\vd_1, \dots, \vd_L],
\end{equation}
we can obtain that 
\begin{equation}
\label{eqn: Y_hat}
{\widehat{\boldsymbol{Y}}}=\mathcal{A}({\widehat{\boldsymbol{h}}\boldsymbol{x}^{T}})+\widehat{\vZ},
\end{equation}
where the operator $\mathcal{A}:\mathbb{C}^{n\times n}\rightarrow \mathbb{C}^{n\times L}$ is defined as:
\begin{equation}
\label{eqn: A_operator}
\mathcal{A}(\vX):=\frac{1}{\sqrt{L}}(\vF\odot \vX)\vD_{g},
\end{equation}
for any $\vX\in \mathbb{C}^{n\times n}$. It is the same linear operator as that presented in \cite{Romberg_RandomMask}, excluding the subsampling procedure. However, they did not provide any results for the recovery error bound in the noisy case, nor did they assess whether the error bound is optimal.

First of all, we examine the error bound under the following constrained least squares model:
\begin{equation}
\label{eqn: noise1}
\min_{\boldsymbol{X} \in \mathbb{C}^{n \times n}} \quad \|\widehat{\vY} - \mathcal{A}(\boldsymbol{X})\|_F \quad \text{s.t.} \quad \|\boldsymbol{X}\|_* \leq R.
\end{equation}
Theorem \ref{thm: stable} establishes a robust error bound for signal reconstruction in the presence of $\ell_2$-corrupted noise. This result offers valuable insights into the relationship between the sampling requirements and reconstruction accuracy in noisy scenarios.
\begin{thm}\label{thm: stable}
Consider a sequence of independent diagonal matrices $\boldsymbol{D}_l = \mathrm{diag}(\vd_l)$, $l = 1, \dots, L$, where the diagonal entries are independent copies of a random variable $g \in \mathbb{C}$ with parameter $\nu$ as defined in Definition \ref{def: g}. For any fixed $\widehat{\boldsymbol{h}},\boldsymbol{x}\in \mathbb{C}^n$, assume that $\|\widehat{\boldsymbol{h}}\|_2 = \|\boldsymbol{x}\|_2 = 1$. The observations are corrupted by a noise term $\widehat{\vZ} \in \mathbb{C}^{n \times L}$, so that the observed matrix is given by:
$\widehat{\vY} = \mathcal{A}(\widehat{\boldsymbol{h}} \boldsymbol{x}^T) + \widehat{\vZ},$ as in (\ref{eqn: Y_hat}). Then, with probability at least $1 - C_1 n^{-1} - \exp\left( -\frac{n}{4C^2 \nu^4} \right)$, the solution $\boldsymbol{X}^{\#}$ to the optimization problem (\ref{eqn: noise1}) with $R = \|\widehat{\boldsymbol{h}} \boldsymbol{x}^T\|_*$
 satisfies the error bound:
\begin{equation}
\label{eqn: noise_est}
\|\boldsymbol{X}^{\#} - \widehat{\boldsymbol{h}} \boldsymbol{x}^T\|_F \lesssim \sqrt{n} \|\widehat{\vZ}\|_F,
\end{equation}
 provided that $L \gtrsim C_{\nu} \mu \log^2 n \log\left({n}/{\mu}\right) \log\log n$ and $n \geq \max\left\{16 C^2 \nu^4, 1\right\} L$. Here, $\mu = {\|\widehat{\boldsymbol{h}}\|_\infty^2}/{\|{\boldsymbol{h}}\|_2^2}$, $C$ and $C_1$ are absolute constants, $C_{\nu}$  is a constant depending on $\nu$.\end{thm}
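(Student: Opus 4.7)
I would adapt the Bahmani--Romberg nuclear-norm recovery framework for blind deconvolution to the noisy setting under the broader $\nu$-bounded mask class of Definition \ref{def: g}. Set $\vX_0 = \widehat{\vh}\vx^T$ (a unit-Frobenius rank-one matrix with singular factorization $\vU\vV^* = \widehat{\vh}\vx^T$), $\vH = \vX^{\#} - \vX_0$, and let $T$ denote the tangent space to the rank-one variety at $\vX_0$ with projections $P_T,P_{T^\perp}$. Feasibility of $\vX_0$ in \eqref{eqn: noise1} immediately gives the fidelity bound $\|\mathcal{A}(\vH)\|_F \le 2\|\widehat{\vZ}\|_F$, while the nuclear-norm constraint yields $\|\vX_0 + \vH\|_* \le \|\vX_0\|_*$.

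The proof would then rest on two probabilistic ingredients. First, a \emph{local isometry on $T$}: matrix Bernstein applied to $P_T\mathcal{A}^*\mathcal{A} P_T = L^{-1}\sum_l P_T\mathcal{A}_l^*\mathcal{A}_l P_T$, with variance and Orlicz bounds controlled by the coherence $\mu$ and the moment parameter $\nu$, would give $\|P_T\mathcal{A}^*\mathcal{A} P_T - P_T\|_{\mathrm{op}} \le 1/2$, and hence $\|\mathcal{A}(\vH_T)\|_F \ge \tfrac{1}{\sqrt{2}}\|\vH_T\|_F$. Second, a \emph{golfing-scheme dual certificate}: partition $\{1,\ldots,L\}$ into $K\asymp\log n$ mutually independent batches $\Omega_k$ and iterate $\vY_0 = 0$, $\vY_k = \vY_{k-1}+\mathcal{A}_{\Omega_k}^*\mathcal{A}_{\Omega_k}(\vU\vV^* - P_T\vY_{k-1})$. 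Stagewise local isometry on $T$ contracts the residual $\vU\vV^* - P_T\vY_k$ by a factor $1/2$ per batch, while a batchwise matrix Bernstein bound on $\|P_{T^\perp}\mathcal{A}_{\Omega_k}^*\mathcal{A}_{\Omega_k}(\vU\vV^* - P_T\vY_{k-1})\|_{\mathrm{op}}$ telescopes to $\|P_{T^\perp}\vY_K\|_{\mathrm{op}} \le 1/2$ and to $\|P_T\vY_K - \vU\vV^*\|_F$ negligibly small. Writing $\vY_K = \mathcal{A}^*(\boldsymbol{\Lambda})$, the geometric decay also ensures $\|\boldsymbol{\Lambda}\|_F \le C$ independent of $n$. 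Together, these produce the stated threshold $L \gtrsim C_\nu\mu\log^2 n\log(n/\mu)\log\log n$.

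Combining these ingredients with the subgradient inequality $\|\vX_0+\vH\|_* \ge \|\vX_0\|_* + \mathrm{Re}\,\langle\vU\vV^*,\vH\rangle + \|\vH_{T^\perp}\|_*$ and the identity $\langle\vY_K,\vH\rangle = \langle\boldsymbol{\Lambda},\mathcal{A}(\vH)\rangle$ gives $\|\vH_{T^\perp}\|_* \lesssim \|\widehat{\vZ}\|_F$. The $\sqrt{n}$ scaling in the final bound enters through the elementary atom estimate
\[
\|\mathcal{A}(\vu\vv^T)\|_F \le \sqrt{n}\,\nu\,\|\vu\|_2\|\vv\|_2,
\]
which follows from $(\mathcal{A}(\vu\vv^T))_{j,l} = L^{-1/2}u_j[\vF(\vv\odot\vd_l)]_j$ and $|[\vF(\vv\odot\vd_l)]_j|\le\sqrt{n}\,\nu\,\|\vv\|_2$. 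Thus $\|\mathcal{A}(\vH_{T^\perp})\|_F \le \sqrt{n}\,\nu\,\|\vH_{T^\perp}\|_* \lesssim \sqrt{n}\,\|\widehat{\vZ}\|_F$; the local isometry then converts this into $\|\vH_T\|_F \le \sqrt{2}\bigl(\|\mathcal{A}(\vH)\|_F + \|\mathcal{A}(\vH_{T^\perp})\|_F\bigr) \lesssim \sqrt{n}\,\|\widehat{\vZ}\|_F$, and $\|\vH\|_F^2 \le \|\vH_T\|_F^2 + \|\vH_{T^\perp}\|_*^2 \lesssim n\|\widehat{\vZ}\|_F^2$ completes the proof.

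\textbf{Main obstacle.} The crux is the golfing dual-certificate construction: every concentration step of the Rademacher-based Bahmani--Romberg analysis must be redone under the complex, possibly asymmetric $\nu$-bounded distribution of Definition \ref{def: g}. The matrix Bernstein variance proxies and the rank-one atom norms entering each batch need re-derivation (contributing the $C_\nu$ prefactor), and the contraction must be tight enough that $\|P_{T^\perp}\vY_k\|_{\mathrm{op}}$ telescopes below $1/2$ while $\|\boldsymbol{\Lambda}\|_F$ stays $O(1)$ uniformly in $n$ across $K\asymp\log n$ stages. Ensuring compatibility of all these bounds within the regime $n \gtrsim L$ is the delicate point that ultimately enforces the hypothesis.
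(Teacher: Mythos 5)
Your proposal is correct in substance, but it assembles the ingredients along a genuinely different route from the paper. Both arguments rest on the same two probabilistic pillars --- the restricted isometry on the tangent space $T$ (the paper's Lemma \ref{lem: RIP_convex}) and a golfing-scheme dual certificate (Lemma \ref{lem: dual}) --- which the paper imports from Bahmani--Romberg essentially as black boxes and you propose to re-derive for the $\nu$-bounded mask class; the divergence is in the deterministic bookkeeping. You split $\vH=\vH_T+\vH_{T^\perp}$ and exploit the duality pairing $\langle\mathcal{A}^*(\boldsymbol{\Lambda}),\vH\rangle=\langle\boldsymbol{\Lambda},\mathcal{A}(\vH)\rangle$ together with the feasibility bound $\|\mathcal{A}(\vH)\|_F\le 2\|\widehat{\vZ}\|_F$, so the factor $\sqrt{n}$ enters only through the deterministic atom estimate $\|\mathcal{A}(\vu\vv^T)\|_F\le\sqrt{n}\,\nu\|\vu\|_2\|\vv\|_2$. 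The paper instead follows Ahmed--Recht--Romberg: it splits $\vH$ along $\mathrm{range}(\mathcal{A}^*)\oplus\mathrm{Null}(\mathcal{A})$, controls the range component via $\|\mathcal{A}^*(\mathcal{A}\mathcal{A}^*)^{-1}\|$, and controls the null-space component by a cone condition derived from the certificate. Since $\mathcal{A}\mathcal{A}^*(\vY)=\frac{1}{L}\vY\vD_{g}^*\vD_{g}$, the paper must lower-bound $\sigma_{\min}(\vD_g)$, and this is precisely where the extra hypothesis $n\ge\max\{16C^2\nu^4,1\}L$ and the failure term $\exp(-n/(4C^2\nu^4))$ originate. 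Your route does not obviously need that hypothesis, which is a real simplification the paper's route does not buy.

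Two calibration points in your assembly need care before the argument closes. First, solving the coupled inequalities for $\|\vH_T\|_F$ and $\|\vH_{T^\perp}\|_*$ requires the certificate precision $\delta:=\|\mathcal{P}_{T}(\vY_K)-\vU\vV^*\|_F$ to satisfy $\sqrt{2}\,\delta B<1/2$, where $B$ is whatever constant you use in $\|\mathcal{A}(\vH_{T^\perp})\|_F\le B\|\vH_{T^\perp}\|_*$. With your crude atom bound $B=\sqrt{n}\,\nu$ this forces $\delta\lesssim 1/(\sqrt{n}\,\nu)$, which is stronger than the precision $\delta\le 1/(4\sqrt{2}\|\mathcal{A}\|)\asymp\sqrt{L/n}$ that the Bahmani--Romberg certificate is stated with; you must either run the golfing iteration a constant factor longer (harmless for the sample complexity) or replace the atom bound by $\|\mathcal{A}(\vH_{T^\perp})\|_F\le\|\mathcal{A}\|\,\|\vH_{T^\perp}\|_*$. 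Second, the claim $\|\boldsymbol{\Lambda}\|_F\le C$ independent of $n$ is optimistic: writing $\vY_K=\mathcal{A}^*(\boldsymbol{\Lambda})$ with $\boldsymbol{\Lambda}$ supported on disjoint column batches of size $L/K$, the natural estimate is $\|\boldsymbol{\Lambda}\|_F^2=\sum_k\|\boldsymbol{\Lambda}_k\|_F^2\lesssim K\sum_k 4^{-k}=O(\log n)$, i.e.\ $\|\boldsymbol{\Lambda}\|_F=O(\sqrt{\log n})$. This is harmless here because the target bound carries a full factor of $\sqrt{n}$ and because $L\gtrsim\log^2 n$, but it should be proved rather than asserted.
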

\begin{proof}
The proof is postponed in Section \ref{sec: stable}.
\end{proof}
\begin{remark}
Based on Corollary 1 in \cite{ARR14}, which follows a similar line of reasoning as the latter part of Theorem 1.2 in \cite{phaselift}, we can obtain estimates for $\widehat{\vh}$
  and $\vx$ through the singular value decomposition of $\vX^{\#}$. More concretely, Let $\sigma^{\#} \vu^{\#} (\vv^{\#})^*$ denote the best rank-1 approximation to $\vX^{\#}$, and define $\widehat{\vh}^{\#} = \sqrt{\sigma^{\#}} \vu^{\#}$ and $\vx^{\#} = \sqrt{\sigma^{\#}} \vv^{\#}$. Then, we have the following error bounds:
\[
\|\widehat{\vh}^{\#}-\alpha\widehat{\vh}\|_2\lesssim \min(\sqrt{n}\|\widehat{\vZ}\|_F,\|\widehat{\vh}\|_2),\qquad\text{and}\qquad \|{\vx}^{\#}-\alpha^{-1}{\vx}\|_2\lesssim \min(\sqrt{n}\|\widehat{\vZ}\|_F,\|{\vx}\|_2),
\]
for some non-zero scalar $\alpha$. 
\end{remark}
\begin{remark}
The constrained least squares model in (\ref{eqn: noise1}) is closely related to problems in areas like phase retrieval \cite{HuangXu, XiaXu}. In particular, consider the following nuclear norm minimization model:
\begin{equation}\label{eqn: nuclear_norm min}
\min_{\boldsymbol{X}\in \mathbb{C}^{n\times n}}\ \|\boldsymbol{X}\|_*\qquad \mathrm{s.t.}\ \|\mathcal{A}(\boldsymbol{X})-{\widehat{\vY}}\|_F\leq \|\widehat{\vZ}\|_F.
\end{equation}
In this context, the constrained least squares model  (\ref{eqn: noise1}) can be viewed as the dual model of the nuclear norm minimization model (\ref{eqn: nuclear_norm min}).

Efficient numerical methods are available for solving the nuclear norm minimization problem. For small-scale problems, semidefinite programming (SDP) solvers can be used. For larger-scale cases, matrix factorization-based algorithms are a more scalable approach to solving the semidefinite program \cite{BM03, SDP_converge}. Building on these factorization techniques, the scaled gradient method has been shown to achieve rapid and reliable convergence in matrix recovery problems \cite{Tong21}. These methods thus provide effective tools for solving (\ref{eqn: noise1}).
\end{remark}

\begin{remark}
The assumption that $\|\widehat{\boldsymbol{h}}\|_{2} = \|\boldsymbol{x}\|_{2} = 1$ is made for the sake of convenience in the proof, as noted in \cite{Romberg_RandomMask}. It is important to clarify that this is not a fundamental assumption of our results. In \cite{Romberg_RandomMask}, the authors do not address the recovery error in the presence of noise, whereas our analysis extends to include this consideration. Furthermore, our error bound does not impose additional conditions on the noise level, contrasting with the findings in \cite{BD_LS}. 

Additionally, sampling complexity can be improved using more refined techniques, as  in \cite{Gross}, which discusses reducing the number of measurements in the phase retrieval problem through coded diffraction patterns. However, our focus in Theorem \ref{eqn: noise_est} is primarily on presenting the noise bound. The exploration of improvements to sampling complexity is left for the interested reader to pursue further.
\end{remark}

The following theorem establishes a lower bound on the reconstruction error, demonstrating the optimality of our previously derived error bound in Theorem \ref{thm: stable}. This result provides a complementary perspective by showing that, under certain conditions, there exists a noise matrix that leads to a reconstruction error of a similar order  as the upper bound. This theoretical finding underscores the tightness of our error analysis and reinforces the robustness of model (\ref{eqn: noise1}) in handling noisy measurements.

\begin{thm}\label{thm: lower_bound}
Consider a sequence of independent diagonal matrices $\boldsymbol{D}_l = \mathrm{diag}(\vd_l)$, $l = 1, \dots, L$, where the diagonal entries are independent copies of a random variable $g \in \mathbb{C}$ with parameter $\nu$ as defined in Definition \ref{def: g}.  Let $\widehat{\boldsymbol{h}},\boldsymbol{x}\in \mathbb{C}^n$ be fixed vectors such that  $\|\widehat{\boldsymbol{h}}\|_2=1$ and  $\|\boldsymbol{x}\|_2 = 1$ excluding a set of measure zero. The observations are corrupted by a noise term $\widehat{\vZ} \in \mathbb{C}^{n \times L}$, so that the observed matrix is given by:
$\widehat{\vY} = \mathcal{A}(\widehat{\boldsymbol{h}} \boldsymbol{x}^T) + \widehat{\vZ},$ as in (\ref{eqn: Y_hat}).
Assume that $L = \Theta \left( C_{\nu} \mu \log^2{n} \log\left(\frac{n}{\mu}\right) \log\log{n} \right)$ and $n \geq 32 \nu^6 L \log n$. Then, with probability at least $1 - \frac{4}{n}$, there exists a noise matrix $\widehat{\boldsymbol{Z}}$ such that the corresponding optimal solution $\boldsymbol{X}^{\#}$
  to the optimization problem (\ref{eqn: noise1}) with $R = \|\widehat{\boldsymbol{h}} \boldsymbol{x}^T\|_*$ satisfies the following lower bound: 
  \begin{equation}
 \label{eqn: error_lower}
 \|\boldsymbol{X}^{\#}-\widehat{\boldsymbol{h}}\boldsymbol{x}^{T}\|_F\gtrsim {\frac{\sqrt{n}}{\widetilde{C}_{\nu}{\mu}^{1/2}\log^3 n}}\|\widehat{\boldsymbol{\vZ}}\|_F,
 \end{equation}
where $C_{\nu}$ and $\widetilde{C}_{\nu}$ are positive constants depending on the parameter $\nu$, and $\mu = {\|\widehat{\boldsymbol{h}}\|_\infty^2}/{\|{\boldsymbol{h}}\|_2^2}$.
\end{thm}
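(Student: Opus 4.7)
My plan is to construct an adversarial noise $\widehat{\vZ}$ that forces the recovery error to nearly match the upper bound of Theorem \ref{thm: stable}. Starting from any matrix $\vX_1\in\mathbb{C}^{n\times n}$ that is feasible for \eqref{eqn: noise1} (that is, $\|\vX_1\|_*\leq R=\|\widehat{\vh}\vx^T\|_*=1$), I set $\widehat{\vZ}:=\mathcal{A}(\vX_1-\widehat{\vh}\vx^T)$, so that $\widehat{\vY}=\mathcal{A}(\widehat{\vh}\vx^T)+\widehat{\vZ}=\mathcal{A}(\vX_1)$ and $\vX_1$ realizes residual zero in \eqref{eqn: noise1}; it is therefore an optimum, and choosing $\vX^{\#}:=\vX_1$ yields $\|\vX^{\#}-\widehat{\vh}\vx^T\|_F=\|\vM\|_F$ and $\|\widehat{\vZ}\|_F=\|\mathcal{A}(\vM)\|_F$ with $\vM:=\vX_1-\widehat{\vh}\vx^T$. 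The lower bound \eqref{eqn: error_lower} thus reduces to exhibiting an $\vM$ in the descent cone $\mathcal{D}$ of $\|\cdot\|_*$ at $\widehat{\vh}\vx^T$ for which
\[
\frac{\|\vM\|_F}{\|\mathcal{A}(\vM)\|_F}\ \gtrsim\ \frac{\sqrt{n}}{\widetilde{C}_\nu\,\mu^{1/2}\,\log^3 n}.
\]

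\textbf{Constructing the near-null direction.} In the regime $n\geq 32\nu^6 L\log n$, the operator $\mathcal{A}:\mathbb{C}^{n\times n}\to\mathbb{C}^{n\times L}$ maps into a space of dimension $nL\ll n^2$, so its kernel has codimension at most $nL$. The strategy is to produce $\vM$ that is close to $\ker\mathcal{A}$ but still inside $\mathcal{D}$. Decomposing $\vM=\vM_T+\vM_{T^\perp}$ with respect to the tangent space $T=\{\widehat{\vh}\vv^T+\vu\vx^T:\vu,\vv\in\mathbb{C}^n\}$ at $\widehat{\vh}\vx^T$, the descent-cone condition forces $\|\vM_{T^\perp}\|_*\leq \|\vM_T\|_*$. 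Letting $j^\star$ be an index achieving $|\widehat{h}_{j^\star}|^2=\|\widehat{\vh}\|_\infty^2=\mu/n$, I would pick $\vM_T$ to be a low-Frobenius-norm rank-one or rank-two element of $T$ concentrated near row $j^\star$ (so that $\|\vM_T\|_*\lesssim \sqrt{\mu/n}$), and then take $\vM_{T^\perp}$ as the minimum-nuclear-norm solution of $\mathcal{A}(\vM_{T^\perp})=-\mathcal{A}(\vM_T)$ within the budget $\|\vM_{T^\perp}\|_*\leq\|\vM_T\|_*$. This cancellation produces an $\vM$ with $\|\mathcal{A}(\vM)\|_F$ far smaller than $\|\vM\|_F$; the coherence factor $\mu^{1/2}$ enters through the normalization $|\widehat{h}_{j^\star}|=\sqrt{\mu/n}$, which controls how cheaply one can move on and off the nuclear-norm sphere near $\widehat{\vh}\vx^T$.

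\textbf{Concentration and the main obstacle.} The quantitative bound on $\|\mathcal{A}(\vM)\|_F=\bigl(L^{-1}\sum_{l=1}^L\|(\vF\odot\vM)\vd_l\|_2^2\bigr)^{1/2}$ would be obtained from a matrix Bernstein--type concentration for the bounded random matrices $\vd_l\vd_l^{*}$, where the uniform bound $|g|\leq\nu$ from Definition \ref{def: g} controls the moments; the assumption $n\geq 32\nu^6 L\log n$ provides the gap between $nL$ and $n^2$ that the near-null direction needs in order to fit inside $\mathcal{D}$, while a $\delta$-net argument over candidate tangent directions in $T$, together with a chaining estimate, contributes the $\log^3 n$ factor. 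The main obstacle is the very tension that underlies the sharpness of Theorem \ref{thm: stable}: by the noiseless uniqueness implicit in that theorem, $\ker\mathcal{A}\cap\mathcal{D}=\{\vO\}$ with high probability, so $\vM$ cannot lie exactly in the kernel, and the construction must push $\vM$ as close as possible to $\ker\mathcal{A}$ without leaving $\mathcal{D}$. Establishing that this ``distance'' is at most $\widetilde{C}_\nu\mu^{1/2}\log^3 n/\sqrt{n}$---the converse of the Gaussian-width-style estimate implicit in the upper bound---is the crux of the proof.
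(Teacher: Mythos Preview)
Your reduction is correct and matches the paper's Step~1: set $\widehat{\vZ}=\mathcal{A}(\vM)$ for an $\vM$ in the (closure of the) descent cone of $\|\cdot\|_*$ at $\widehat{\vh}\vx^T$, and bound $\|\vM\|_F/\|\mathcal{A}(\vM)\|_F$ from below. The paper uses exactly the characterization $-\mathrm{Re}\langle\widehat{\vh}\vx^T,\vM\rangle\geq\|\mathcal{P}_{T^\perp}(\vM)\|_*$ for the closed cone, then perturbs by continuity to land in the open cone.

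Where your proposal has a genuine gap is the construction itself. The paper does \emph{not} solve a minimum-nuclear-norm problem in $T^\perp$, and there is no $\delta$-net or chaining; it writes down an explicit null-space element. From $\mathcal{A}(\vX)=L^{-1/2}(\vF\odot\vX)\vD_g$ one sees that any rank-one matrix $\vu\vw^T$ with $\vw^T\vD_g=\vO$ lies exactly in $\ker\mathcal{A}$. Since the conjugated masks span only the $L$-dimensional subspace $\mathcal{D}=\mathrm{span}\{\overline{\vd_1},\dots,\overline{\vd_L}\}$, for almost every $\vx$ the vector $\vx_\mathcal{D}^\perp:=\vx-\mathcal{P}_\mathcal{D}(\vx)$ is nonzero and satisfies $(\vx_\mathcal{D}^\perp)^T\vD_g=\vO$. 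The paper takes
\[
\vW:=-\frac{\widehat{h}_1}{|\widehat{h}_1|\,\|\vx_\mathcal{D}^\perp\|_2}\,\ve_1(\vx_\mathcal{D}^\perp)^T,
\]
(any index with $\widehat{h}_j\neq 0$ works; the maximizing index $j^\star$ plays no role), so that $\mathcal{A}(\vW)=\vO$ and $\|\vW\|_F=1$. The key point is that $\vW$ is \emph{already} almost tangent: a direct computation gives $\|\mathcal{P}_{T^\perp}(\vW)\|_*\leq\|\mathcal{P}_\mathcal{D}(\vx)\|_2$, and elementary scalar concentration (Lemmas~\ref{lem: bernstein} and~\ref{lem: hoeffding}) yields $\|\mathcal{P}_\mathcal{D}(\vx)\|_2\lesssim\nu\sqrt{(L\log n)/n}$. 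Setting $\beta\asymp\nu\sqrt{(L\log n)/n}$ and $\vX_0:=-\beta\,\widehat{\vh}\vx^T+\vW$, one has $\vX_0$ in the closed descent cone, $\|\vX_0\|_F\geq 1/2$, and $\|\mathcal{A}(\vX_0)\|_F=\beta\|\mathcal{A}(\widehat{\vh}\vx^T)\|_F\lesssim\beta$. This gives the ratio $\sqrt{n}/\sqrt{L\log n}$, and substituting $L=\Theta(C_\nu\mu\log^2 n\log(n/\mu)\log\log n)$ produces the $\sqrt{n}/(\widetilde{C}_\nu\mu^{1/2}\log^3 n)$ in \eqref{eqn: error_lower}.

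Your scheme of choosing $\vM_T$ first and then cancelling $\mathcal{A}(\vM_T)$ by a $T^\perp$-correction runs into exactly the obstacle you identify: if the cancellation is exact you are in $\ker\mathcal{A}\cap\overline{\mathcal{D}}$, and in any case you have no a~priori bound on the nuclear norm of the ``minimum-nuclear-norm solution'' in $T^\perp$, so the budget $\|\vM_{T^\perp}\|_*\leq\|\vM_T\|_*$ is unjustified. The paper inverts the logic: the null direction $\vW$ is itself almost in $T$, so only a small $T$-correction $-\beta\widehat{\vh}\vx^T$ is needed to enter the cone, and the residual $\mathcal{A}(\vX_0)$ comes solely from that small correction. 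The whole argument is constructive and uses only scalar Bernstein/Hoeffding bounds---no nets, no chaining.
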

\begin{proof}
The proof is postponed in Section \ref{sec: Thm2}.
\end{proof}
\begin{remark}
If $\widehat{\vh}$ is drawn from the uniform distribution on the complex unit sphere in $\mathbb{C}^n$, then the parameter $\mu$ can be upper bounded by $\log^2 n$ with high probability. In this scenario, although the error bound presented in (\ref{eqn: noise_est}) appear relatively large due to its dependence on the signal dimension $n$, Theorem \ref{thm: lower_bound} rigorously establishes that this bound is, in fact, tight up to a logarithmic factor.
\end{remark}
\subsection{Sparse Signal Case} 

The following results address the scenario where the signal vector $\boldsymbol{x}$ exhibits sparsity. In this context, we develop independent schemes for the robust recovery of both $\boldsymbol{h}$ and $\boldsymbol{x}$, leveraging the sparse structure of $\boldsymbol{x}$ to enhance reconstruction performance. Theorem \ref{thm: h_recovery} focuses specifically on the recovery of $\boldsymbol{h}$, providing recovery guarantees in the presence of noise. 

\begin{thm}\label{thm: h_recovery}
Consider a sequence of independent diagonal matrices $\boldsymbol{D}_l = \mathrm{diag}(\vd_l)$, $l = 1, \dots, L$, where the diagonal entries are independent copies of a random variable $g \in \mathbb{C}$ with parameter $\nu$ as defined in Definition \ref{def: g}.  For any fixed nonzero ${\boldsymbol{h}}$ and $\boldsymbol{x}$, assume that $\|\boldsymbol{x}\|_{2}=1$ and $\boldsymbol{x}$ is  $K$-sparse. The observations are given by ${\boldsymbol{y}}_l=\boldsymbol{h}\circledast\left(\boldsymbol{d}_{l}\odot\boldsymbol{x}\right)+\vz_l$, $l=1,\ldots, L$, as in (\ref{noise_time_domain}), where the noise matrix $\boldsymbol{Z}$ in (\ref{eqn: z_term}) satisfies $\|\vZ\|_F\leq C\|\vx\|_2\|\vh\|_2$ for some absolute constant $C > 0$. Define the matrix $\boldsymbol{H}$ as
\begin{equation}\label{eqn: H}
\boldsymbol{H}=\frac{1}{{{L}}}\sum_{l=1}^L\overline{\boldsymbol{D}_l}\check{\vC}_{{\boldsymbol{y}}_l},
\end{equation}
where $\check{\vC}_{{\boldsymbol{y}}_l}$ is defined as in (\ref{eqn: check_C}).
 Let $\boldsymbol{h}_j \in \mathbb{C}^n$ denote the $j$-th row of $\boldsymbol{H}$ for $j = 1, \dots, n$. Take $j^{\#}: = \operatorname{argmax}_{j} \|\boldsymbol{h}_j\|_2$. Then, with probability at least $1 - 1/n$, we have
\[
\mathrm{dist}\left(\frac{\vh_{j^{\#}}}{\|\vh_{j^{\#}}\|_2},\frac{\boldsymbol{h}}{\|\boldsymbol{h}\|_2}\right)\leq \widetilde{\epsilon}+ C\widetilde{\epsilon}_1,
\]
provided that 
\begin{equation}
\label{eqn: lower_L}
L\gtrsim \frac{\sqrt{1+\mu_h K}\nu^2}{\|\boldsymbol{x}\|_\infty^2\min\{\widetilde{\epsilon}^2,\widetilde{\epsilon}_1^2\}}\log n,
\end{equation}
where $\widetilde{\epsilon}$ and $\widetilde{\epsilon}_1$ are positive constants satisfying $\widetilde{\epsilon}<4$ and $C\widetilde{\epsilon}_1<2$. Here, the distance function $\mathrm{dist}(\boldsymbol{x}, \boldsymbol{y})$ for $\boldsymbol{x}, \boldsymbol{y} \in \mathbb{C}^{n}$ and the mutual coherence parameter $\mu_h$ are defined as follows:
 \begin{equation}
 \label{eqn: dist_mu}
 \mathrm{dist}(\boldsymbol{x},\boldsymbol{y})=\min_{\theta\in [0,2\pi)}\|\boldsymbol{x}-e^{\mathrm{i}\theta}\boldsymbol{y}\|_2\quad \text{and}\quad \mu_{h}=\max_{i,j\in \mathrm{supp}(\boldsymbol{x}), i\neq j}\frac{|\langle s_{i}(\boldsymbol{h}),s_{j}(\boldsymbol{h})\rangle|}{\|\boldsymbol{h}\|_2^2}
 \end{equation}
 with $s_\tau(\cdot)$ as in (\ref{eqn: s_tau}) for any $\tau\in\{1,\dots,n\}$. 
\end{thm}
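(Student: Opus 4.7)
The central observation behind the estimator is that the matrix $\vH$ in \eqref{eqn: H} has expectation $\mathbb{E}[\vH] = \vx\vh^T$. To verify this I would expand $(\vH)_{i,j} = \frac{1}{L}\sum_{l=1}^L \overline{d_{l,i}} y_{l,l(i+j-1)}$, substitute $y_{l,p} = \sum_k h_{l(p-k+1)}d_{l,k}x_k + z_{l,p}$, and apply $\mathbb{E}[\overline{d_{l,i}}d_{l,k}] = \delta_{ik}$, which follows from Definition \ref{def: g}. Consequently, the $j$-th row $\vh_j$ has $\mathbb{E}[\vh_j] = x_j\vh$, so the entire argument reduces to establishing a uniform-in-$j$ high-probability bound $\|\vh_j - x_j\vh\|_2 \leq \delta$ for some $\delta \lesssim (\widetilde{\epsilon}+C\widetilde{\epsilon}_1)\|\vx\|_\infty\|\vh\|_2$.

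For each fixed $j$ the plan is to split $\vh_j - x_j\vh$ into three independent-in-$l$ centered sums: a diagonal fluctuation $\frac{x_j}{L}\sum_l(|d_{l,j}|^2-1)\vh$, an off-diagonal cross-correlation $\frac{1}{L}\sum_l\sum_{m\in\supp(\vx)\setminus\{j\}}\overline{d_{l,j}}d_{l,m}x_m\, s_{m-j}(\vh)$, and a noise-coupled piece $\frac{1}{L}\sum_l\overline{d_{l,j}}\widetilde{\vz}_l$ where $\widetilde{\vz}_l$ is the appropriate cyclic shift of $\vz_l$. Each piece is an i.i.d.\ mean-zero sum of $L$ random vectors, so vector Bernstein applies directly. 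The delicate bookkeeping concerns the cross-correlation piece: using $|d_{l,m}|\leq\nu$, the sparsity bound $\|\vx\|_1^2 \leq K\|\vx\|_2^2$, and the autocorrelation estimate $|\inner{s_{m_1-j}(\vh), s_{m_2-j}(\vh)}|\leq \mu_h\|\vh\|_2^2$ from \eqref{eqn: dist_mu} for distinct $m_1,m_2\in\supp(\vx)$, I expect a per-term norm bound $\lesssim \nu^3\|\vh\|_2\sqrt{1+\mu_h K}$ and a variance bound $\lesssim L\|\vh\|_2^2$. Plugging these into Bernstein and union-bounding over $j\in\{1,\ldots,n\}$ produces $\|\vh_j - x_j\vh\|_2 \leq \delta$ simultaneously for all $j$ at the cost of a $\log n$ factor; the noise-coupled piece is handled analogously, using the hypothesis $\|\vZ\|_F\leq C\|\vh\|_2$ to control its variance and per-term norm.

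Once the uniform row bound holds, the maximizer $j^\#$ of $\|\vh_j\|_2$ must satisfy $|x_{j^\#}|\geq \|\vx\|_\infty - 2\delta/\|\vh\|_2 \gtrsim \|\vx\|_\infty$, so in particular $j^\#\in\supp(\vx)$, and $\vh_{j^\#}$ is a small perturbation of the scaled signal $x_{j^\#}\vh$. A short geometric computation then converts this to $\mathrm{dist}(\vh_{j^\#}/\|\vh_{j^\#}\|_2,\vh/\|\vh\|_2)\lesssim \delta/(|x_{j^\#}|\|\vh\|_2)\lesssim \delta/(\|\vx\|_\infty\|\vh\|_2) \leq \widetilde{\epsilon}+C\widetilde{\epsilon}_1$, precisely accounting for the $\|\vx\|_\infty^2$ in the denominator of the stated sample complexity. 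The main obstacle I anticipate is the variance/uniform trade-off for the cross-correlation piece: extracting the factor $\sqrt{1+\mu_h K}$ rather than a crude $K\nu^2$ bound requires genuinely exploiting the autocorrelation estimate $\mu_h$ rather than bounding entries term-by-term, and it is this step that couples the separation condition on $\supp(\vx)$ with the sample complexity through the product $\mu_h K$.
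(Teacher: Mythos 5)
Your proposal is correct and follows essentially the same route as the paper: the identical three-way decomposition of $\vh_j - x_j\vh$ into the diagonal fluctuation, the cross-correlation sum controlled via $\mu_h$ and $\|\vx\|_1^2\le K\|\vx\|_2^2$, and the noise-coupled term, each handled by Bernstein's inequality with a union bound over $j$, followed by the same maximizer-and-normalization argument. The only (immaterial) difference is that your per-term bound $\nu^3\|\vh\|_2\sqrt{1+\mu_hK}$ for the cross-correlation piece is slightly looser than the paper's $\nu^2\sqrt{1+\mu_hK}\|\vh\|_2/L$ per summand, which is what yields the $\nu^2$ in the stated sample complexity.
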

\begin{proof}
The proof is postponed in Section \ref{sec: Thm3}.
\end{proof}
\begin{remark}
Let us define the linear operator $\widetilde{\mathcal{A}}_l(\cdot): \mathbb{C}^{n \times n} \rightarrow \mathbb{C}^n$ as follows:
\[
\widetilde{\mathcal{A}}_l(\boldsymbol{x}\boldsymbol{h}^{T}) := \boldsymbol{h} \circledast (\vd_l \odot \boldsymbol{x}), \qquad l = 1, \ldots, L.
\]
 Take $ \widetilde{\mathcal{A}}_l^*$ as the dual operator for $ \widetilde{\mathcal{A}}_l$. The motivation for constructing $\vH$ is given by the expression $\vH = \frac{1}{L} \sum_{l=1}^L \widetilde{\mathcal{A}}_l^*(\vy_l)$ and  $\mathbb{E}\vH = \vx \vh^T$, particularly when $\|\vZ\|_F = 0$. However, rather than employing the spectral method as outlined in \cite{BD_randomSign,Candes15}, we adopt a more straightforward approach to retrieve $\vh$.
\end{remark}

\begin{remark}
Due to the inherent scaling ambiguity in self-calibration, that is, for any non-zero $\alpha$, we have $\vy_l = (\alpha \vh) \circledast \left( \vd_l \odot \left( \frac{1}{\alpha} \vx \right) \right) + \vz_l$ for $l = 1, \ldots, L$, it is reasonable to impose the constraint that the $\ell_2$ norm of $\vx$ is equal to 1 to remove scaling ambiguity, while leaving the norm of $\vh$ unconstrained. Moreover, the condition on the noise level, such that $\|\vZ\|_F$ is bounded by $\|\vx\|_2 \|\vh\|_2$, implies that the noise should be regarded as adversarial noise that aligns with the signals $\vx$ and $\vh$. In the noiseless scenario, employing analogous technical tools allows us to derive the following inequality:
\[
\mathrm{dist}\left(\frac{\vh_{j^{\#}}}{\|\vh_{j^{\#}}\|_2},\frac{\boldsymbol{h}}{\|\boldsymbol{h}\|_2}\right)\leq \widetilde{\epsilon},
\]
provided that 
\[
L\gtrsim \frac{\sqrt{1+\mu_h K}\nu^2}{\|\boldsymbol{x}\|_\infty^2\widetilde{\epsilon}^2}\log n.
\]
Thus, we observe that the parameters $\widetilde{\epsilon}$ and $\widetilde{\epsilon}_1$ are utilized to tune the error bounds for exact recovery and noise corruption, respectively. In the presence of noise, when $C < 2$, setting $\widetilde{\epsilon} = C$ and $\widetilde{\epsilon}_1 = 1$ allows us to simplify the error bound to: 
\[
\mathrm{dist}\left(\frac{\vh_{j^{\#}}}{\|\vh_{j^{\#}}\|_2},\frac{\boldsymbol{h}}{\|\boldsymbol{h}\|_2}\right)\leq 2C,
\] given that $L\gtrsim \frac{\sqrt{1+\mu_h K}\nu^2}{\|\boldsymbol{x}\|_\infty^2}\log n$. 
\end{remark}

When $\vh$ is short-supported and the elements in $\vx$ is separated sufficiently,  we can directly get that $\mu_h=0$ in (\ref{eqn: lower_L}), and the sampling complexity $L$ greatly reduced to $L\gtrsim \log n$. More details can be seen in Corollary \ref{coro: h_optimal}. The investigation of improved sampling complexity for general sparse signal $\vx$, without the short support constraint of $\vh$, is left as a topic for future research.

\begin{coro}\label{coro: h_optimal}
Assume that $\|\vx\|_\infty\geq c_0$  for some absolute positive constant $c_0$ and any two non-zero components of $\boldsymbol{x}$ are at least $m$ entries apart. Furthermore, consider $\vh\in \mathbb{C}^{n}$ with $\mathrm{supp}(\boldsymbol{h}) \subseteq \{1, \ldots, m\}$. Under these conditions, the coherence parameter $\mu_h=0$. Subsequently, let $\vH$ be defined as in (\ref{eqn: H}) of Theorem \ref{thm: h_recovery}. Denote by $\boldsymbol{h}_j \in \mathbb{C}^n$  the $j$-th row of $\boldsymbol{H}$ for $j = 1, \dots, n$, and let $j^{\#}: = \operatorname{argmax}_{j} \|\boldsymbol{h}_j\|_2$. With probability at least $1 - 1/n$, we have\[
\mathrm{dist}\left(\frac{\vh_{j^{\#}}}{\|\vh_{j^{\#}}\|_2},\frac{\boldsymbol{h}}{\|\boldsymbol{h}\|_2}\right)\leq \widetilde{\epsilon}+ C\widetilde{\epsilon}_1,
\]
provided that 
\begin{equation}\label{eqn: new_L}
L\gtrsim \frac{\nu^2}{\min\{\widetilde{\epsilon}^2,\widetilde{\epsilon}_1^2\}}\log n,
\end{equation}
where $\widetilde{\epsilon}$ and $\widetilde{\epsilon}_1$ are the same positive constants as in Theorem \ref{thm: h_recovery}.  \end{coro}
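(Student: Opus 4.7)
The plan is to derive this corollary as a direct specialization of Theorem \ref{thm: h_recovery}: the structural hypotheses on $\vh$ and $\vx$ collapse the generic mutual coherence $\mu_h$ to zero and convert the $\|\vx\|_\infty^{-2}$ factor in (\ref{eqn: lower_L}) into an absolute constant, after which the conclusion is immediate from the theorem.

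First I would show that $\mu_h=0$. By the definition (\ref{eqn: s_tau}) of the cyclic shift, the $k$-th entry of $s_\tau(\vh)$ is $h_{l(k-\tau)}$, so $s_\tau(\vh)$ is supported on the window $\{\tau+1,\ldots,\tau+m\}\ (\mathrm{mod}\ n)$ whenever $\mathrm{supp}(\vh)\subseteq\{1,\ldots,m\}$. For any two distinct indices $i,j\in\mathrm{supp}(\vx)$, the separation hypothesis forces the two cyclic windows $\{i+1,\ldots,i+m\}\ (\mathrm{mod}\ n)$ and $\{j+1,\ldots,j+m\}\ (\mathrm{mod}\ n)$ to be disjoint, so $\langle s_i(\vh),s_j(\vh)\rangle=0$. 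The maximum in (\ref{eqn: dist_mu}) therefore vanishes, giving $\mu_h=0$.

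Next I would substitute $\mu_h=0$ into (\ref{eqn: lower_L}), which reduces the sampling requirement of Theorem \ref{thm: h_recovery} to
\[
L\gtrsim \frac{\nu^2}{\|\vx\|_\infty^2\min\{\widetilde{\epsilon}^2,\widetilde{\epsilon}_1^2\}}\log n.
\]
The hypothesis $\|\vx\|_\infty\geq c_0$ lets me absorb the factor $\|\vx\|_\infty^{-2}\leq c_0^{-2}$ into the implicit constant, producing exactly (\ref{eqn: new_L}). Since all remaining hypotheses of Theorem \ref{thm: h_recovery} (namely $\|\vx\|_2=1$, $K$-sparsity, the noise bound, and the construction of $\vH$ and $j^\#$) are inherited verbatim, its conclusion yields the desired distance estimate with probability at least $1-1/n$.

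No real obstacle arises: the corollary is essentially a bookkeeping step once the disjoint-support argument is in place. The only subtle point worth being careful about is the cyclic interpretation of the separation condition — one should verify that ``at least $m$ entries apart'' is meant in the cyclic sense, so that wrap-around pairs of non-zero entries of $\vx$ also yield disjoint shifted supports of $\vh$; otherwise a small edge case at the boundary of the index set would have to be handled separately.
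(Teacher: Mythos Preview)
Your proposal is correct and matches the paper's approach: the paper states this corollary without a formal proof, noting only in the preceding paragraph that the short-support and separation hypotheses give $\mu_h=0$ directly, after which the sampling bound in (\ref{eqn: lower_L}) reduces to (\ref{eqn: new_L}). Your disjoint-support argument and the absorption of $\|\vx\|_\infty^{-2}$ via the $c_0$ assumption are exactly the details the paper leaves implicit, and your caution about the cyclic interpretation of the separation condition is appropriate.
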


Having established a robust method for estimating $\boldsymbol{h}$, we now turn our attention to the estimation of the sparse signal $\boldsymbol{x}$.  If there exists fixed vector ${\vh_0} \in \mathbb{S}^{n-1}$ such that  $\mathrm{dist}(\vh_0, \boldsymbol{h} / \|\boldsymbol{h}\|_2) < \epsilon$, we can employ the LASSO model to reconstruct $\boldsymbol{x}$. This approach is formulated as  below:
\begin{equation}\label{eqn: model01}
\min_{\widetilde{\boldsymbol{x}}\in \mathbb{C}^n}\ \frac{1}{2L}\sum_{l=1}^{L}\|\vh_0\circledast(\vd_l\odot \widetilde{\vx})-\vy_l\|_2^2+\lambda\|\widetilde{\boldsymbol{x}}\|_1,
\end{equation}
which  balances the fidelity to observed data with the promotion of sparsity in the recovered signal. Theorem \ref{thm: x_estimation} provides rigorous performance guarantees for the accuracy of the reconstructed sparse signal under the model  (\ref{eqn: model01}).

\begin{thm}\label{thm: x_estimation}
Consider a sequence of independent diagonal matrices $\boldsymbol{D}_l = \mathrm{diag}(\vd_l)$, $l = 1, \dots, L$, where the diagonal entries are independent copies of a random variable $g \in \mathbb{C}$ with parameter $\nu$ as defined in Definition \ref{def: g}.  For any fixed nonzero ${\boldsymbol{h}}$ and $\boldsymbol{x}$, assume that $\|\boldsymbol{x}\|_{2}=1$ and $\boldsymbol{x}$ is  $K$-sparse. The observations are given by ${\boldsymbol{y}}_l=\boldsymbol{h}\circledast\left(\boldsymbol{d}_{l}\odot\boldsymbol{x}\right)+\vz_l$, $l=1,\ldots, L$, as in (\ref{noise_time_domain}),  where the noise matrix $\boldsymbol{Z}$ in (\ref{eqn: z_term})  satisfies the bound $\|\vZ\|_F\leq C\|\vx\|_2\|\vh\|_2$ for some absolute constant $C > 0$.  Let ${\vh_0} \in \mathbb{S}^{n-1}$ be a fixed vector satisfying $\mathrm{dist}(\vh_0, \boldsymbol{h} / \|\boldsymbol{h}\|_2) < \epsilon$. Let $\theta_0$ denote the optimal phase satisfying $\mathrm{dist}\left(\vh_{0},\boldsymbol{h}/\|\boldsymbol{h}\|_2\right)=\left\|e^{\mathrm{i}\theta_{0}}\vh_{0}-\boldsymbol{h}/\|\boldsymbol{h}\|_2\right\|_2.$ Then, the solution $\boldsymbol{x}^{\#}$  to the LASSO problem (\ref{eqn: model01}) satisfies the error bound:
\begin{equation}
\label{eqn: lasso_bound}
\left\|{\boldsymbol{x}}^{\#}-e^{\mathrm{i} \theta_0} \|\vh\|_2 \vx\right\|_2\leq \min_{1\leq k\leq K}\left(C_1'\sqrt{k}\lambda+C_2'\frac{\|\boldsymbol{h}\|_2\|\boldsymbol{x}-(\boldsymbol{x})_{[k]}\|_1}{\sqrt{k}}\right),
\end{equation}
with probability at least $1-\frac{2}{n}$, provided that $\lambda \geq 2(2\epsilon+C)\|\vh\|_2\|\vx\|_2$ and $L\gtrsim\mu_0\nu^2\log n$. Here 
$\mu_0=\|\widehat{\boldsymbol{h}_0}\|_{\infty}^{2}/\|{\boldsymbol{h}}_0\|_{2}^{2}={\|\widehat{\boldsymbol{{h}}_0}\|_{\infty}^2}$ denotes the coherence parameter of $\vh_0$, and $C_1'$
  and $C_2'$ are absolute positive constants. Furthermore, $(\boldsymbol{x})_{[k]}$ represents the best $k$-sparse approximation of $\boldsymbol{x}$, formed by retaining the $k$ largest entries of $\boldsymbol{x}$ in magnitude and setting the remaining entries to zero.
\end{thm}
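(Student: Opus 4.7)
The plan is to reduce (\ref{eqn: model01}) to a canonical LASSO problem with a single random measurement operator and an effective observation noise, and then to invoke classical $\ell_1$-regression analysis combined with a restricted eigenvalue estimate. Set $\vh^{\dagger} := e^{\mathrm{i}\theta_{0}}\vh_{0}$, so that $\left\|\vh^{\dagger} - \vh/\|\vh\|_{2}\right\|_{2} < \epsilon$, and $\vh_{e} := \vh - \|\vh\|_{2}\vh^{\dagger}$, which obeys $\|\vh_{e}\|_{2} < \epsilon\|\vh\|_{2}$. Taking the target $\vx^{\star} := e^{\mathrm{i}\theta_{0}}\|\vh\|_{2}\vx$ (which inherits the $K$-sparsity of $\vx$) and the operator $[\mathcal{B}(\vu)]_{l} := \vh_{0}\circledast(\vd_{l}\odot\vu)$, each observation reads $\vy_{l} = [\mathcal{B}(\vx^{\star})]_{l} + \ve_{l}$ with $\ve_{l} := \vh_{e}\circledast(\vd_{l}\odot\vx) + \vz_{l}$, so that (\ref{eqn: model01}) is a standard LASSO with operator $\mathcal{B}$, target $\vx^{\star}$, and effective noise $\ve$.

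The main quantitative step is the verification of the canonical LASSO noise condition $\lambda \geq (2/L)\|\mathcal{B}^{*}\ve\|_{\infty}$. Writing $\ve_{l} = \vC_{\vh_{e}}\vD_{l}\vx + \vz_{l}$, I would decompose
\[
\frac{1}{L}\mathcal{B}^{*}\ve \;=\; \underbrace{\frac{1}{L}\sum_{l=1}^{L}\vD_{l}^{*}\vC_{\vh_{0}}^{*}\vC_{\vh_{e}}\vD_{l}\vx}_{=:\,\vu_{1}} \;+\; \underbrace{\frac{1}{L}\sum_{l=1}^{L}\vD_{l}^{*}\vC_{\vh_{0}}^{*}\vz_{l}}_{=:\,\vu_{2}}.
\]
For $\vu_{2}$, Cauchy--Schwarz over the $l$ index together with $|d_{l,j}|\leq \nu$, the unit-$\ell_{2}$ columns of $\vC_{\vh_{0}}$, and $\|\vZ\|_{F}\leq C\|\vh\|_{2}\|\vx\|_{2}$ yields $\|\vu_{2}\|_{\infty} \leq \nu C\|\vh\|_{2}\|\vx\|_{2}/\sqrt{L} \leq C\|\vh\|_{2}\|\vx\|_{2}$ once $L \gtrsim \nu^{2}$. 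For $\vu_{1}$, a direct calculation gives $\mathbb{E}\vu_{1} = \langle\vh_{0},\vh_{e}\rangle\vx$, hence $\|\mathbb{E}\vu_{1}\|_{\infty} \leq \epsilon\|\vh\|_{2}\|\vx\|_{2}$. For the fluctuation, set $\vM := \vC_{\vh_{0}}^{*}\vC_{\vh_{e}}$ so that $M_{ij} = \langle\vh_{0}, s_{j-i}(\vh_{e})\rangle$; applying Plancherel to the circular cross-correlation of $\vh_{0}$ and $\vh_{e}$ produces the crucial row-norm estimate
\[
\|\vM_{i}\|_{2}^{2} \;=\; \sum_{k}\bigl|\langle\vh_{0}, s_{k}\vh_{e}\rangle\bigr|^{2} \;\leq\; \|\widehat{\vh_{0}}\|_{\infty}^{2}\|\vh_{e}\|_{2}^{2} \;\leq\; \mu_{0}\epsilon^{2}\|\vh\|_{2}^{2}.
\]
Feeding this bound into a scalar Bernstein inequality applied coordinatewise to $\vu_{1}-\mathbb{E}\vu_{1}$, with a union bound over the $n$ coordinates, delivers $\|\vu_{1}-\mathbb{E}\vu_{1}\|_{\infty} \leq \epsilon\|\vh\|_{2}\|\vx\|_{2}$ with probability at least $1-1/n$ as soon as $L \gtrsim \mu_{0}\nu^{2}\log n$. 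Collecting the three estimates gives $(1/L)\|\mathcal{B}^{*}\ve\|_{\infty} \leq (2\epsilon+C)\|\vh\|_{2}\|\vx\|_{2}$, which is precisely the hypothesis on $\lambda$.

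For the restricted-eigenvalue ingredient, since the columns of $\vC_{\vh_{0}}$ are unit-norm cyclic shifts of $\vh_{0}$, $\mathbb{E}[\vD_{l}^{*}\vC_{\vh_{0}}^{*}\vC_{\vh_{0}}\vD_{l}]$ equals the identity, so $\mathbb{E}[(1/L)\mathcal{B}^{*}\mathcal{B}] = \boldsymbol{I}$. A matrix Bernstein concentration in the spirit of Krahmer--Mendelson--Rauhut for partial random convolution / coded diffraction measurements then delivers a restricted isometry constant on $2k$-sparse vectors under $L \gtrsim \mu_{0}\nu^{2}\log n$, for $k$ in the admissible range.

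To conclude, let $\vh^{\#} := \vx^{\#} - \vx^{\star}$. The LASSO optimality inequality combined with the noise control of the second paragraph produces the classical cone condition $\|\vh^{\#}_{S^{c}}\|_{1} \leq 3\|\vh^{\#}_{S}\|_{1} + 4\|\vx^{\star}-(\vx^{\star})_{[k]}\|_{1}$, where $S$ indexes the best $k$-sparse approximation of $\vx^{\star}$. Combining with the restricted-eigenvalue estimate and the standard LASSO error analysis yields $\|\vh^{\#}\|_{2} \leq C_{1}'\sqrt{k}\lambda + C_{2}'\|\vx^{\star}-(\vx^{\star})_{[k]}\|_{1}/\sqrt{k}$, and the identity $\|\vx^{\star}-(\vx^{\star})_{[k]}\|_{1} = \|\vh\|_{2}\|\vx-(\vx)_{[k]}\|_{1}$ followed by a minimum over $k$ recovers (\ref{eqn: lasso_bound}); the two failure events combine via union bound to give probability at least $1-2/n$. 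The hardest part will be the sharp row-norm bound $\|\vM_{i}\|_{2} \leq \sqrt{\mu_{0}}\,\epsilon\|\vh\|_{2}$: without this Plancherel-based improvement over the trivial $\sqrt{n}\,\epsilon\|\vh\|_{2}$, the sampling complexity would inflate by a factor of order $n/\mu_{0}$, destroying the intended scaling.
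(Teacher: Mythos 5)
Your reduction, your decomposition of the effective noise into a model-mismatch term $\vu_1$ and a measurement-noise term $\vu_2$, and your treatment of $\vu_1$ are essentially the paper's own argument (Lemma \ref{lem: noise_upper}): the expectation computation $\mathbb{E}\vu_1=\langle\vh_0,\vh_e\rangle\vx$, the Plancherel-based row-norm bound $\|\vM_i\|_2\leq\sqrt{\mu_0}\,\epsilon\|\vh\|_2$ (the paper's inequality (\ref{eqn: s_j_temp})), and the coordinatewise Bernstein estimate with a union bound over the $n$ coordinates all match, and you correctly identify this as the step where the $\mu_0$ scaling is won. Your deterministic Cauchy--Schwarz bound for $\vu_2$ is a harmless simplification of the paper's Bernstein argument for the same quantity.

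The gap is in the isometry step. You propose a restricted isometry on $2k$-sparse vectors ``in the spirit of Krahmer--Mendelson--Rauhut'' under $L\gtrsim\mu_0\nu^2\log n$, ``for $k$ in the admissible range.'' That machinery is designed for subsampled random convolutions and yields a sparse RIP only when the number of measurements scales linearly in the sparsity (times polylogarithmic factors); it cannot deliver a $k$-independent sampling complexity, and the hedge about an ``admissible range'' would silently restrict the values of $k$ over which the final minimum in (\ref{eqn: lasso_bound}) is taken, which the theorem does not do. The correct and simpler route, which the paper takes in Lemma \ref{lem: RIP}, is to observe that $\frac{1}{L}\mathcal{B}^*\mathcal{B}=\frac{1}{L}\sum_{l}\overline{\vD_l}\vC_{\vh_0}^*\vC_{\vh_0}\vD_l$ is an $n\times n$ matrix with expectation $\boldsymbol{I}$ and with $\|\vC_{\vh_0}^*\vC_{\vh_0}\|=\mu_0$, so a single application of matrix Bernstein gives the \emph{full} two-sided bound $\tfrac{4}{5}\|\vz\|_2^2\leq\tfrac{1}{L}\sum_{l}\|\mathcal{A}_l(\vz)\|_2^2\leq\tfrac{6}{5}\|\vz\|_2^2$ for all $\vz\in\mathbb{C}^n$, not merely sparse ones, under exactly $L\gtrsim\mu_0\nu^2\log n$; this is possible because $\mathcal{B}$ maps into $\mathbb{C}^{nL}$, so a genuine isometry is not obstructed by dimension counting. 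This full isometry is also precisely the hypothesis of the LASSO oracle inequality the paper imports from Lin--Li, so once you replace the Krahmer--Mendelson--Rauhut invocation with this direct Gram-matrix concentration, your cone-plus-restricted-eigenvalue conclusion (or a direct citation of that oracle inequality) goes through and the rest of your argument is sound.
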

\begin{proof}
The proof is postponed in Section \ref{sec: Thm4}.
\end{proof}
\begin{remark}
Based on (\ref{eqn: lasso_bound}),  it is straightforward to observe that $\left\|{\boldsymbol{x}}^{\#}-e^{\mathrm{i} \theta_0} \|\vh\|_2 \vx\right\|_2$ can be upper bounded by 
\[
\left\|{\boldsymbol{x}}^{\#}-e^{\mathrm{i} \theta_0} \|\vh\|_2 \vx\right\|_2 \leq C_1' \sqrt{K} \lambda,
\] since $\vx$ is $K$-sparse. Moreover, the sampling complexity of $O(\log n)$ is substantially more efficient than the $O(K \log n)$ bound established in \cite[Corollary 3]{CompressedDeconvolution} for the recovery of $\vx$.
\end{remark}
Leveraging the results from Theorem \ref{thm: h_recovery}, we can select an appropriate unit-norm vector $\boldsymbol{h}_0$ that closely approximates the normalized true channel $\boldsymbol{h}/\|\boldsymbol{h}\|_2$ such that $\mathrm{dist}\left(\vh_0, \frac{\boldsymbol{h}}{\|\boldsymbol{h}\|_2}\right) < \epsilon,$ where $\epsilon: = \widetilde{\epsilon} + C \widetilde{\epsilon}_1$. However, Theorem \ref{thm: x_estimation} establishes that $\boldsymbol{h}_0$
  must be independent of the measurements.  Consequently, to ensure independent estimation of $\boldsymbol{x}$, we partition the index set $\{1, \ldots, L\}$ into two distinct subsets: $L_1:=\{1,\cdots,\lfloor L/2\rfloor\}$ and $L_2:=\{\lfloor L/2\rfloor +1,\cdots, L\}$. This enables us to derive an explicit construction of  $\vh_0$, which subsequently facilitates robust reconstruction of the sparse signal $\vx$. A detailed exposition of these results is presented in Corollary \ref{coro: x_new}.
  \begin{coro}\label{coro: x_new}
  For any fixed nonzero ${\boldsymbol{h}}$ and $\boldsymbol{x}$, assume that $\|\boldsymbol{x}\|_{2}=1$ and $\boldsymbol{x}$ is  $K$-sparse.  Take the mask matrices $\vD_l$ and the measurements $\vy_l$, $l=1,\ldots,L(L\geq 2)$ with parameters $C$ and $\nu$  the same as those in Theorem \ref{thm: x_estimation}.  Define $\widetilde{\vH}$ as 
\[
\widetilde{\vH}:=\frac{1}{\lfloor L/2\rfloor}\sum_{l\in L_1}\overline{\boldsymbol{D}_l}\check{\vC}_{{\boldsymbol{y}}_l}.
\]
Let $\widetilde{\vh}_{j}$ as the $j$-the row element of $\widetilde{\vH}$, and denote $j_0 := \mathrm{argmax}_j \|\widetilde{\vh}_{j}\|_2$. Take
\begin{equation}
\label{eqn: h_new}
\vh_0:=\widetilde{\vh}_{j_0}/\|\widetilde{\vh}_{j_0}\|_2,
\end{equation}
and 
\begin{equation}\label{eqn: LASSO_new}
\vx_0:=\mathrm{argmin}_{\widetilde{\boldsymbol{x}}\in \mathbb{C}^n}\ \frac{1}{2(L-\lfloor L/2\rfloor)}\sum_{l\in L_2}\|\vh_0\circledast(\vd_l\odot \widetilde{\vx})-\vy_l\|_2^2+\lambda\|\widetilde{\boldsymbol{x}}\|_1.
\end{equation}
Take $\theta_0$ satisfy $\mathrm{dist}\left(\vh_{0},\boldsymbol{h}/\|\boldsymbol{h}\|_2\right)=\left\|e^{\mathrm{i}\theta_{0}}\vh_{0}-\boldsymbol{h}/\|\boldsymbol{h}\|_2\right\|_2$. Then
\begin{equation}
\label{eqn: lasso_bound}
\left\|{\boldsymbol{x}}_0-e^{\mathrm{i} \theta_0} \|\vh\|_2 \vx\right\|_2\leq \min_{1\leq k\leq K}\left(C_1'\sqrt{k}\lambda+C_2'\frac{\|\boldsymbol{h}\|_2\|\boldsymbol{x}-(\boldsymbol{x})_{[k]}\|_1}{\sqrt{k}}\right),
\end{equation}
with probability at least $1-\frac{2}{n}$, provided that $\lambda \geq 2(2\epsilon+C)\|\vh\|_2\|\vx\|_2$ and $L\gtrsim\mu_0\nu^2\log n$. Here 
$\mu_0=\|\widehat{\boldsymbol{h}_0}\|_{\infty}^{2}/\|{\boldsymbol{h}}_0\|_{2}^{2}={\|\widehat{\boldsymbol{{h}}_0}\|_{\infty}^2}$ and  $\mathrm{dist}(\vh_0, \boldsymbol{h} / \|\boldsymbol{h}\|_2) < \epsilon$.  \end{coro}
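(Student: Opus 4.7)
The plan is to observe that Corollary \ref{coro: x_new} is essentially a direct application of Theorem \ref{thm: h_recovery} (in the form of Corollary \ref{coro: h_optimal}, or its direct analogue) followed by Theorem \ref{thm: x_estimation}, with the partition of the index set $\{1,\ldots,L\}$ into $L_1$ and $L_2$ playing the crucial role of guaranteeing probabilistic independence. The main point is that Theorem \ref{thm: x_estimation} requires the reference vector $\vh_0$ to be a \emph{fixed} (i.e., non-random with respect to the LASSO data) unit-norm vector, and the splitting construction in the corollary achieves exactly this.

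First, I would apply Theorem \ref{thm: h_recovery} to the subset of measurements indexed by $L_1$. Since $\widetilde{\vH}=\frac{1}{\lfloor L/2\rfloor}\sum_{l\in L_1}\overline{\vD_l}\check{\vC}_{\vy_l}$ is constructed from precisely $\lfloor L/2\rfloor$ i.i.d.\ diagonal mask matrices with the same distributional assumptions as in Theorem \ref{thm: h_recovery}, and $|L_1| = \lfloor L/2\rfloor \gtrsim \log n$ (which follows from the assumption $L \gtrsim \mu_0 \nu^2 \log n$ when one chooses $\widetilde{\epsilon}$ and $\widetilde{\epsilon}_1$ so that $\widetilde{\epsilon} + C\widetilde{\epsilon}_1 = \epsilon$), Theorem \ref{thm: h_recovery} yields
\[
\mathrm{dist}\!\left(\frac{\widetilde{\vh}_{j_0}}{\|\widetilde{\vh}_{j_0}\|_2}, \frac{\vh}{\|\vh\|_2}\right) \leq \widetilde{\epsilon} + C\widetilde{\epsilon}_1 = \epsilon,
\]
with probability at least $1-1/n$ over the randomness of $\{\vd_l\}_{l\in L_1}$. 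Thus with high probability the constructed vector $\vh_0$ in \eqref{eqn: h_new} satisfies $\mathrm{dist}(\vh_0,\vh/\|\vh\|_2)<\epsilon$.

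Second, I would condition on the randomness $\{\vd_l\}_{l\in L_1}$ that produced $\vh_0$. Because the masks $\vd_l$ for $l\in L_2$ are independent of those for $l\in L_1$, the vector $\vh_0$ is, conditionally on $\{\vd_l\}_{l\in L_1}$, a deterministic unit-norm vector with $\mathrm{dist}(\vh_0,\vh/\|\vh\|_2)<\epsilon$. The LASSO problem \eqref{eqn: LASSO_new} uses only the measurements $\{\vy_l\}_{l\in L_2}$, which involve only $\{\vd_l\}_{l\in L_2}$ and thus exactly fit the setting of Theorem \ref{thm: x_estimation} with $L$ replaced by $L-\lfloor L/2\rfloor = \Theta(L)$. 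The sampling requirement $L-\lfloor L/2\rfloor \gtrsim \mu_0 \nu^2 \log n$ follows from the hypothesis of the corollary, and the noise hypothesis $\|\vZ\|_F \leq C\|\vx\|_2\|\vh\|_2$ is inherited verbatim (restricting the Frobenius norm to the columns indexed by $L_2$ only makes it smaller). Applying Theorem \ref{thm: x_estimation} conditionally then yields the claimed error bound \eqref{eqn: lasso_bound} with probability at least $1-2/n$ over $\{\vd_l\}_{l\in L_2}$.

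Finally, integrating over the conditioning and taking a union bound over the two events (the $L_1$ event for $\vh_0$ being close to $\vh/\|\vh\|_2$, and the $L_2$ event for the LASSO bound), we obtain the stated conclusion with the claimed probability; the probability is dominated by the $L_2$ event since the $1/n$ term from the $L_1$ stage can be absorbed into the constants or, more carefully, one keeps $1-O(1/n)$ throughout. The only delicate point in the proof is making the independence argument rigorous, i.e., recognizing that the sample-splitting is precisely what allows a straightforward concatenation of the two theorems; beyond that, no additional estimates are required and the result is essentially a corollary in the strict sense.
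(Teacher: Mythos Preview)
Your proposal is correct and matches the paper's intended argument: the paper does not spell out a proof of the corollary, but the text immediately preceding it makes clear that the sample-splitting into $L_1$ and $L_2$ is precisely to render $\vh_0$ deterministic relative to the LASSO data, so that Theorem~\ref{thm: h_recovery} (on $L_1$) and Theorem~\ref{thm: x_estimation} (on $L_2$) can be concatenated via conditioning and a union bound. One small point worth tightening: the sampling condition $L\gtrsim \mu_0\nu^2\log n$ in the corollary is the requirement of Theorem~\ref{thm: x_estimation}, whereas the $L_1$ stage needs the separate condition from Theorem~\ref{thm: h_recovery}; in your write-up you should either note this explicitly or observe that the corollary implicitly assumes both hold.
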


\begin{remark}
\label{eqn: key_remark}
 By taking $\boldsymbol{h}_0$ from (\ref{eqn: h_new}) and $\boldsymbol{x}_0$  from (\ref{eqn: LASSO_new}), we obtain more reliable approximations of $\boldsymbol{h}$ and $\boldsymbol{x}$ compared to those in Theorem \ref{thm: stable} for small sparsity levels. More concretely, let $\|\widehat{\vh}\|_2 = \sqrt{n}\|\vh\|_2 = 1$ and $\|\vx\|_2 = 1$. Based on Theorem \ref{thm: h_recovery} and Theorem \ref{thm: x_estimation}, let $\theta_0$ be such that $\mathrm{dist}\left(\vh_0, \boldsymbol{h}/\|\boldsymbol{h}\|_2\right) = \left\|e^{\mathrm{i}\theta_0} \vh_0 - \boldsymbol{h}/\|\boldsymbol{h}\|_2\right\|_2$. We then have:
\begin{equation}
\label{eqn: compare1}
\begin{aligned}
\|\widehat{\vh_0}\vx_0^{T}-\widehat{\vh}\vx^{T}\|_F\leq &\sqrt{n}\|\vh_0\vx_0^{T}-\vh\vx^{T}\|_F\\
=&\sqrt{n}\left\|\vh_0\vx_0^{T}-e^{\mathrm{i}\theta_0}\vh_0(\|\vh\|_2\vx^{T})+e^{\mathrm{i}\theta_0}\vh_0(\|\vh\|_2\vx^{T})-\frac{\vh}{\|\vh\|_2}(\|\vh\|_2\vx^{T})\right\|_F\\
\leq & \sqrt{n}\left\|\vx_0-e^{\mathrm{i}\theta_0}\|\vh\|_2\vx\right\|_2\cdot \|\vh_0\|_2+\|\vh\|_2\cdot \|\vx\|_2\cdot \left\|e^{\mathrm{i}\theta_0}\vh_0-\vh/\|\vh\|_2\right\|_2\\
\lesssim & \sqrt{n}\cdot \sqrt{K}\cdot {\lambda}+\frac{1}{\sqrt{n}}\cdot \epsilon
\lesssim \sqrt{K}C
\end{aligned}
\end{equation}
where $\lambda := 2(2\epsilon + C)\|\vh\|_2\|\vx\|_2 = 2(2\epsilon + C)/\sqrt{n}$ and $\epsilon := \widetilde{\epsilon} + C \widetilde{\epsilon}_1$ for some positive absolute constants $\widetilde{\epsilon}$  and $\widetilde{\epsilon}_1$, and $C$ is the noise level satisfying $\|\vZ\|_F \leq C\|\vx\|_2\|\vh\|_2 = C/\sqrt{n}$, with $\vZ$ defined in (\ref{eqn: z_term}). Since $\widehat{\vZ}$, defined in (\ref{eqn: Y_g}), satisfies
\[
\|\widehat{\vZ}\|_F=\|\vF\vZ\|_F/\sqrt{L}=\sqrt{n}\|\vZ\|_F/\sqrt{L}\leq C/\sqrt{L},
\] then error bound in Theorem \ref{thm: h_recovery} directly becomes
\begin{equation}
\label{eqn: compare2}
\|\vX^{\#}-\widehat{\vh}\vx^{T}\|_F\lesssim \sqrt{n}\|\widehat{\vZ}\|_F\leq \frac{\sqrt{n}}{\sqrt{L}}C.
\end{equation}
When the sparsity level $K$ satisfies $K\lesssim n/L$ (Since the order of $L$ is $\mathrm{polylog}(n)$, this can be simplified into $K\mathrm{polylog}(n)\lesssim n$), then the error bound in (\ref{eqn: compare1}) outperforms the error bound in (\ref{eqn: compare2}). 
\end{remark}
\subsection{Refined Algorithm}
 \begin{algorithm}[t]
  \caption{Proximal Alternating Linearized Minimization Algorithm by Specific Initialization}
  \label{alg1}
  \begin{algorithmic}[1]
    \STATE \textbf{Input}: Mask sequences $\{\vd_l\}_{l=1}^L$ and corresponding observations $\{{\boldsymbol{y}}_l\}_{l=1}^L$ as defined in equation (\ref{noise_time_domain}), along with the regularization parameter $\lambda$.
     \STATE \textbf{Initialization}: Obtain  $\boldsymbol{h}_0$ and $\boldsymbol{x}_0$ using  (\ref{eqn: h_new}) and the LASSO model  in (\ref{eqn: LASSO_new}), respectively.
     \STATE 
                        {\textbf{for}} $k=0$ {\textbf{to}} MAXiter:\\
                        Solve $\boldsymbol{h}_{k+1}=\mathrm{argmin}_{\boldsymbol{h}} \langle \vh-\vh_{k},\triangledown_{\vd} F(\vh_{k},\vx_{k})\rangle+\frac{L_k}{2}\|\vd-\vd_{k}\|_2^2$.\\
      Solve $\boldsymbol{x}_{k+1}=\mathrm{argmin}_{\vx} F(\vh_{k+1},\vx)+\lambda\|\vx\|_1$.\\
      \STATE  \textbf{Output}: $\vh^{\#}=\vh_{\text{MAXiter}}$ and $\vx^{\#}=\vx_{\text{MAXiter}}$. 
  \end{algorithmic}
\end{algorithm} 

To further enhance our estimations, we employ the Proximal Alternating Linearized Minimization (PALM) algorithm, detailed in Algorithm \ref{alg1}. This method utilizes the following key functions:
\[
F(\boldsymbol{h},\boldsymbol{x})=\frac{1}{2L}\sum_{l=1}^{L}\|\boldsymbol{h}\circledast(\boldsymbol{d}_l\odot\boldsymbol{x})-\boldsymbol{y}_l\|_2^2\quad \text{and}\quad L_k=\frac{1}{L}\Big\|\sum_{l=1}^{L}\boldsymbol{C}^*_{\boldsymbol{d}_l\odot\boldsymbol{x}_k}\boldsymbol{C}_{\boldsymbol{d}_l\odot\boldsymbol{x}_k}\Big\|,
\]
with $\boldsymbol{C}_{\boldsymbol{d}_l\odot\boldsymbol{x}_k}$ defined as in (\ref{eqn: check_C}).
While this algorithm bears similarities to the approach described in \cite{Qiao2025}, a crucial distinction lies in our focus on directly recovering $\boldsymbol{h}$, rather than $\widehat{\boldsymbol{h}}$ as outlined in \cite{BD_LS}. Notably, the sequence generated by PALM in Algorithm \ref{alg1} converges to a critical point of the following model:
\[
\min_{\boldsymbol{h},\boldsymbol{x}}\frac{1}{2L}\sum_{l=1}^{L}\|\boldsymbol{h}\circledast(\boldsymbol{d}_l\odot\boldsymbol{x})-\boldsymbol{y}_l\|_2^2+\lambda\|\boldsymbol{x}\|_1.
\]
This convergence property holds irrespective of the specific initialization choices for $\boldsymbol{h}_{0}$ and $\boldsymbol{x}_{0}$. However, it's important to note that the number of samples required for recovery is significantly influenced by the selections of $\boldsymbol{h}_{0}$ and $\boldsymbol{x}_{0}$, as elaborated in Theorem \ref{thm: h_recovery} and Theorem \ref{thm: x_estimation}.

Furthermore, our numerical simulations provide compelling evidence that employing constructions of $\boldsymbol{h}_0$ and $\boldsymbol{x}_{0}$ as presented in (\ref{eqn: h_new}) and the model (\ref{eqn: LASSO_new}) leads to substantial improvements. Specifically, we observe enhanced successful recovery rates in the noiseless case and tighter error bounds in the presence of noise. These findings underscore the critical role of appropriate initialization in optimizing algorithm performance. For a comprehensive analysis of these results, we direct the readers to Section \ref{sec: numerical}.

\section{Numerical Experiments}
\label{sec: numerical}
Here we demonstrate the superior performance of the constrained least squares model in (\ref{eqn: noise1}) (abbreviated as \textbf{Constrained LS}) compared to the least squares model in \cite{BD_LS} (abbreviated as \textbf{LS}) for the general case of $\vx$ and $\vh$. Additionally, leveraging the sparsity prior of $\vx$,  we show the enhanced performance of the PALM algorithm with initializations $\vh_0$ and $\vx_0$ as specified in (\ref{eqn: h_new}) and the model (\ref{eqn: LASSO_new}), respectively, relative to other state-of-the-art algorithms in both noiseless and noisy scenarios.
\subsection{Experimental Setup}
 In our numerical results, we consider both real and complex-valued cases.  The non-zero elements of $\vh$ and $\vx$ are independently drawn from standard real Gaussian distribution $\mathcal{N}(0,1)$ and standard complex Gaussian distribution $\frac{1}{\sqrt{2}}\mathcal{N}(0,1)+\frac{\mathrm{i}}{\sqrt{2}}\mathcal{N}(0,1)$ in real and complex-valued cases, respectively. In the real-valued case, we apply Rademacher masks as  in \cite{Romberg_RandomMask, BD_LS}. In the complex-valued case, coded masks with elements independently drawn from (\ref{eqn: proper distribution}) are used. The dimensions of the signal $\vx$ and $\vh$ are set to $n=50$. 

For each algorithm in the experiments, $20$ independent trials are conducted.  Besides, we denote $\vh^{\#}$ and $\vx^{\#}$ as the estimates of $\vh$ and $\vx$, respectively, and define the corresponding relative mean square error (RMSE) as:
\[
\text{RMSE} := \frac{\|\vh^{\#} \vx^{\#} - \vh \vx^{T}\|_F}{\|\vh \vx^{T}\|_F},
\]
due to the fundamental scaling ambiguity of the bilinear problem. 
In the noiseless case, we define a recovery as successful if the RMSE is less than $10^{-3}$. In the noisy case, white Gaussian noise is introduced using the MATLAB function \textbf{awgn}. To assess the quality of the reconstruction under noisy conditions, we compute the signal-to-noise ratio (SNR) of the reconstruction in decibels, given by $-20\log_{10}(\text{RMSE})$. A higher SNR of the reconstruction indicates a better error bound, reflecting more accurate reconstruction.

\subsection{General Signal Case}
In Figure \ref{fig: robust_comparison0}, we set $L = 10$ for both the real and complex cases. The signal-to-noise ratio (SNR) is varied from {$10$dB to $50$dB}, with higher SNR corresponding to lower noise levels. From this figure, we observe that the least squares model in \cite{BD_LS} performs comparably or {even better} at lower noise levels. However, the constrained least squares model in (\ref{eqn: noise1}) outperforms the former in scenarios with lower SNR, or equivalently, higher noise levels. This observation is consistent with the theoretical result, which indicates that the least squares model performs well only under low noise conditions, whereas the constrained least squares model is not dependent on the noise level and exhibits a linear error bound.
\begin{figure}[htbp]
	\centering
	\subfloat[]{{\includegraphics[width=0.45\textwidth]{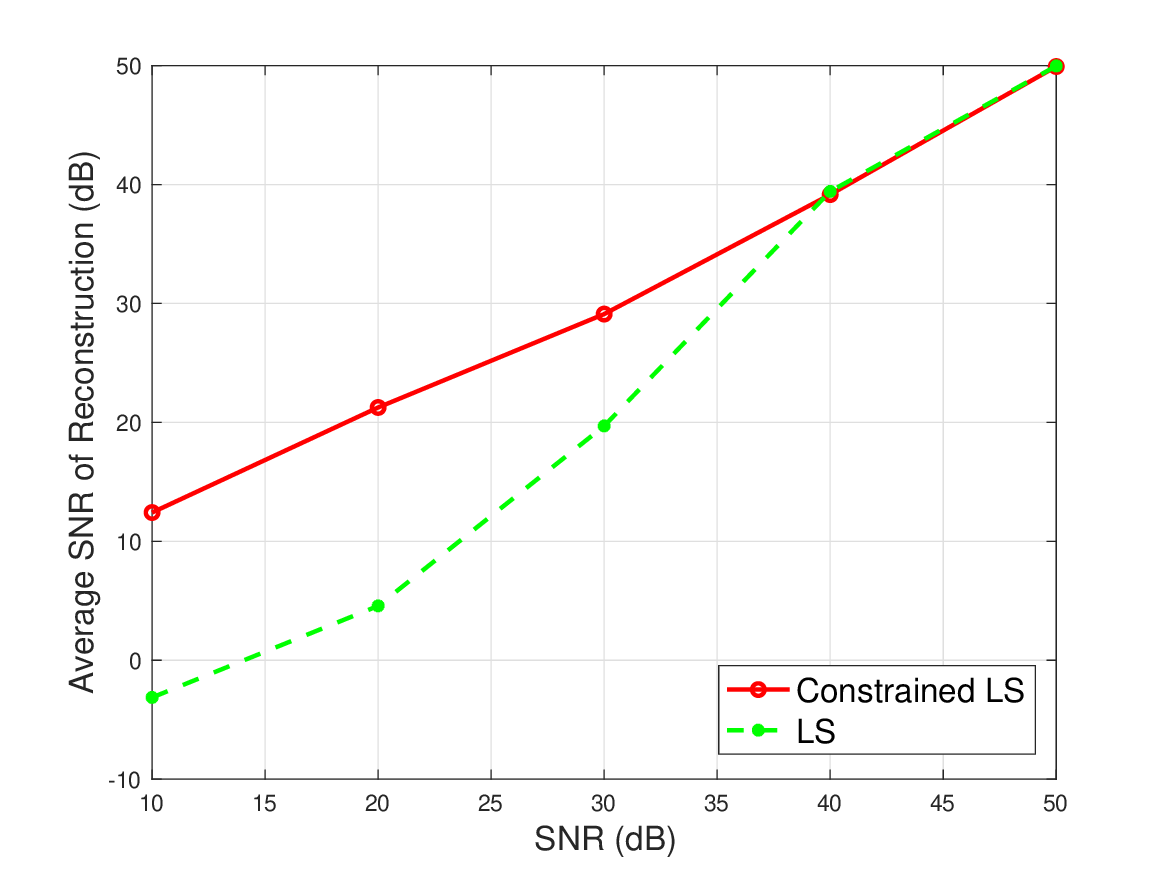}}}
	\subfloat[]{{\includegraphics[width=0.45\textwidth]{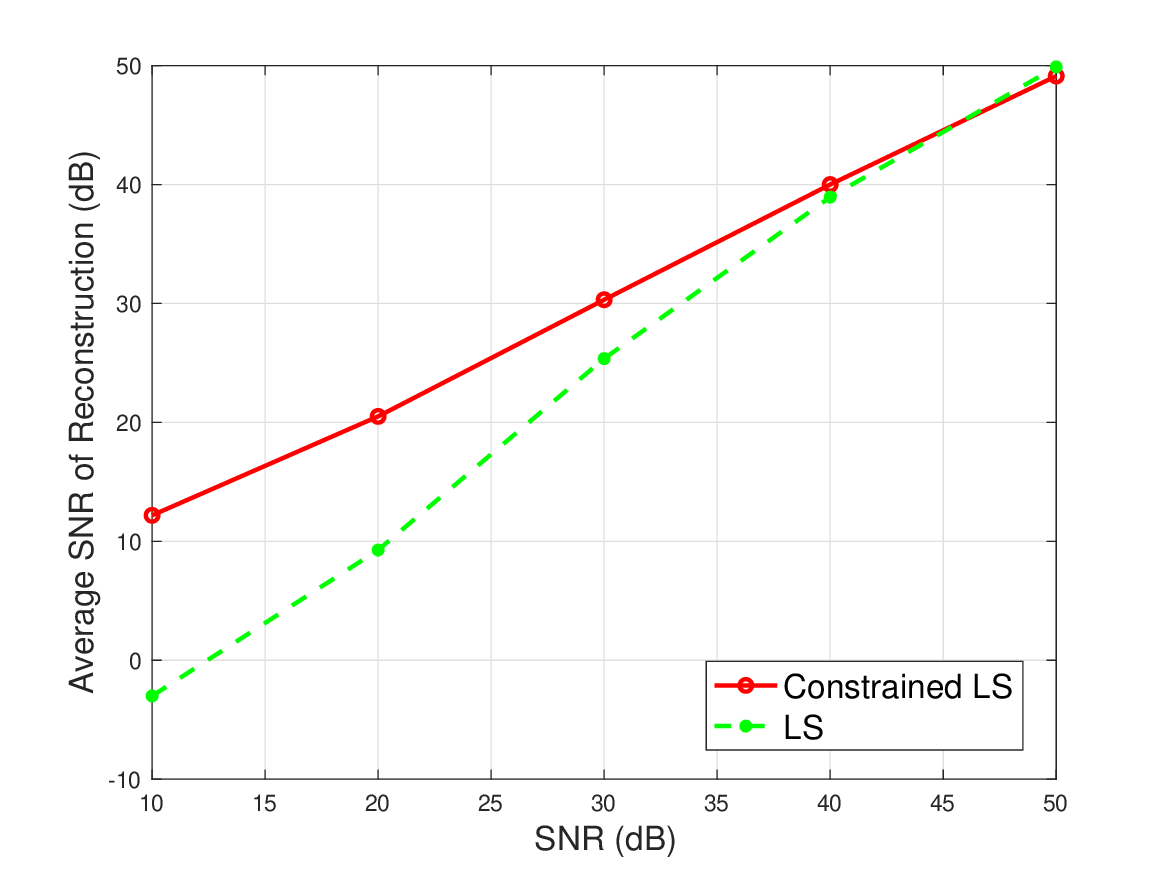}}}
	\caption{Average SNR of Reconstruction vs. Noise Level: (A) Real-Valued Case; (B) Complex-Valued Case}
	\label{fig: robust_comparison0}
\end{figure}

\subsection{Sparse Signal Case} 
In the following experiments, we assume that $\vh$ is supported on a short set, specifically with the support set $\{1, \ldots, 10\}$. The $K$-sparse signal $\vx$ has its support set drawn uniformly at random. When focusing on the PALM algorithm, we set $\lambda = 1 \times 10^{-7}$, and use the following three types of initializations: (1) $\vh_0$ and $\vx_0$ are generated according to (\ref{eqn: h_new}) and the model (\ref{eqn: LASSO_new}), abbreviated as  \textbf{Constructed PALM}; (2) all the elements in $\vh_0$ and $\vx_0$ are drawn from a standard real (or complex) Gaussian distribution,  abbreviated as  \textbf{Randomized PALM}; (3) take $\vx_0 = \boldsymbol{0}$ and $\vh_0 = [1, 0, \ldots, 0]^T$, abbreviated as \textbf{Deterministic PALM}. The maximum number of iterations for the PALM algorithm with different initialization schemes is set to $200$.

We compare the empirical success rates of various algorithms in both real and complex noiseless settings. For $K = 3$, with $L$ varying from 2 to 10, the comparison results are depicted in Figure \ref{fig: comparison2}. The figure clearly demonstrates that the Constructed PALM algorithm outperforms other algorithms, requiring fewer masks to achieve successful recovery, even when compared to PALM with randomized and deterministic initializations. Notably, we observe that complex random masks yield better performance in terms of the success rate of recovery compared to real Rademacher masks.
\begin{figure}[htbp]
	\centering
	\subfloat[]{{\includegraphics[width=0.45\textwidth]{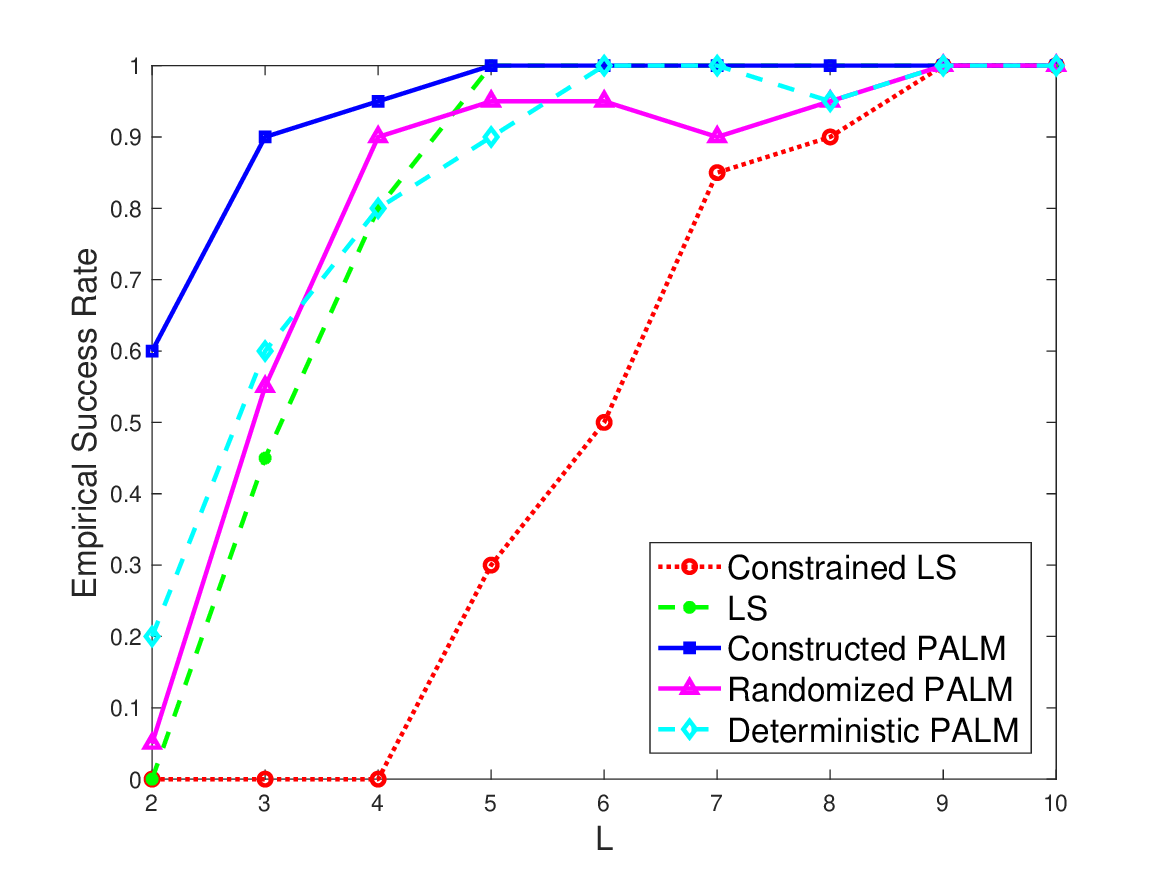}}}
	\subfloat[]{{\includegraphics[width=0.45\textwidth]{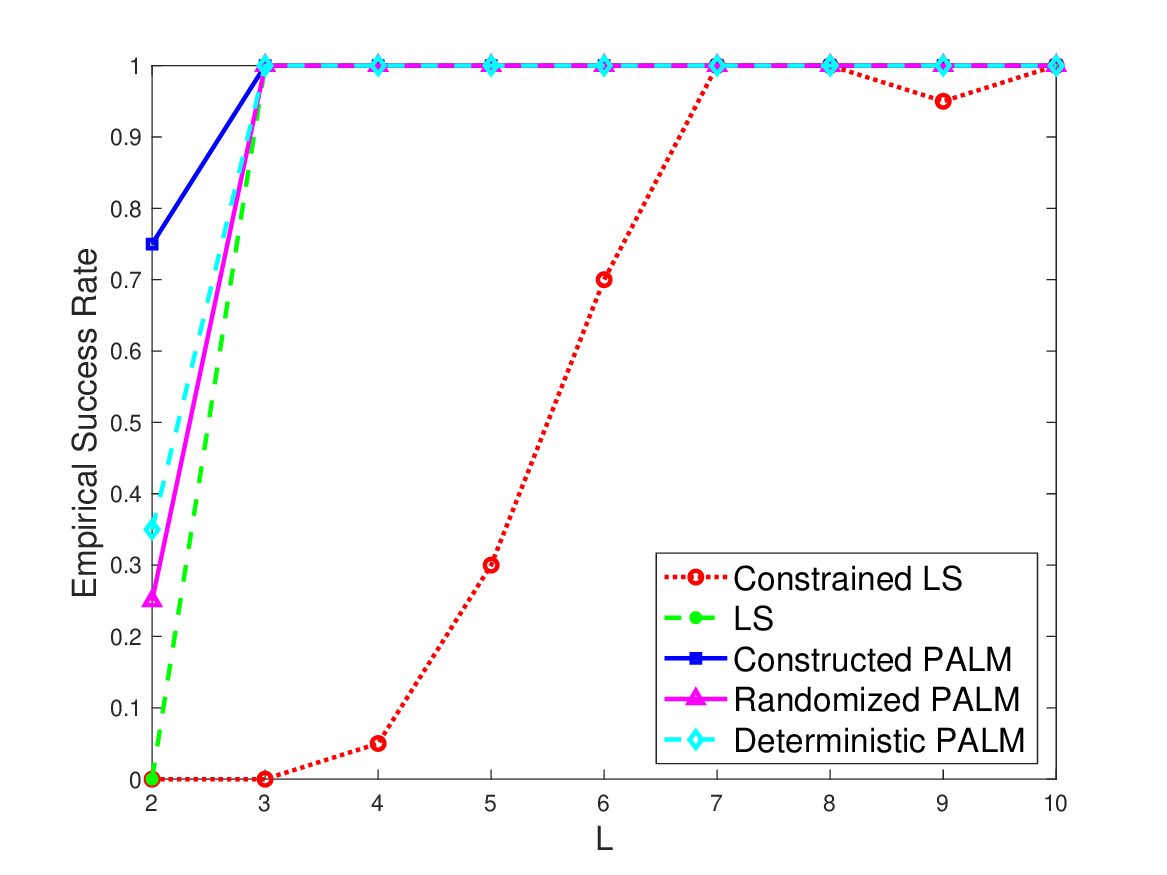}}}
	\caption{Performance Comparison of Different Models in the Sparse Scenario: (A) Real-Valued Case; (B) Complex-Valued Case.}
	\label{fig: comparison2}
\end{figure}

In the presence of noise in sparse scenarios, we select different values for $L$ based on the observed behavior in the noiseless case: $L = 6$ for the real case and $L = 5$ for the complex case. The signal-to-noise ratio (SNR) is varied from 10 dB to 50 dB. As shown in Figure \ref{fig: comparison3}, we observe that the Constructed PALM algorithm still outperforms state-of-the-art algorithms in high SNR regimes. The relatively inferior performance of Constructed PALM compared to the least squares model is primarily due to the fact that the regularization parameter $\lambda$ is not tuned when noise is introduced. Furthermore, the linear error bound for the Constructed PALM algorithm is also presented. In comparison with Randomized PALM and Deterministic PALM, our algorithm underscores the importance of good initialization in reducing the error bound in noisy settings.

\begin{figure}[htbp]
	\centering
	\subfloat[]{{\includegraphics[width=0.45\textwidth]{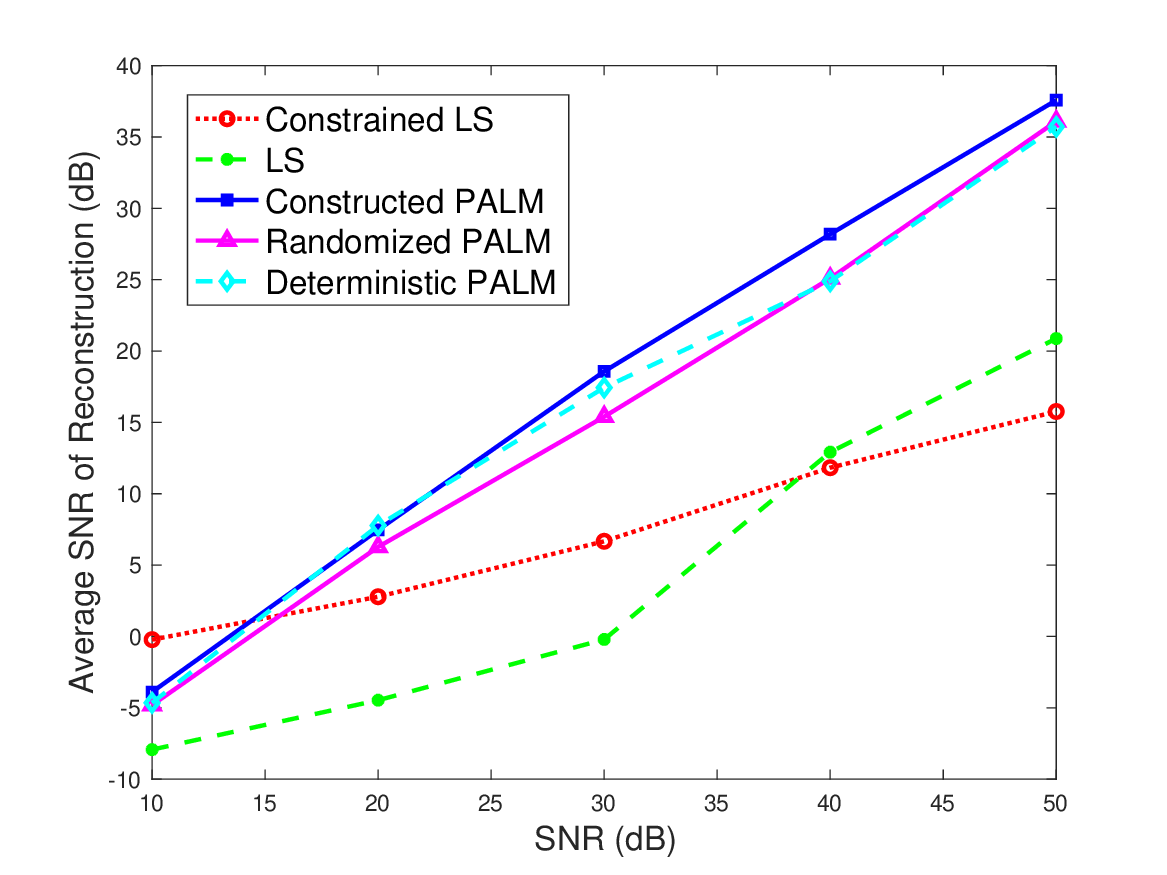}}}
	\subfloat[]{{\includegraphics[width=0.45\textwidth]{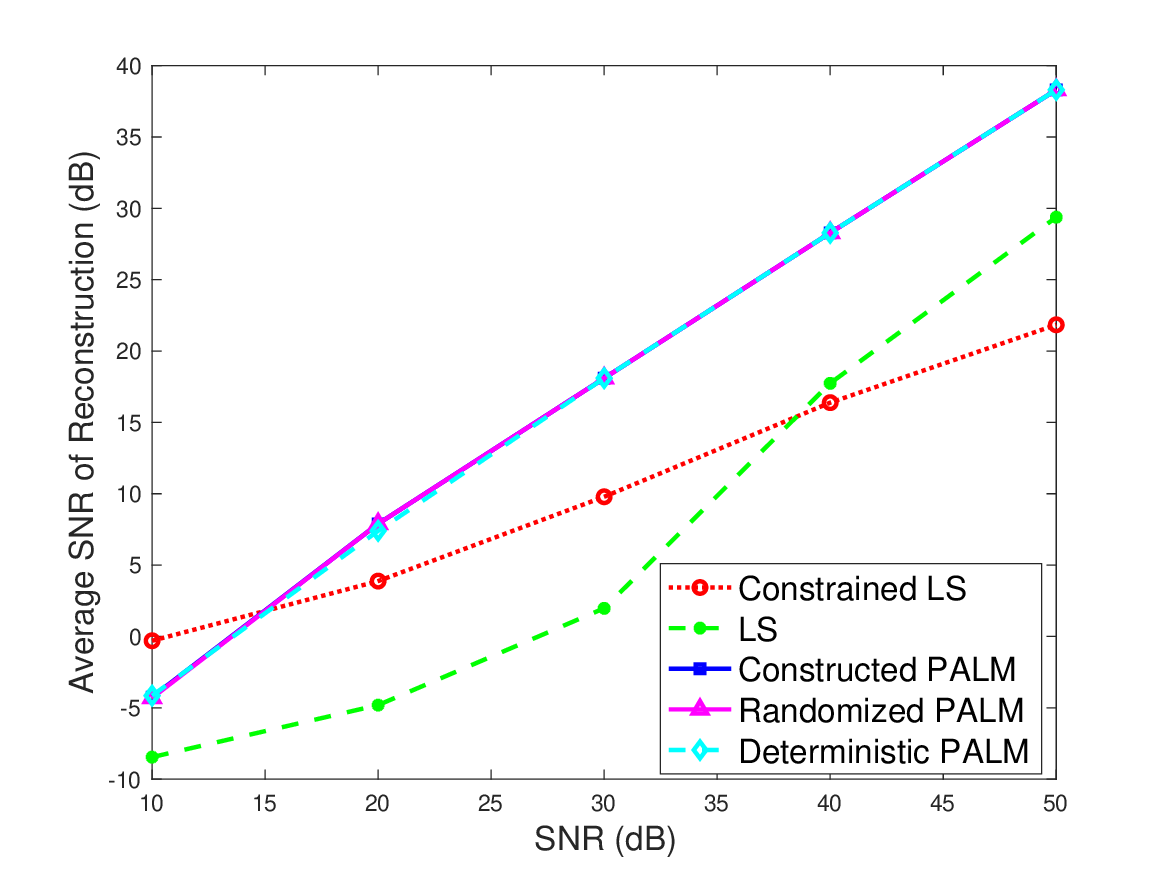}}}
	\caption{Average SNR of Reconstruction vs. Noise Level in the Sparse Scenario: (A) Real-Valued Case; (B) Complex-Valued Case.}
	\label{fig: comparison3}
\end{figure}

\subsection{Two Dimensional Case}
Furthermore, we investigate random mask blind deconvolution in a more realistic imaging scenario. Let $\boldsymbol{x}$ represent a $128 \times 128$ image, and $\boldsymbol{h}$ denote a $10 \times 10$ 2-D Gaussian filter, as shown in Figure \ref{fig: origin}, which is similar to the setup in \cite[Figure 3]{BD_LS}. Figure \ref{fig: output} presents the recovery results using the PALM algorithm and the least squares method, both with $L = 30$ Rademacher random masks, and the addition of small noise. It is evident that the PALM algorithm effectively combines the deconvolution and denoising steps, leading to superior performance compared to the least squares method.

\begin{figure}[htbp]
	\centering
	\subfloat[]{{\includegraphics[width=0.3\textwidth]{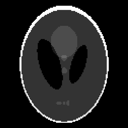}}}\qquad
	\subfloat[]{{\includegraphics[width=0.1\textwidth]{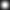}}}
	\caption{Original Image and Gaussian Filter: (a) Original Image; (b) Gaussian Filter.}
	\label{fig: origin}
\end{figure}
\begin{figure}[htbp]
	\centering
	\subfloat[]{{\includegraphics[width=0.3\textwidth]{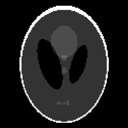}}}\qquad
	\subfloat[]{{\includegraphics[width=0.3\textwidth]{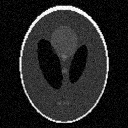}}}
	\caption{Performance Comparison of Different Models in the Sparse Scenario: (a) Recovered Image by PALM algorithm, RMSE = 82.69 dB; (b) Recovered Image by Least Squares Method, RMSE = 15.27 dB.}
	\label{fig: output}
\end{figure}

\section{Proof of Theorem \ref{thm: stable}}\label{sec: stable}
We begin by introducing fundamental technical notations and preliminary results. Let $\widehat{\boldsymbol{h}}, \boldsymbol{x} \in \mathbb{C}^n$ be fixed vectors satisfying $\|\widehat{\boldsymbol{h}}\|_2 = \|\boldsymbol{x}\|_2 = 1$. The  tangent space $T$ is defined as
 \[T=\{\widehat{\boldsymbol{h}}\overline{\boldsymbol{v}}^{*}+\boldsymbol{u}\overline{\boldsymbol{x}}^{*}\ :\ \boldsymbol{u},\boldsymbol{v}\in\mathbb{C}^{n}\}=\{\widehat{\boldsymbol{h}}\boldsymbol{v}^{T}+\boldsymbol{u}\boldsymbol{x}^{T}\ :\ \boldsymbol{u},\boldsymbol{v}\in\mathbb{C}^{n}\}.\]
The orthogonal projection onto $T$ and its complement $T^\perp$ are given by
 \begin{equation}\label{eqn: PT}
 \mathcal{P}_{T}(\boldsymbol{X})=\widehat{\boldsymbol{h}}\widehat{\boldsymbol{h}}^{*}\boldsymbol{X}+\boldsymbol{X}\overline{\boldsymbol{x}}\overline{\boldsymbol{x}}^{*}-\widehat{\boldsymbol{h}}\widehat{\boldsymbol{h}}^{*}\boldsymbol{X}\overline{\boldsymbol{x}}\overline{\boldsymbol{x}}^{*}=\widehat{\boldsymbol{h}}\widehat{\boldsymbol{h}}^{*}\boldsymbol{X}+\boldsymbol{X}\overline{\boldsymbol{x}}\boldsymbol{x}^{T}-\widehat{\boldsymbol{h}}\widehat{\boldsymbol{h}}^{*}\boldsymbol{X}\overline{\boldsymbol{x}}\boldsymbol{x}^{T},
 \end{equation}
 and
 \begin{equation}\label{eqn: PTC}
 \mathcal{P}_{T^{\perp}}(\boldsymbol{X})=\left(\boldsymbol{I}-\widehat{\boldsymbol{h}}\widehat{\boldsymbol{h}}^{*}\right)\boldsymbol{X}\left(\boldsymbol{I}-\overline{\boldsymbol{x}}\overline{\boldsymbol{x}}^{*}\right)=\left(\boldsymbol{I}-\widehat{\boldsymbol{h}}\widehat{\boldsymbol{h}}^{*}\right)\boldsymbol{X}\left(\boldsymbol{I}-\overline{\boldsymbol{x}}\boldsymbol{x}^{T}\right).
 \end{equation}
 
The following two lemmas are direct consequences of the results established in \cite{Romberg_RandomMask}. In particular, Lemma \ref{lem: RIP_convex} follows directly from Lemma 2 in \cite{Romberg_RandomMask}, while Lemma \ref{lem: dual} is derived from the proof of Theorem 2 therein. These results generalize the existing findings for the Rademacher masking model to the broader masking framework introduced in Definition \ref{def: g}. Since the underlying proof techniques remain largely analogous, we omit the detailed derivations for brevity.
\begin{lem}\cite[Lemma 2]{Romberg_RandomMask}
\label{lem: RIP_convex}
Let $\boldsymbol{D}_l=\mathrm{diag}(\vd_l)$, for $l = 1, \dots, L$, be independent diagonal matrices whose diagonal entries are independent copies of some random variable $g \in \mathbb{C}$, as defined in Definition \ref{def: g}, with parameter $\nu$. Consider the linear operator $\mathcal{A}$ as specified in (\ref{eqn: A_operator}). Then, for any $\beta > 0$, if the following condition holds:
\[
L\geq C_{\beta} C_{\nu}\mu \log^2n\log\log(n+1),  
\]
it follows that
\[
\frac{1}{2}\|\vX\|^2_F\leq \|\mathcal{A}(\vX)\|^2_F\leq \frac{3}{2}\|\vX\|^2_F
\]
for all $\vX \in T$, with probability at least $1 - 3n^{-\beta}$. Here, $C_{\nu}$ is a positive constant that depends on $\nu$.
\end{lem}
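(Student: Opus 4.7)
The plan is to establish the two-sided bound on $\|\mathcal{A}(\vX)\|_F^2$ by showing that the self-adjoint operator $\mathcal{P}_T\mathcal{A}^*\mathcal{A}\mathcal{P}_T-\mathcal{P}_T$ has operator norm at most $1/2$ with the claimed probability, and to track how the moment hypotheses of Definition \ref{def: g} enter the argument of Lemma~2 in \cite{Romberg_RandomMask}, which was written only for Rademacher masks. Since the two-sided bound is equivalent to this operator-norm estimate on $T$, controlling that single quantity is the entire task.

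First I would decompose $\mathcal{A}^*\mathcal{A}=\sum_{l=1}^{L}\mathcal{A}_l^*\mathcal{A}_l$, where $\mathcal{A}_l(\vX)=\frac{1}{\sqrt{L}}(\vF\odot\vX)\vd_l$, and use the hypotheses $\mathbb{E} g=0$ and $\mathbb{E}|g|^2=1$ to verify entrywise that $\mathbb{E}[\mathcal{A}^*\mathcal{A}]$ is the identity on $\mathbb{C}^{n\times n}$; in particular $\mathbb{E}[\mathcal{P}_T\mathcal{A}^*\mathcal{A}\mathcal{P}_T]=\mathcal{P}_T$. This step is unchanged from the Rademacher setting because only the first two moments of $g$ appear here.

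Next I would apply a Bernstein-type inequality to the centered sum of independent self-adjoint operators $\mathcal{P}_T\mathcal{A}_l^*\mathcal{A}_l\mathcal{P}_T-\mathcal{P}_T/L$, viewed as operators on the ambient matrix space restricted to the subspace $T$, whose complex dimension is at most $2n-1$. The two ingredients are a uniform bound on the operator norm of each summand and a bound on the matrix variance. Using $|g|\leq\nu$ together with the coherence estimate $\|\mathcal{P}_T(\vF\odot(\ve_i\ve_j^T))\|_F^2\lesssim\mu$ (which holds because $T$ depends on $\widehat\vh$ through $\|\widehat\vh\|_\infty^2=\mu$), the summand norm is controlled by $\mu\nu^2/L$, while the fourth-moment bound $\mathbb{E}|g|^4\leq\nu^2$ produces a matrix variance of order $\mu\nu^4/L$. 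A direct application of matrix Bernstein would already yield $L\gtrsim C_\nu\mu\log n$, but this coupling through the coherence is not tight enough to produce the target sampling rate.

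The main obstacle, and the reason for the extra $\log n\log\log(n+1)$ factor, is that the entry-selection action $\vF\odot(\cdot)$ touches only one side of the matrix, so a naive Bernstein bound is wasteful. The Bahmani--Romberg argument overcomes this with a decoupling plus Rudelson--Vershynin chaining (or noncommutative Khintchine) step that replaces $g$ by a symmetrization and then applies a Dudley entropy integral over the set of rank-two perturbations in $T$. In our setting, the decoupling goes through verbatim because it uses only independence and mean zero, and the chaining estimate uses $|g|\leq\nu$ to bound a sub-Gaussian-type process, producing an extra factor that is polynomial in $\nu$ and can be absorbed into the constant $C_\nu$ (concretely of order $\nu^4$). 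Combining these pieces gives $\|\mathcal{P}_T\mathcal{A}^*\mathcal{A}\mathcal{P}_T-\mathcal{P}_T\|\leq 1/2$ with probability at least $1-3n^{-\beta}$ as soon as $L\geq C_\beta C_\nu\mu\log^2n\log\log(n+1)$, which is equivalent to the claimed two-sided inequality for all $\vX\in T$.
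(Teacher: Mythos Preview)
Your proposal is correct and aligns with the paper's own treatment: the paper does not give a self-contained proof of this lemma but simply states that it follows directly from Lemma~2 in \cite{Romberg_RandomMask}, with the Rademacher-specific steps carrying over to the general mask of Definition~\ref{def: g} and the resulting constants absorbed into $C_\nu$. Your outline---reducing to the operator-norm bound on $\mathcal{P}_T\mathcal{A}^*\mathcal{A}\mathcal{P}_T-\mathcal{P}_T$, checking that only $\mathbb{E} g=0$ and $\mathbb{E}|g|^2=1$ are needed for the centering, and then invoking the decoupling/chaining machinery of Bahmani--Romberg with $|g|\le\nu$ replacing the Rademacher bound---is exactly the ``largely analogous'' argument the paper alludes to and in fact supplies more detail than the paper itself.
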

\begin{lem}\cite{Romberg_RandomMask}
\label{lem: dual}
Let $\boldsymbol{D}_l=\mathrm{diag}(\vd_l)$, for $l = 1, \dots, L$, be independent diagonal matrices whose diagonal entries are independent copies of some random variable $g \in \mathbb{C}$, as defined in Definition \ref{def: g}, with parameter $\nu$. Consider the linear operator $\mathcal{A}$ as specified in (\ref{eqn: A_operator}).  Under the given assumptions, there exists a matrix $\boldsymbol{Y} \in \mathrm{range}(\mathcal{A}^{*})$ satisfying the following conditions:
\[
\sqrt{2}\|\mathcal{A}\|\left\|\mathcal{P}_{T}(\vY)-\widehat{\vh}\vx^{T}\right\|_F\leq \frac{1}{4},
\]
and 
\[
\|\mathcal{P}_{T^{\perp}}(\vY)\|\leq \frac{1}{2},
\]
with probability at least $1-C_0n^{-\beta+1}$, provided that 
\[
L\overset{\beta}{\gtrsim} C_{\nu} \mu \log^{2}{n}\log(n/\mu)\log\log(n).
\]
 Here $C_{\nu}$ is the same constant as  in Lemma \ref{lem: RIP}, and $C_0$ is an positive absolute constant. 
\end{lem}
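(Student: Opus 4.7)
\textbf{Proof proposal for Lemma \ref{lem: dual}.}
The plan is to construct $\vY$ by the \emph{golfing scheme} of Gross, adapted to the masked convolution operator $\mathcal{A}$. Partition the index set $\{1,\ldots,L\}$ into $J = \Theta(\log n)$ disjoint blocks $\Gamma_1,\ldots,\Gamma_J$ of roughly equal size $L/J$, and let $\mathcal{A}_j : \mathbb{C}^{n\times n}\to\mathbb{C}^{n\times |\Gamma_j|}$ be the restriction of $\mathcal{A}$ to the masks indexed by $\Gamma_j$ (rescaled by $\sqrt{J}$ so that $\mathbb{E}\,\mathcal{A}_j^{*}\mathcal{A}_j = \mathcal{I}$). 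Set $\vQ_0 = \widehat{\vh}\vx^{T}\in T$, $\vY_0 = \vO$, and for $j=1,\ldots,J$ define
\[
\vY_j = \vY_{j-1} + \mathcal{A}_j^{*}\mathcal{A}_j(\vQ_{j-1}),\qquad \vQ_j = \widehat{\vh}\vx^{T} - \mathcal{P}_T(\vY_j) = (\mathcal{P}_T - \mathcal{P}_T\mathcal{A}_j^{*}\mathcal{A}_j\mathcal{P}_T)\vQ_{j-1}.
\]
By construction $\vY := \vY_J \in \mathrm{range}(\mathcal{A}^{*})$ and $\mathcal{P}_T(\vY) - \widehat{\vh}\vx^{T} = -\vQ_J$.

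The first step is a ``local'' version of Lemma \ref{lem: RIP_convex} applied to each block: on the event where $\|\mathcal{P}_T\mathcal{A}_j^{*}\mathcal{A}_j\mathcal{P}_T - \mathcal{P}_T\|_{T\to T} \le 1/2$ for every $j$, we obtain the contraction $\|\vQ_j\|_F \le \tfrac12 \|\vQ_{j-1}\|_F$, hence $\|\vQ_J\|_F \le 2^{-J}\|\widehat{\vh}\vx^{T}\|_F$. Choosing $J \asymp \log n$ yields $\|\vQ_J\|_F \le n^{-c}$, which, together with the standard bound $\|\mathcal{A}\|\lesssim \sqrt{n/L}$ provable via the same concentration, secures the first conclusion $\sqrt{2}\|\mathcal{A}\|\|\mathcal{P}_T(\vY)-\widehat{\vh}\vx^{T}\|_F \le 1/4$. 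Each block succeeds with probability $1 - O(n^{-\beta})$ as long as $|\Gamma_j| \gtrsim C_\nu \mu \log^2 n \log\log n$, which forces the $\log n$ factor in the sampling condition via $L = J\cdot |\Gamma_j|$.

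The second step bounds the spectral norm of $\mathcal{P}_{T^{\perp}}\vY$ by the telescoping sum
\[
\|\mathcal{P}_{T^{\perp}}\vY\| \le \sum_{j=1}^{J}\|\mathcal{P}_{T^{\perp}}\mathcal{A}_j^{*}\mathcal{A}_j(\vQ_{j-1})\|.
\]
Conditioning on the $\sigma$-algebra generated by the earlier blocks makes $\vQ_{j-1}$ deterministic, so for each $j$ the quantity $\mathcal{A}_j^{*}\mathcal{A}_j(\vQ_{j-1})$ is a sum of $|\Gamma_j|$ independent zero-mean matrices (after subtracting $\vQ_{j-1}\in T$, which is annihilated by $\mathcal{P}_{T^{\perp}}$ once we center). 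A matrix Bernstein inequality, combined with the boundedness $|g|\le \nu$ to control the $\psi_1$-norm and with the incoherence $\|\widehat{\vh}\|_\infty^2 = \mu/n$ to control the variance proxy, will give
\[
\|\mathcal{P}_{T^{\perp}}\mathcal{A}_j^{*}\mathcal{A}_j(\vQ_{j-1})\| \le \tfrac{1}{4}\|\vQ_{j-1}\|_F
\]
with probability $1-O(n^{-\beta})$ under the same per-block sample size. Summing the geometric series $\sum_j 4^{-1}\cdot 2^{-(j-1)} \le 1/2$ delivers the second conclusion, and a union bound over the $J=O(\log n)$ blocks absorbs into the stated failure probability.

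The main obstacle is the concentration step for $\mathcal{P}_{T^{\perp}}\mathcal{A}_j^{*}\mathcal{A}_j(\vQ_{j-1})$ under the generalized mask in Definition \ref{def: g} rather than the Rademacher case treated in \cite{Romberg_RandomMask}. The summands are rank-one matrices whose variance and sub-exponential norms both inherit factors of $\nu$ from $|g|\le \nu$ and $\mathbb{E}|g|^2 = 1$, together with the coherence $\mu$ coming from $\|\widehat{\vh}\|_\infty^2$; tracking these carefully is what produces the factor $C_\nu \mu$ in the sampling bound. Since $\mathbb{E} g = 0$ and $\mathbb{E}|g|^2 = 1$ match the two moment identities that drive the Rademacher proof of Theorem 2 in \cite{Romberg_RandomMask}, each step of that argument goes through verbatim after replacing Hoeffding-type inequalities by their bounded-variable Bernstein analogues, with all moment constants absorbed into $C_\nu$ and $C_0$.
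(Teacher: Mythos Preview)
Your proposal is correct and follows the same golfing-scheme construction as the proof of Theorem~2 in \cite{Romberg_RandomMask}, to which the paper explicitly defers (the paper omits the proof entirely, stating only that the Rademacher argument carries over to the mask model of Definition~\ref{def: g} with analogous techniques). One minor refinement: to reproduce the exact sampling complexity $L\gtrsim C_\nu\mu\log^2 n\,\log(n/\mu)\,\log\log n$ rather than $\log^3 n$, choose $J\asymp \log(n/\mu)$ instead of $J\asymp\log n$, since the first condition only requires $\|\vQ_J\|_F\lesssim 1/\|\mathcal{A}\|\lesssim\sqrt{L/n}\lesssim\sqrt{\mu/n}\cdot\text{polylog}$, so $2^{-J}\lesssim\sqrt{\mu/n}$ suffices.
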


\begin{proof}[Proof of Theorem \ref{thm: stable}]
The proof builds upon techniques from the proof of Theorem 2 in \cite{ARR14}. However, due to the structural differences in the measurement model, a more detailed analysis is required to adapt the result to the current setting. Define $\boldsymbol{H} = \boldsymbol{X}^{\#} - \widehat{\boldsymbol{h}} \boldsymbol{x}^T$, and let $\mathcal{P}_{\mathcal{A}}$
  denote the projection operator onto the row space of $\mathcal{A}$. Explicitly, this projection is given by $\mathcal{P}_{\mathcal{A}}(\boldsymbol{H}) = \mathcal{A}^*(\mathcal{A} \mathcal{A}^*)^{-1} \mathcal{A}(\boldsymbol{H}).$ Thus $\mathcal{P}_{\mathcal{A}^{\perp}}(\vH)=\vH-\mathcal{P}_{\mathcal{A}}(\boldsymbol{H})$.

Assume that for $L \gtrsim C_{\nu} \mu \log^2 n \log(n/\mu) \log\log(n)$, the following condition is satisfied:
\begin{equation} 
\label{eqn: H_upper_temp1}
\|\mathcal{P}_{\mathcal{A}^{\perp}}(\vH)\|_F^2\leq 16(2\|\mathcal{A}\|^2+1)n\|\mathcal{P}_{\mathcal{A}}(\vH)\|_F^2,
 \end{equation}
 with  probability  at least $1 - (3 + C_0)n^{-1}$, where $C_0$ is the constant from Lemma \ref{lem: dual}. Consequently, we have
\begin{equation}
\label{eqn: H_upper1}
\begin{aligned}
\|\vH\|_F^2\leq &\|\mathcal{P}_{\mathcal{A}}(\vH)\|_F^2+\|\mathcal{P}_{\mathcal{A}^{\perp}}(\vH)\|_F^2\\
\leq & (16(2\|\mathcal{A}\|^2+1)n+1)\|\mathcal{P}_{\mathcal{A}}(\vH)\|_F^2\leq 32(2\|\mathcal{A}\|^2+1)n\|\mathcal{P}_{\mathcal{A}}(\vH)\|_F^2.
\end{aligned}
\end{equation}
Next , since $\|\mathcal{A}(\vH)\|_F\leq \|\mathcal{A}(\vX^{\#})-\widehat{\vY}\|_F+\|\mathcal{A}(\widehat{\vh}\vx^{T})-\widehat{\vY}\|_F\leq 2\|\mathcal{A}(\widehat{\vh}\vx^{T})-\widehat{\vY}\|_F\leq  2\|\widehat{\vZ}\|_F$, it follows that 
\begin{equation}
\label{eqn: PA_H_z}
\|\mathcal{P}_{\mathcal{A}}(\vH)\|_F=\|\mathcal{A}^{*}(\mathcal{A}\mathcal{A}^{*})^{-1}\mathcal{A}(\vH)\|_F\leq \|\mathcal{A}^{*}(\mathcal{A}\mathcal{A}^{*})^{-1}\|\cdot \|\mathcal{A}(\vH)\|_F\leq 2\|\mathcal{A}^{*}(\mathcal{A}\mathcal{A}^{*})^{-1}\|\cdot\|\widehat{\vZ}\|_F.
\end{equation}
Substituting this result into the inequality for $\|\boldsymbol{H}\|_F^2$ in (\ref{eqn: H_upper1}), we obtain:
\[
 \|\vH\|_F\leq 128\cdot n\cdot (2\|\mathcal{A}\|^2+1)\cdot \|\mathcal{A}^{*}(\mathcal{A}\mathcal{A}^{*})^{-1}\|^2\cdot \|\widehat{\vZ}\|_F^2
\leq 128\cdot n\cdot (2\|\mathcal{A}\|^2+1)\cdot \|\mathcal{A}\|^2\cdot \|(\mathcal{A}\mathcal{A}^{*})^{-1}\|^2\cdot  \|\widehat{\vZ}\|_F^2.
\]
This simplifies to
\begin{equation}
\label{eqn: final_temp_H}
 \|\vH\|_F\leq \frac{128\cdot n\cdot (2\lambda_{\max}(\mathcal{A}\mathcal{A}^{*})+1)\cdot \lambda_{\max}(\mathcal{A}\mathcal{A}^{*})}{\lambda^2_{\min}(\mathcal{A}\mathcal{A}^{*})}\|\widehat{\vZ}\|_F^2,
\end{equation}
where  
\[
\lambda_{\max}(\mathcal{A}\mathcal{A}^{*}):=\max_{\|\vW\|_F=1}\|\mathcal{A}\mathcal{A}^{*}(\vW)\|_F\quad \text{and}\quad \lambda_{\min}(\mathcal{A}\mathcal{A}^{*}):=\min_{\|\vW\|_F=1}\|\mathcal{A}\mathcal{A}^{*}(\vW)\|_F,
\]
and they satisfy
\[
\lambda_{\max}(\mathcal{A}\mathcal{A}^{*})=\|\mathcal{A}\mathcal{A}^*\|=\|\mathcal{A}\|^2\quad \text{and}\quad \|(\mathcal{A}\mathcal{A}^{*})^{-1}\|=\frac{1}{\lambda_{\min}(\mathcal{A}\mathcal{A}^{*})}.
\]

To derive the upper bound for (\ref{eqn: final_temp_H}), we need to further evaluate $\lambda_{\max}(\mathcal{A} \mathcal{A}^*)$  and $\lambda_{\min}(\mathcal{A} \mathcal{A}^*)$. As outlined in (\ref{eqn: A_operator}), the linear operator $\mathcal{A} : \mathbb{C}^{n \times n} \rightarrow \mathbb{C}^{n \times L}$  and its adjoint $\mathcal{A}^* : \mathbb{C}^{n \times L} \rightarrow \mathbb{C}^{n \times n}$ are defined as follows:
\[
\mathcal{A}(\vX)=\frac{1}{\sqrt{L}}(\vF\odot \vX)\vD_{g},\quad \text{and}\quad \mathcal{A}^*(\vY)=\frac{1}{\sqrt{L}}\overline{\vF}\odot (\vY\vD_{g}^*),
\]
where $\vD_g$ is defined in (\ref{eqn: Y_g}). Consequently, for any $\vY \in \mathbb{C}^{n\times L}$, the composition $\mathcal{A}\mathcal{A}^*(\vY)$ can be expressed as
\[
\mathcal{A}\mathcal{A}^*(\vY)=\frac{1}{L}\vY\vD_{g}^*\vD_{g}.
\]
From this representation, we can deduce the following:
\[
\lambda_{\max}(\mathcal{A}\mathcal{A}^*)=\frac{1}{L}\sigma_{\max}^2(\vD_g)\quad \text{and}\quad \lambda_{\min}(\mathcal{A}\mathcal{A}^*)=\frac{1}{L}\sigma_{\min}^2(\vD_g),
\]
where $\sigma_{\max}(\vD_g)$ and $\sigma_{\min}(\vD_g)$ denote the largest and smallest singular values of $\vD_g$, respectively.

Applying Theorem 4.6.1 from \cite{V11}, for any $t > 0$, we can derive the following inequalities:
\begin{equation}\label{eqn: singular}
\sqrt{n}-C\nu^2(\sqrt{L}+t)\leq \sigma_{\min}(\vD_g)\leq \sigma_{\max}(\vD_g)\leq \sqrt{n}+C\nu^2(\sqrt{L}+t),
\end{equation}
with probability at least $1 - \exp(-t^2)$, where $C$ is a positive constant. Assuming that $n \geq \max\{16 C^2 \nu^4, 1\} L$, we set $t = \frac{\sqrt{n}}{2 C \nu^2}$ in (\ref{eqn: singular}), which yields the refined estimates:
\begin{equation}
\label{eqn: singular_est}
\frac{1}{4}\sqrt{n}\leq \sigma_{\min}(\vD_g)\leq \sigma_{\max}(\vD_g)\leq \frac{7}{4}\sqrt{n},
\end{equation}
with probability  at least $1 - \exp\left(-\frac{n}{4C^2\nu^4}\right)$. Substituting the bounds from (\ref{eqn: singular_est}) into (\ref{eqn: final_temp_H}), we obtain the following:
\[
\|\boldsymbol{X}^{\#}-\widehat{\boldsymbol{h}}\boldsymbol{x}^{T}\|_F\lesssim  \sqrt{n}\|\widehat{\vZ}\|_F,
\]
which leads to the conclusion stated in (\ref{eqn: noise_est}) with probability at least
\[
1 - (3 + C_0)n^{-1} - \exp\left(-\frac{n}{4C^2 \nu^4}\right),
\]
 provided that $L \gtrsim C_{\nu} \mu \log^{2}{n} \log\left(\frac{n}{\mu}\right) \log\log(n)$ and $n \geq \max\{16 C^2 \nu^4, 1\} L$.

The remaining task is to establish the validity of (\ref{eqn: H_upper_temp1}). To do this, let us assume that for any matrix $\vX \in \text{Null}(\mathcal{A})$ (i.e., $\mathcal{A}(\vX) = \boldsymbol{0}$), with probability at least $1 - C_0 n^{-1}$, the following inequality holds:
\begin{equation}
\label{eqn: stable_temp}
\|\widehat{\vh}\vx^{T}+\vX\|_*-\|\widehat{\vh}\vx^{T}\|_*\geq \frac{1}{4}\|\mathcal{P}_{T^{\perp}}(\vX)\|_*.
\end{equation}
provided that $L{\gtrsim} C_{\nu} \mu \log^{2}{n}\log(n/\mu)\log\log(n)$. 
Since $\mathcal{P}_{\mathcal{A}^{\perp}}(\vH)\in \text{Null}(\mathcal{A})$, we can use this assumption to derive the following bound. Specifically, we have:
\begin{equation}
\label{eqn: temp_upper_thm1}
\frac{1}{4} \|\mathcal{P}_{T^\perp} \mathcal{P}_{\mathcal{A}^\perp}(\vH)\|_F \leq \|\mathcal{P}_{T^\perp} \mathcal{P}_{\mathcal{A}^\perp}(\vH)\|_* \leq \|\widehat{\vh} \vx^T + \mathcal{P}_{T^\perp} \mathcal{P}_{\mathcal{A}^\perp}(\vH)\|_* - \|\widehat{\vh} \vx^T\|_*. 
\end{equation}         
Now, we can simplify the right-hand side in (\ref{eqn: temp_upper_thm1}) by noting that:
\[
\|\widehat{\vh} \vx^T + \mathcal{P}_{T^\perp} \mathcal{P}_{\mathcal{A}^\perp}(\vH)\|_* - \|\widehat{\vh} \vx^T\|_* = \|\vX^{\#} - \vH + \mathcal{P}_{\mathcal{A}^\perp}(\vH)\|_* - \|\widehat{\vh} \vx^T\|_*. 
\]                
Applying the triangle inequality, we obtain:
\begin{equation}
\label{eqn: PTA_temp1}
\begin{aligned}
\frac{1}{4}\|\mathcal{P}_{T^{\perp}}\mathcal{P}_{\mathcal{A}^{\perp}}(\vH)\|_F\leq &
 \|\vX^{\#}-\vH+\mathcal{P}_{\mathcal{A}^{\perp}}(\vH)\|_*-\|\widehat{\vh}\vx^{T}\|_*\\
\leq &\|\mathcal{P}_{\mathcal{A}}(\vH)\|_*+\|\vX^{\#}\|_*-\|\widehat{\vh}\vx^{T}\|_*\\
\leq& \|\mathcal{P}_{\mathcal{A}}(\vH)\|_*\leq \sqrt{n}\|\mathcal{P}_{\mathcal{A}}(\vH)\|_F.
\end{aligned}
\end{equation}
The final line follows from the fact that $\|\vX^{\#}\|_* \leq R = \|\widehat{\vh} \vx^T\|_*$. Therefore, we arrive at the following result:
\begin{equation}
\label{eqn: temp1_PTA}
\begin{aligned}
\|\mathcal{P}_{\mathcal{A}^{\perp}}(\vH)\|_F^2=&\|\mathcal{P}_{T}\mathcal{P}_{\mathcal{A}^{\perp}}(\vH)\|_F^2+\|\mathcal{P}_{T^{\perp}}\mathcal{P}_{\mathcal{A}^{\perp}}(\vH)\|_F^2\overset{(a)}\leq 16(2\|\mathcal{A}\|^2+1)n\|\mathcal{P}_{\mathcal{A}}(\vH)\|_F^2, 
\end{aligned}
\end{equation}
which leads to the conclusion in equation (\ref{eqn: H_upper_temp1}).
Here $(a)$ follows from (\ref{eqn: PTA_temp1}) and 
\begin{equation}\label{eqn: PT_Z}
\frac{1}{\sqrt{2}}\|\mathcal{P}_{T}\mathcal{P}_{\mathcal{A}^{\perp}}(\vH)\|_F\overset{(b)}\leq \|\mathcal{A}(\mathcal{P}_{T}\mathcal{P}_{\mathcal{A}^{\perp}}(\vH))\|_F\overset{(c)}= \|\mathcal{A}(\mathcal{P}_{T^{\perp}}\mathcal{P}_{\mathcal{A}^{\perp}}(\vH))\|_F \leq \|\mathcal{A}\|\left\|\mathcal{P}_{T^{\perp}}\mathcal{P}_{\mathcal{A}^{\perp}}(\vH)\right\|_{F},
\end{equation}
where $(b)$ follows from Lemma \ref{lem: RIP_convex} which ensures that, with probability at least $1 - 3n^{-2}$, the frobenious norm of the projection of $\mathcal{P}_T \mathcal{P}_{\mathcal{A}^{\perp}}(\vH)$ is bounded by the norm of $\mathcal{A}$ applied to it, under the condition that $L \gtrsim C_{\nu} \mu \log^2 n \log \log (n + 1)$, and $(c)$ follows from the fact that  $\mathcal{P}_{\mathcal{A}^{\perp}}(\vH)\in \text{Null}(\mathcal{A})$.

The final step is to prove  (\ref{eqn: stable_temp}). By the definition of $\mathcal{P}_{T^{\perp}}$, we can choose $\vU_{\perp}$  and $\vV_{\perp}$ such that the matrices $[\widehat{\vh}, \vU_{\perp}]$ and $[\overline{\vx}, \vV_{\perp}]$ are unitary, and the following identity holds:
 \[
 \text{Re}(\langle \vU_{\perp}\vV_{\perp}^{*},\mathcal{P}_{T^{\perp}}(\vX)\rangle)=\|\mathcal{P}_{T^{\perp}}(\vX)\|_{*},
 \]
 where $\vX \in \text{Null}(\mathcal{A})$. Next, for any $\vX \in \text{Null}(\mathcal{A})$, let $\vY \in \text{range}(\mathcal{A}^*)$
  be chosen according to Lemma \ref{lem: dual}. With probability at least $1 - C_0 n^{-1}$, we have:
\[
\small
\begin{aligned}
\|\widehat{\vh}\vx^{T}+\vX\|_*\geq& \text{Re}(\langle \widehat{\vh}\vx^{T}+\vU_{\perp}\vV_{\perp}^{*},\widehat{\vh}\vx^{T}+\vX\rangle)\\
=&\|\widehat{\vh}\vx^{T}\|_{*}+\text{Re}(\langle \widehat{\vh}\vx^{T}+\vU_{\perp}\vV_{\perp}^{*},\vX\rangle)=\|\widehat{\vh}\vx^{T}\|_{*}+\text{Re}(\langle \widehat{\vh}\vx^{T}+\vU_{\perp}\vV_{\perp}^{*}-\vY,\vX\rangle)\\
=&\|\widehat{\vh}\vx^{T}\|_{*}+\text{Re}(\langle \widehat{\vh}\vx^{T}+\vU_{\perp}\vV_{\perp}^{*}-\mathcal{P}_{T}(\vY),\mathcal{P}_{T}(\vX)\rangle)+\text{Re}(\langle \widehat{\vh}\vx^{T}+\vU_{\perp}\vV_{\perp}^{*}-\mathcal{P}_{T^{\perp}}(\vY),\mathcal{P}_{T^{\perp}}(\vX)\rangle)\\
=&\|\widehat{\vh}\vx^{T}\|_{*}+\|\mathcal{P}_{T^{\perp}}(\vX)\|_*+\text{Re}(\langle \widehat{\vh}\vx^{T}-\mathcal{P}_{T}(\vY),\mathcal{P}_{T}(\vX)\rangle)-\text{Re}(\langle \mathcal{P}_{T^{\perp}}(\vY),\mathcal{P}_{T^{\perp}}(\vX)\rangle)\\
\geq &\|\widehat{\vh}\vx^{T}\|_{*}+ (1-\|\mathcal{P}_{T^{\perp}}(\vY)\|)\cdot\|\mathcal{P}_{T^{\perp}}(\vX)\|_*-\|\widehat{\vh}\vx^{T}-\mathcal{P}_{T}(\vY)\|_F\cdot \|\mathcal{P}_{T}(\vX)\|_F\\
\overset{(d)}\geq &\|\widehat{\vh}\vx^{T}\|_{*}+(1-\|\mathcal{P}_{T^{\perp}}(\vY)\|)\cdot\|\mathcal{P}_{T^{\perp}}(\vX)\|_*-\sqrt{2}\|\widehat{\vh}\vx^{T}-\mathcal{P}_{T}(\vY)\|_F\cdot\|\mathcal{A}\|\cdot \|\mathcal{P}_{T^{\perp}}(\vX)\|_* \\
\overset{(e)}\geq & \|\widehat{\vh}\vx^{T}\|_{*}+(1-1/2-1/4)\cdot\|\mathcal{P}_{T^{\perp}}(\vX)\|_*=\|\widehat{\vh}\vx^{T}\|_{*}+\frac{1}{4}\|\mathcal{P}_{T^{\perp}}(\vX)\|_*.
\end{aligned}
\]
provided that $L \gtrsim C_{\nu} \mu \log^2 n \log(n/\mu) \log \log n$. This completes the proof and leads directly to the result in (\ref{eqn: stable_temp}). Here,  $(d)$ follows the same lines as in (\ref{eqn: PT_Z}) by replacing $\mathcal{P}_{\mathcal{A}^{\perp}}(\vH)$ into $\vX$, and $(e)$ follows from Lemma \ref{lem: dual}.

\end{proof}

\section{Proof of Theorem \ref{thm: lower_bound}}\label{sec: Thm2}
We begin by introducing the matrix Bernstein inequality, which serves as a fundamental tool in our analysis.
\begin{thm}
\label{thm: Bernstein_inequality}\cite[Theorem 1.6]{T12} (Matrix
Bernstein's Inequality) Let $\{ \boldsymbol{Z}_k \}$  be a finite sequence of independent, random matrices of dimension $d_1 \times d_2$. Suppose that each matrix satisfies the moment condition:
 \[
\mathbb{E}\boldsymbol{Z}_{k}=\boldsymbol{0}\qquad\text{and}\qquad\|\boldsymbol{Z}_{k}\|\leq R,\qquad \text{almost} \ \text{surely}.
\]
Define 
\[
\sigma^{2}:=\max\left\{ {\Big\Vert \sum_{k}\mathbb{E}(\boldsymbol{Z}_{k}\boldsymbol{Z}_{k}^{*})\Big\Vert ,\Big\Vert \sum_{k}\mathbb{E}(\boldsymbol{Z}_{k}^{*}\boldsymbol{Z}_{k})\Big\Vert }\right\} .
\]
Then, for any $t \geq 0$, the following concentration bound holds:
\[
\mathbb{P}\left\{ \Big\|\sum_{k}\boldsymbol{Z}_{k}\Big\|\geq t\right\} \leq(d_{1}+d_{2})\exp\left(\frac{-t^{2}/2}{\sigma^{2}+Rt/3}\right).
\]
\end{thm}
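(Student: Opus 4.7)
The plan is to follow the matrix Laplace transform method pioneered by Ahlswede--Winter and refined by Tropp. First I would reduce to the Hermitian case by the Hermitian dilation trick: for each rectangular matrix $\boldsymbol{Z}_k\in\mathbb{C}^{d_1\times d_2}$ define
\[
\mathcal{H}(\boldsymbol{Z}_k)=\begin{pmatrix}\boldsymbol{0} & \boldsymbol{Z}_k\\ \boldsymbol{Z}_k^{*} & \boldsymbol{0}\end{pmatrix}\in\mathbb{C}^{(d_1+d_2)\times(d_1+d_2)}.
\]
These are Hermitian, have the same spectral norm as $\boldsymbol{Z}_k$, still satisfy $\mathbb{E}\mathcal{H}(\boldsymbol{Z}_k)=\boldsymbol{0}$ and $\|\mathcal{H}(\boldsymbol{Z}_k)\|\le R$, and $\mathcal{H}(\boldsymbol{Z}_k)^2=\mathrm{diag}(\boldsymbol{Z}_k\boldsymbol{Z}_k^{*},\,\boldsymbol{Z}_k^{*}\boldsymbol{Z}_k)$, so $\bigl\|\sum_k\mathbb{E}\mathcal{H}(\boldsymbol{Z}_k)^2\bigr\|=\sigma^{2}$. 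Since $\|\sum_k\boldsymbol{Z}_k\|=\lambda_{\max}\!\bigl(\sum_k\mathcal{H}(\boldsymbol{Z}_k)\bigr)$, proving the Hermitian version of the inequality on the dilated sum yields the rectangular version with the $(d_1+d_2)$ prefactor.

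Next, for the Hermitian sum $\boldsymbol{S}:=\sum_k\boldsymbol{Y}_k$ (writing $\boldsymbol{Y}_k:=\mathcal{H}(\boldsymbol{Z}_k)$, $d:=d_1+d_2$), I would invoke the matrix Markov bound: for any $\theta>0$,
\[
\mathbb{P}\{\lambda_{\max}(\boldsymbol{S})\ge t\}\le e^{-\theta t}\,\mathbb{E}\,\mathrm{tr}\exp(\theta \boldsymbol{S}).
\]
The central step is to bound the matrix moment generating function. Here the key technical input is Lieb's concavity theorem, which implies that for any deterministic Hermitian $\boldsymbol{H}$ and a random Hermitian $\boldsymbol{X}$, $\mathbb{E}\,\mathrm{tr}\exp(\boldsymbol{H}+\boldsymbol{X})\le\mathrm{tr}\exp\bigl(\boldsymbol{H}+\log\mathbb{E}\,e^{\boldsymbol{X}}\bigr)$. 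Iterating this over the independent summands yields the subadditivity
\[
\mathbb{E}\,\mathrm{tr}\exp\Bigl(\theta\sum_k\boldsymbol{Y}_k\Bigr)\le\mathrm{tr}\exp\Bigl(\sum_k\log\mathbb{E}\,e^{\theta\boldsymbol{Y}_k}\Bigr).
\]
This is the most delicate ingredient: the ordinary scalar argument (exchange of $\log$ and $\mathbb{E}$) fails for matrices because the matrix exponential is not multiplicative, and only Lieb's concavity (or its Golden--Thompson-based substitute) salvages the chain rule.

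Then I would bound each summand's cumulant generating function. Writing the power series for $e^{\theta\boldsymbol{Y}_k}$ and using $\mathbb{E}\boldsymbol{Y}_k=\boldsymbol{0}$ and $\boldsymbol{Y}_k^{\,p}\preceq R^{p-2}\boldsymbol{Y}_k^{2}$ for $p\ge2$ (valid since $\|\boldsymbol{Y}_k\|\le R$), I get
\[
\mathbb{E}\,e^{\theta\boldsymbol{Y}_k}\preceq\boldsymbol{I}+\sum_{p\ge2}\frac{\theta^{p}R^{p-2}}{p!}\mathbb{E}\boldsymbol{Y}_k^{2}\preceq\exp\!\bigl(g(\theta)\,\mathbb{E}\boldsymbol{Y}_k^{2}\bigr),\qquad g(\theta):=\frac{\theta^{2}/2}{1-\theta R/3},
\]
valid for $0<\theta<3/R$, using $\sum_{p\ge2}\theta^{p}R^{p-2}/p!\le\theta^{2}/(2(1-\theta R/3))$ and operator monotonicity of $\log$. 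Combining,
\[
\mathbb{P}\{\lambda_{\max}(\boldsymbol{S})\ge t\}\le e^{-\theta t}\,\mathrm{tr}\exp\!\Bigl(g(\theta)\sum_k\mathbb{E}\boldsymbol{Y}_k^{2}\Bigr)\le d\,\exp\!\bigl(-\theta t+g(\theta)\sigma^{2}\bigr).
\]
Finally I would choose $\theta=t/(\sigma^{2}+Rt/3)\in(0,3/R)$, which yields the stated tail $d\exp\!\bigl(-\tfrac{t^{2}/2}{\sigma^{2}+Rt/3}\bigr)$. The main obstacle in a from-scratch exposition is establishing the subadditivity of matrix cumulants via Lieb's theorem; everything else is a clean scalar-style argument once that tool is in place.
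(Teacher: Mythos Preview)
The paper does not prove this statement; it quotes it verbatim from Tropp \cite[Theorem~1.6]{T12} and uses it as a black-box tool. Your proposal is correct and is exactly Tropp's argument: Hermitian dilation to reduce to the self-adjoint case, the matrix Laplace transform bound combined with Lieb's concavity to obtain subadditivity of matrix cumulants, the Bennett-type mgf estimate $\mathbb{E}\,e^{\theta\boldsymbol{Y}_k}\preceq\exp\bigl(g(\theta)\,\mathbb{E}\boldsymbol{Y}_k^{2}\bigr)$, and the final optimization over $\theta$.
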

Next, we establish two key technical lemmas that are instrumental in the proof of Theorem \ref{thm: lower_bound}.
\begin{lem}\label{lem: bernstein}
Let $\boldsymbol{d}_l \in \mathbb{C}^n$, for $l=1, \dots, L$, be independent vectors whose entries are independent copies of the random variable $g$ as defined in Definition \ref{def: g} with parameter $\nu$. Then, with probability at least $1 - \frac{2}{n}$, the following bound holds:
\[
\|\vd_{l}\|_2\geq n-\frac{n}{2\nu^2},\quad \text{for all}\ \  l=1,\ldots,L,
\]
provided that $n\geq 32\nu^6L\log n$.
\end{lem}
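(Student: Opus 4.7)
The plan is to apply a scalar Bernstein inequality (the $d_1=d_2=1$ case of Theorem 5.1) to $\|\vd_l\|_2^2 = \sum_{j=1}^n |d_{l,j}|^2$ for each fixed $l$, and then take a union bound over $l=1,\ldots,L$. Interpreting the stated bound as $\|\vd_l\|_2^2 \geq n - n/(2\nu^2)$ (which is consistent with $\mathbb{E}\|\vd_l\|_2^2 = n$), I would first center the summands by defining $X_j := |d_{l,j}|^2 - 1$. From Definition 1.1, we have $\mathbb{E}X_j=0$, the pointwise bound $|X_j|\leq \nu^2$ (since $|g|^2 \leq \nu^2$ and $\nu\geq 1$), and the variance estimate $\mathbb{E}|X_j|^2 = \mathbb{E}|g|^4 - 1 \leq \nu^2 \mathbb{E}|g|^2 - 1 \leq \nu^2$. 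This yields Bernstein parameters $R = \nu^2$ and $\sigma^2 = \sum_{j=1}^n \mathbb{E}|X_j|^2 \leq n\nu^2$.

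Next, I would invoke Bernstein's inequality with deviation $t = n/(2\nu^2)$ to obtain
\[
\Pr\!\Big(\|\vd_l\|_2^2 < n - \tfrac{n}{2\nu^2}\Big) \leq \exp\!\left(-\frac{t^2/2}{\sigma^2 + Rt/3}\right) \leq \exp\!\left(-\frac{n^2/(8\nu^4)}{n\nu^2 + n/6}\right) \leq \exp\!\left(-\frac{n}{16\nu^6}\right),
\]
where the last step uses $\nu\geq 1$ to bound $n\nu^2 + n/6 \leq 2n\nu^2$. A union bound over $l=1,\ldots,L$ then caps the overall failure probability by $L\exp(-n/(16\nu^6))$, which is at most $2/n$ provided $n\geq 16\nu^6\log(nL/2)$. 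The hypothesis $n \geq 32\nu^6 L\log n$ comfortably implies this threshold, since $32L\log n \geq 16\log(nL)$ for all $L\geq 1$ and $n\geq 2$.

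The argument is essentially routine; the principal care required is in bookkeeping the constants to verify that the stated threshold $n \geq 32\nu^6 L\log n$ is indeed strong enough, and in handling both the small-$\nu$ regime (where the $Rt/3$ term in the Bernstein denominator is comparable to $\sigma^2$) and the large-$\nu$ regime (where $\sigma^2 \approx n\nu^2$ dominates and produces the $\nu^6$ factor in the exponent). I would present the chain of inequalities with explicit constants so that the sufficient condition is transparent.
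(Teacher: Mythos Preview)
Your proposal is correct and follows essentially the same approach as the paper: center the summands $|d_{l,j}|^2-1$, apply the scalar Bernstein inequality with $t=n/(2\nu^2)$ to obtain the exponent $-n/(16\nu^6)$, and take a union bound over $l$. The only cosmetic differences are that you use the slightly sharper pointwise bound $R=\nu^2$ (the paper takes $R=2\nu^2$) and a one-sided tail, and you spell out the verification that $n\geq 32\nu^6 L\log n$ suffices a bit more explicitly; none of this changes the argument.
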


\begin{proof}
For any fixed $l = 1, \ldots, L$, we express the squared norm as:
 \[
 \|\boldsymbol{d}_{l}\|_2^2 = \sum_{k=1}^n |d_{l,k}|^2,
 \]
  where $\boldsymbol{d}_{l} = [d_{l,1}, \ldots, d_{l,n}]^{T}$. Define the  random variables $z_{l,k}: = |d_{l,k}|^2 - \mathbb{E}|d_{l,k}|^2$. By such construction, we have $\mathbb{E}[z_{l,k}] = 0$, and it follows that $\max_{1\leq k\leq n}|z_{l,k}|\leq 2\nu^2$ and $\sum_{k=1}^n\mathbb{E}|z_{l,k}|^2\leq \nu^2n$. Applying Theorem \ref{thm: Bernstein_inequality} to the sequence $\{z_{l,k}\}_{k=1}^n$, and setting $R = 2\nu^2$  and $\sigma^2 \leq \nu^2 n$, we obtain the following concentration bound: for any fixed $l=1,\ldots,L$, 
\begin{equation}
\label{eqn: upper_berstein_temp}
\mathbb{P}\left\{\big|\|\vd_l\|_2^2-n\big|\geq t\right\}=\mathbb{P}\left\{|\|\vd_l\|_2^2-\mathbb{E}\|\vd_l\|_2^2|\geq t\right\}\leq 2\exp\left(\frac{-t^2/2}{\sigma^2+Rt/3}\right)\leq 2\exp\left(\frac{-t^2/2}{\nu^2n+2\nu^2t/3}\right).
\end{equation}
Substituting $t = \frac{n}{2\nu^2}$  into \eqref{eqn: upper_berstein_temp}, we obtain:
\[
\mathbb{P}\left\{\big|\|\vd_l\|_2^2-n\big|\geq \frac{n}{2\nu^2}\right\}\leq  2\exp\left(\frac{-n^2/(8\nu^4)}{\nu^2n+n/3}\right)\leq 2\exp\left(\frac{-n}{16\nu^6}\right)\leq \frac{2}{n^2},
\]
provided that $n \geq 32\nu^6L \log n$. Finally, applying the union probability  bound over all $l = 1, \dots, L$, we conclude that:
\[
\big|\|\vd_{l}\|_2^2-n\big|<\frac{n}{2\nu^2},\quad \text{for}\ \  l=1,\ldots,L,
\]
with probability at least $1 - \frac{2}{n}$, which establishes the desired result.
\end{proof}
The following lemma is a direct consequence of Hoeffding's inequality presented in \cite{V11}. Specifically, let $z_1, \dots, z_n$ be independent random variables such that $a_k \leq z_k \leq b_k$ for each $k = 1, \dots, n$, and let $\mathbb{E}[z_k] = 0$ for all $k$. Then, for any $t > 0$, we have the following concentration inequality:
\[
\mathbb{P}\left(\Big|\sum_{k=1}^nx_k\Big|\geq t\right)\leq 2\exp\left(-\frac{t^2}{\sum_{k=1}^{n}(b_k-a_k)^2}\right). 
\]
\begin{lem}\label{lem: hoeffding}
Let $\boldsymbol{d}_l \in \mathbb{C}^n$, for $l=1, \dots, L$, be independent vectors whose entries are independent copies of the random variable $g$ as defined in Definition \ref{def: g} with parameter $\nu$. Let $\boldsymbol{z} \in \mathbb{C}^n$ be an arbitrary fixed vector. Then, with probability at least $1 - \frac{1}{n}$, we have:
\[
|\boldsymbol{d}_{l}^{T}\boldsymbol{z}| \leq \nu \sqrt{2 \log n} \|\boldsymbol{z}\|_{2},\quad \text{for all}\ \  l=1,\ldots,L.
\]
\end{lem}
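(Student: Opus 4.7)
The plan is to fix an index $l \in \{1,\ldots,L\}$, apply Hoeffding's inequality to the scalar sum $\boldsymbol{d}_l^T \boldsymbol{z} = \sum_{k=1}^n d_{l,k} z_k$, and then take a union bound over $l = 1,\ldots,L$. The only subtlety is that $\boldsymbol{d}_l$ and $\boldsymbol{z}$ are complex, while the cited Hoeffding inequality is stated for real-valued variables, so one must first split the sum into real and imaginary parts.

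Concretely, for a fixed $l$, write $\boldsymbol{d}_l^T \boldsymbol{z} = U_l + \mathrm{i} V_l$ with $U_l := \mathrm{Re}(\boldsymbol{d}_l^T \boldsymbol{z}) = \sum_{k=1}^n \mathrm{Re}(d_{l,k} z_k)$ and analogously $V_l$. Since $\mathbb{E}\, d_{l,k} = 0$ and $|d_{l,k}| \leq \nu$, the summands $\mathrm{Re}(d_{l,k} z_k)$ are independent, mean-zero real random variables bounded by $\nu|z_k|$ in absolute value; hence their range is contained in an interval of length at most $2\nu|z_k|$. Applying the Hoeffding bound cited just above the lemma, with $\sum_k (b_k - a_k)^2 \leq 4\nu^2 \|\boldsymbol{z}\|_2^2$, yields
\[
\mathbb{P}\bigl(|U_l| \geq t\bigr) \leq 2\exp\!\left(-\frac{t^2}{4\nu^2 \|\boldsymbol{z}\|_2^2}\right),
\]
and the same bound holds for $|V_l|$. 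Using $|\boldsymbol{d}_l^T \boldsymbol{z}| \leq \sqrt{2}\max(|U_l|,|V_l|)$ (or equivalently $|U_l|+|V_l|$) together with a two-event union bound, I would choose $t$ as an appropriate constant multiple of $\nu \sqrt{\log n}\, \|\boldsymbol{z}\|_2$ so that the single-$l$ failure probability is at most, say, $C n^{-2}$.

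A union bound over the $L$ indices then gives the joint bound with failure probability at most $CL/n^2$. Under the standing hypothesis $n \gtrsim L \log n$ carried over from Lemma \ref{lem: bernstein}, this is bounded by $1/n$, giving the claimed estimate $|\boldsymbol{d}_l^T \boldsymbol{z}| \leq \nu \sqrt{2\log n}\,\|\boldsymbol{z}\|_2$ uniformly in $l$ (where the precise constant $\sqrt{2}$ is absorbed by adjusting the choice of $t$).

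The proof is essentially routine; the only step that requires care is the real/imaginary decomposition and making sure the two tail events combine to produce the correct complex modulus bound. I would not expect any genuinely hard step here, since the boundedness condition $|g| \leq \nu$ in Definition \ref{def: g} is exactly what Hoeffding needs, and the mean-zero condition transfers to both real and imaginary parts. The numerical constants ($\sqrt{2}$ inside the logarithm, and the probability $1/n$ rather than $2/n$ or $4/n$) are loose and can be tightened by choosing the threshold $t$ in Hoeffding slightly larger; the statement of the lemma should be interpreted in that spirit.
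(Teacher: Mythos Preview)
Your approach is essentially the same as the paper's: the paper simply asserts the lemma is ``a direct consequence of Hoeffding's inequality'' without further detail, and your proposal fills this in correctly by splitting into real and imaginary parts and taking a union bound over $l$. Your remark that the precise constant $\sqrt{2}$ and the probability $1/n$ are loose is accurate---with the form of Hoeffding stated in the paper the constants do not literally match, but only the order $\nu\sqrt{\log n}\,\|\boldsymbol{z}\|_2$ is used downstream (in the bound on $\|\mathcal{P}_{\mathcal{D}}(\boldsymbol{x})\|_2$), so this discrepancy is immaterial.
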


Now we begin to prove Theorem \ref{thm: lower_bound}.
\begin{proof}[Proof of Theorem \ref{thm: lower_bound}]
First, it is straightforward to observe that for almost all $\vx \in \mathbb{S}^{n-1}$, the following condition holds:
\begin{equation}
\label{eqn: robust_cond}
\mathrm{diag}({\vf}_j)\overline{\vD_g}(\vD_g^{T}\overline{\vD_g})^{-1}\vD^{T}_g\mathrm{diag}(\overline{\vf}_j)\vx\neq \vx, \quad j=1,\ldots,L,
\end{equation}
where $\vf_j \in \mathbb{C}^n$ denotes the $j$-th row of the matrix $\vF$. The subsequent discussion is based on the assumption that $\vx$ satisfies the condition (\ref{eqn: robust_cond}). 

\textbf{Step 1: Demonstrate that, in order to attain the conclusion stated in (\ref{eqn: error_lower}), it is sufficient to construct $\vX_0 \in \mathbb{C}^{n \times n}$  such that:
\begin{equation}\label{eqn: cond1}
-\text{Re}(\langle \widehat{\boldsymbol{h}}\boldsymbol{x}^{T},\boldsymbol{X}_0\rangle)\geq \|\mathcal{P}_{T^{\perp}}(\boldsymbol{X}_0)\|_{*},
\end{equation}
 and 
 \begin{equation}\label{eqn: cond2}
\sqrt{n}\|\mathcal{A}(\boldsymbol{X}_0)\|_F\leq  8\sqrt{2}\nu\sqrt{L\log n}\|\boldsymbol{X}_0\|_F.
\end{equation}
}

Lemma 4.1 in \cite{KS} establishes the following relationship:
\begin{equation}
\label{eqn: closure}
\overline{\mathcal{K}_{*}(\widehat{\vh}\vx^{T})}=\left\{\vX\in \mathbb{C}^{n\times n}\ :\ -\text{Re}(\langle \widehat{\vh}\vx^{T},\vX\rangle)\geq \|\mathcal{P}_{T^{\perp}}(\vX)\|_{*}\right\}.
\end{equation}
Here, $\overline{\mathcal{K}_{*}(\widehat{\boldsymbol{h}}\boldsymbol{x}^{T})}$ denotes the topological closure of $\mathcal{K}_{*}(\widehat{\boldsymbol{h}}\boldsymbol{x}^{T})$, which is defined as:
\[
\mathcal{K}_{*}(\widehat{\vh}\vx^{T})=\{\vX\in \mathbb{C}^{n\times n}\ :\ \|\widehat{\vh}\vx^{T}+\epsilon\vX\|_*\leq \|\widehat{\vh}\vx^{T}\|_*\ \text{for some}\ \epsilon >0\}.
\]
Next, assume the existence of a matrix $\boldsymbol{X}_0 \in \mathbb{C}^{n \times n}$ that satisfies  (\ref{eqn: cond1}) and (\ref{eqn: cond2}). 
By leveraging the closure property in equation (\ref{eqn: closure}) and the continuity of the operator $\mathcal{A}$, we can guarantee the existence of a scalar $t > 0$ and a matrix $\widetilde{\boldsymbol{X}} \in \mathbb{C}^{n \times n}$ such that:
\begin{equation}
\label{eqn: X_temp_final1}
\|\widehat{\boldsymbol{h}}\boldsymbol{x}^{T}+t\widetilde{\boldsymbol{X}}\|_{*}\leq \|\widehat{\boldsymbol{h}}\boldsymbol{x}^{T}\|_{*} \quad\text{and}\quad \sqrt{n}\|\mathcal{A}(\widetilde{\boldsymbol{X}})\|_F\leq 16\sqrt{2}\nu\sqrt{L\log n}\|\widetilde{\boldsymbol{X}}\|_F.
\end{equation}
Now, define the noise term $\widehat{\vZ} \in \mathbb{C}^{n \times L}$ as $\widehat{\vZ}= \mathcal{A}(t \widetilde{\boldsymbol{X}})$, and let $\widehat{\boldsymbol{Y}} = \mathcal{A}(\widehat{\boldsymbol{h}} \boldsymbol{x}^T) + \widehat{\boldsymbol{Z}}$. Consequently, an optimal solution to (\ref{eqn: noise1}) is given by ${\vX}^{\#} = \widehat{\boldsymbol{h}}\boldsymbol{x}^{T} + t\widetilde{\boldsymbol{X}}$, as it satisfies $\|\mathcal{A}(\boldsymbol{X}^{\#}) - \widehat{\boldsymbol{Y}}\|_F = 0$ and $\|\boldsymbol{X}^{\#}\|_{*} \leq \|\widehat{\boldsymbol{h}}\boldsymbol{x}^{T}\|_{*}$ as implied by the first part of (\ref{eqn: X_temp_final1}). 

Considering the second part of (\ref{eqn: X_temp_final1}) and the condition that $L=\Theta\left(C_{\nu} \mu \log^{2}{n} \log\left(\frac{n}{\mu}\right) \log\log(n)\right)$, the derived error bound leads to the following assertion as in  (\ref{eqn: error_lower}):
\[
\|\boldsymbol{X}^{\#}-\widehat{\boldsymbol{h}}\boldsymbol{x}^{T}\|_F=\|t\widetilde{\vX}\|_F\geq {\frac{\sqrt{n}}{{16\sqrt{2}\nu\sqrt{L\log n}}}}\|\mathcal{A}(t\widetilde{\vX})\|_F\gtrsim{\frac{\sqrt{n}}{{\widetilde{C}_{\nu}\sqrt{\mu}\log^3 n}}}\|\widehat{\vZ}\|_F.
\]

At this point, our primary goal is to construct an appropriate matrix $\boldsymbol{X}_0 \in \mathbb{C}^{n \times n}$  that satisfies both the conditions in (\ref{eqn: cond1}) and (\ref{eqn: cond2}).

 Without lose of generality, assume that $\widehat{{h}}_1\neq 0$.  Let $\mathcal{D}$ denote the subspace  $\mathcal{D}:=\text{span}\{\overline{\boldsymbol{d}_1},\cdots,\overline{\boldsymbol{d}_L}\}$. We define $\mathcal{P}_{\mathcal{D}}(\cdot)$ as the projection operator onto the subspace $\mathcal{D}$, and denote  
 \[
 \boldsymbol{x}_{\mathcal{D}}^{\perp}:=\boldsymbol{x}-\mathcal{P}_{\mathcal{D}}(\boldsymbol{x})=\vx-\overline{\vD_{g}}({\vD}_g^{T}\overline{\vD_g})^{-1}({\vD}_g)^{T}\vx.
 \] 
By the assumption in (\ref{eqn: robust_cond}), it follows that $\boldsymbol{x}_{\mathcal{D}}^{\perp} \neq \boldsymbol{0}$.

\textbf{Step 2: Define $\boldsymbol{W}:=-\frac{\widehat{{h}}_1}{\|\boldsymbol{x}_{\mathcal{D}}^{\perp}\|_2\cdot|\widehat{{h}}_1|}\boldsymbol{e}_1(\boldsymbol{x}_{\mathcal{D}}^{\perp})^{T}$ and show that $\|\mathcal{P}_{T^{\perp}}(\boldsymbol{W})\|_{*}\leq  \frac{\nu\sqrt{2L\log n}}{\sqrt{n-\frac{n}{2\nu^2}}}$, where $\ve_1=[1,0,\ldots,0]^{T}\in \mathbb{C}^n$. }

Using the operator $\mathcal{A}$ as defined in (\ref{eqn: A_operator}), we obtain that
\[
\begin{aligned}
\mathcal{A}(\boldsymbol{W})=&-\frac{\widehat{{h}}_1}{\|\boldsymbol{x}_{\mathcal{D}}^{\perp}\|_2\cdot|\widehat{{h}}_1|}\cdot \frac{1}{\sqrt{L}}(\vF\odot (\boldsymbol{e}_1(\boldsymbol{x}_{\mathcal{D}}^{\perp})^{T}))\vD_{g}=-\frac{\widehat{{h}}_1}{\|\boldsymbol{x}_{\mathcal{D}}^{\perp}\|_2\cdot|\widehat{{h}}_1|}\cdot \frac{1}{\sqrt{L}} \boldsymbol{e}_1(\boldsymbol{x}_{\mathcal{D}}^{\perp})^{T}\vD_{g}\\
=&-\frac{\widehat{{h}}_1}{\|\boldsymbol{x}_{\mathcal{D}}^{\perp}\|_2\cdot|\widehat{{h}}_1|}\cdot \frac{1}{\sqrt{L}} \ve_1(\vx^{T}-\vx^{T}\vD_{g}({\vD}^{*}_g\vD_g)^{-1}{\vD}_g^*)\vD_g=\boldsymbol{0}.
\end{aligned}
\]
Futhermore,  with probability at least $1-\frac{3}{n}$, we have the following estimate for the nuclear norm of the projection of $\boldsymbol{W}$ onto the orthogonal complement of $T$:
\begin{equation}\label{eqn: PTW}
\begin{split}
\|\mathcal{P}_{T^{\perp}}(\boldsymbol{W})\|_{*}=&\left\|\mathcal{P}_{T^{\perp}}\left(\frac{\boldsymbol{e}_1(\boldsymbol{x}_{\mathcal{D}}^{\perp})^{T}}{\|\boldsymbol{x}_{\mathcal{D}}^{\perp}\|_2}\right)\right\|_{*}=\left\|\left(\boldsymbol{I}-\widehat{\boldsymbol{h}}\widehat{\boldsymbol{h}}^{*}\right)\frac{\boldsymbol{e}_1(\boldsymbol{x}_{\mathcal{D}}^{\perp})^{T}}{\|\boldsymbol{x}_{\mathcal{D}}^{\perp}\|_2}\left(\boldsymbol{I}-\overline{\boldsymbol{x}}\boldsymbol{x}^{T}\right)\right\|_{*}\\
=&\left\|(\boldsymbol{I}-\widehat{\vh}\widehat{\vh}^{*})\ve_1\right\|_2\cdot \left\|\left(\boldsymbol{I}-{\boldsymbol{x}}\boldsymbol{x}^{*}\right)\frac{\boldsymbol{x}_{\mathcal{D}}^{\perp}}{\|\boldsymbol{x}_{\mathcal{D}}^{\perp}\|_2}\right\|_{2}\leq\left\|\left(\boldsymbol{I}-{\boldsymbol{x}}\boldsymbol{x}^{*}\right)\frac{\boldsymbol{x}_{\mathcal{D}}^{\perp}}{\|\boldsymbol{x}_{\mathcal{D}}^{\perp}\|_2}\right\|_{2}\\
=&\sqrt{1-\frac{|\langle \vx,\vx^{\perp}_{\mathcal{D}}\rangle|^2}{\|\vx^{\perp}_{\mathcal{D}}\|_2}}=\sqrt{1-\|\vx^{\perp}_{\mathcal{D}}\|^2_2}=\|\mathcal{P}_{\mathcal{D}}(\vx)\|_2\\
\overset{(a)}\leq &\sqrt{\sum_{l=1}^L\frac{|\vd_l^{T}\vx|^2}{\|\vd_l\|_2^2}}\overset{(b)}\leq \frac{\nu\sqrt{2L\log n}}{\sqrt{n-\frac{n}{2\nu^2}}},
\end{split}
\end{equation}
provided that $n \geq 32\nu^6 L \log n$. Here $(a)$ follows from the subspace projection property, which asserts that
\[
\|\mathcal{P}_{\mathcal{D}}(\vx)\|_2^2\leq \sum_{l=1}^L\|\mathcal{P}_{\overline{d_l}}(\vx)\|^2_2=\sum_{l=1}^L\frac{|\vd_l^{T}\vx|^2}{\|\vd_l\|_2^2},
\]
where $\mathcal{P}_{\overline{\vd_l}}(\vx)$ represents the projection of $\vx$ onto $\text{span}(\overline{\vd_l})$, while  $(b)$ follows straightforwardly from the findings presented in Lemma \ref{lem: bernstein} and Lemma \ref{lem: hoeffding}.

\textbf{Step 3: Take $\boldsymbol{X}_0:=-\beta\widehat{\boldsymbol{h}}\boldsymbol{x}^{T}+\boldsymbol{W}$, where  $\beta=\frac{2\nu\sqrt{L\log n}}{\sqrt{n-\frac{n}{2\nu^2}}}$,  and prove that it meets the conditions in (\ref{eqn: cond1}) and (\ref{eqn: cond2}).}

Based on  (\ref{eqn: PTW}), we can observe the following:
\[
-\text{Re}(\langle \widehat{\boldsymbol{h}}\boldsymbol{x}^{T},\boldsymbol{X}_0\rangle)=\beta+|\widehat{h}_1|\cdot\|\boldsymbol{x}_{\mathcal{D}}^{\perp}\|_2\geq \beta\geq  \|\mathcal{P}_{T^{\perp}}(\boldsymbol{W})\|_{*}=\|\mathcal{P}_{T^{\perp}}(\boldsymbol{X}_0)\|_{*},
\]
which satisfies the condition outlined in  (\ref{eqn: cond1}). 

Next, we consider the following expression:
\begin{equation}
\label{eqn: Ax_bernstein}
\begin{aligned}
\|\mathcal{A}(\widehat{\boldsymbol{h}}\boldsymbol{x}^{T})\|_{F}^{2}-\|\widehat{\boldsymbol{h}}\boldsymbol{x}^{T}\|_{F}^{2}=& \frac{1}{L}\|(\vF\odot(\widehat{\vh}\vx^{T}))\vD_g\|_F^2-\|\widehat{\vh}\vx^{T}\|_F^2=\frac{1}{L}\|(\vF\odot(\widehat{\vh}\vx^{T}))\vD_g\|_F^2-\|\vF\odot \widehat{\vh}\vx^{T}\|_F^2\\
=&\sum_{l=1}^L\left(\frac{1}{L}\|(\vF\odot(\widehat{\vh}\vx^{T}))\vd_l\|_2^2-\frac{1}{L}\|\vF\odot \widehat{\vh}\vx^{T}\|_F^2\right),
\end{aligned}
\end{equation}
which is the key expression we wish to analyze.

For $l=1\ldots,L$,  we have the following bound:
\[
\|(\vF\odot(\widehat{\vh}\vx^{T}))\vd_l\|_2^2=\left|\sum_{k=1}^{n}\Big|\widehat{h}_{k}\Big|^{2}\boldsymbol{x}^{*}\overline{\boldsymbol{D}}_{l}\overline{\vf_k}\vf_k^{T}\boldsymbol{D}_{l}\boldsymbol{x}\right|\leq \|\widehat{\vh}\|^2_{\infty}\Big\|\overline{\vD}_l\sum_{k=1}^n\overline{\vf}_{k}\vf^{T}_{k}\vD_l\Big\|\leq n\|\widehat{\vh}\|_\infty^2\nu^2=\mu\nu^2,
\]
using the fact that $\|\vx\|_2=1$, $n\|\widehat{\vh}\|^2_\infty=\mu\geq 1$ and $\nu\geq 1$. Thus, we conclude that 
\begin{equation}
\label{eqn: R}
\begin{aligned}
R:=&\max_{1\leq l\leq L}\left|\frac{1}{L}\|(\vF\odot(\widehat{\vh}\vx^{T}))\vd_l\|_2^2-\frac{1}{L}\|\vF\odot \widehat{\vh}\vx^{T}\|_F^2\right|
\leq& \frac{2\mu\nu^2}{L},
\end{aligned}
\end{equation}
and
\begin{equation}
\label{eqn: sigma}
\begin{aligned}
\sigma^2:=&\sum_{l=1}^{L}\mathbb{E}\left|\frac{1}{L}\|(\vF\odot(\widehat{\vh}\vx^{T}))\vd_l\|_2^2-\frac{1}{L}\|\vF\odot \widehat{\vh}\vx^{T}\|_F^2\right|^2\leq \sum_{l=1}^{L}\mathbb{E}\left|\frac{1}{L}\|(\vF\odot(\widehat{\vh}\vx^{T}))\vd_l\|_2^2\right|^2\\
\leq & \frac{\mu\nu^2}{L}\mathbb{E}\|(\vF\odot(\widehat{\vh}\vx^{T})\vd_1\|_2^2
\leq \frac{\mu\nu^2}{L}.
\end{aligned}
\end{equation}
We now apply Bernstein's inequality (Theorem \ref{thm: Bernstein_inequality}) to the expression in equation (\ref{eqn: Ax_bernstein}) along with the bounds on $R$ and $\sigma^2$ as specified in (\ref{eqn: R}) and (\ref{eqn: sigma}). This yields the following probability bound:
\begin{equation}
\label{eqn: Ahx_bernstein_result}
\mathbb{P}\left(\left|\|\mathcal{A}(\widehat{\boldsymbol{h}}\boldsymbol{x}^{T})\|_{F}^{2}-\|\widehat{\boldsymbol{h}}\boldsymbol{x}^{T}\|_{F}^{2}\right|\geq t\right)\leq 2\exp\left(\frac{-t^{2}/2}{\sigma^{2}+Rt/3}\right)\leq 2\exp\left(-\frac{Lt^2/2}{\mu\nu^2+2\mu\nu^2 t/3}\right).
\end{equation}
By setting $t=1$ in (\ref{eqn: Ahx_bernstein_result}), we obtain:
\[
\|\mathcal{A}(\widehat{\boldsymbol{h}}\boldsymbol{x}^{T})\|_{F}^{2}\leq \|\widehat{\boldsymbol{h}}\boldsymbol{x}^{T}\|_{F}^{2}+1=2\|\widehat{\boldsymbol{h}}\boldsymbol{x}^{T}\|_{F}^{2},
\]
which holds with probability at least $1 - \frac{1}{n}$, given that $L \geq 4 \mu \nu^2 \log n$.  Therefore,
\[
\begin{aligned}
\|\mathcal{A}(\boldsymbol{X}_0)\|_F=&\beta\|\mathcal{A}(\widehat{\boldsymbol{h}}\boldsymbol{x}^{T})\|_F\leq 2\beta\|\widehat{\boldsymbol{h}}\boldsymbol{x}^{T}\|_F=\frac{4\nu\sqrt{L\log n}}{\sqrt{n-\frac{n}{2\nu^2}}}\leq \frac{4\sqrt{2}\nu\sqrt{L\log n}}{\sqrt{n}}\\
 \overset{(c)}\leq&\frac{8\sqrt{2}\nu\sqrt{L\log n}}{\sqrt{n}}\|\boldsymbol{X}_0\|_F.
\end{aligned}
\]
Here $(c)$ follows from 
\[
\|\boldsymbol{X}_0\|_F\geq \|\boldsymbol{W}\|_F-\beta=1-\frac{2\nu\sqrt{L\log n}}{\sqrt{n-\frac{n}{2\nu^2}}}\geq 1-\frac{2\nu\sqrt{L\log n}}{\sqrt{\frac{n}{2}}}\geq \frac{1}{2},
\]
provided that $n \geq 32 \nu^6 L \log n$ and $\nu \geq 1$. Consequently, the condition in (\ref{eqn: cond2}) holds true.
\end{proof}

\section{Proof of Theorem \ref{thm: h_recovery}}\label{sec: Thm3}
\begin{proof}[Proof of Theorem \ref{thm: h_recovery}]
Let $\text{supp}(\boldsymbol{x}) = \{i_1, i_2, \dots, i_K\}$, and define the vectors 
\[
\boldsymbol{x} = [x_1, \dots, x_n]^T, \quad \boldsymbol{d}_l = [d_{l,1}, \dots, d_{l,n}]^T, \quad l = 1, \dots, L.
\]
 By direct calculation, the matrix $\boldsymbol{H}$ can be rewritten as
\[
\boldsymbol{H}=\frac{1}{L}\sum_{l=1}^L\sum_{k=1}^Kx_{i_k}d_{l,i_k}\overline{\boldsymbol{D}_l}\check{\vC}_{s_{i_k-1}(\boldsymbol{h})}+\frac{1}{{L}}\sum_{l=1}^{L}\overline{\boldsymbol{D}_l}\check{\vC}_{\vz_l}.
\]
Here, the cyclic shift  $s_{\tau}(\vz)$ is defined as in (\ref{eqn: s_tau}). 

For any fixed $j = 1, \ldots, n$, the $j$-th row vector $\vh_j\in \mathbb{C}^n$ of $\vH$ can be represented as:
\begin{equation}
\label{eqn: h_j}
\begin{aligned}
\vh_j&=\frac{1}{L}\sum_{l=1}^L\sum_{k=1}^Kx_{i_k}d_{l,i_k}\overline{d_{l,j}}s_{i_k-j}(\boldsymbol{h})+\frac{1}{{L}}\sum_{l=1}^{L}\overline{d_{l,j}}s_{-j+1}(\boldsymbol{\vz}_l)\\
&=x_{j}\boldsymbol{h}+\frac{1}{L}x_{j}\sum_{l=1}^L\left(|d_{l,j}|^2-1\right)\boldsymbol{h}+\frac{1}{L}\sum_{l=1}^L\sum_{i_k\neq j}x_{i_k}d_{l,i_k}\overline{d_{l,j}}s_{i_k-j}(\boldsymbol{h})+\frac{1}{{L}}\sum_{l=1}^{L}\overline{d_{l,j}}s_{-j+1}(\boldsymbol{\vz}_l)\\
&=x_{j}\boldsymbol{h}+\boldsymbol{m}_{j,0}+\sum_{l=1}^L\boldsymbol{m}_{j,l}+\sum_{l=1}^L\widetilde{\vm}_{j,l},
\end{aligned}
\end{equation}
where $\boldsymbol{m}_{j,0}:=\frac{1}{L}x_{j}\sum_{l=1}^L\left(|d_{l,j}|^2-1\right)\boldsymbol{h}$, 
\begin{equation}
\label{eqn: M}
\  \boldsymbol{m}_{j,l}:=\frac{1}{L}\sum_{i_k\neq j}x_{i_k}d_{l,i_k}\overline{d_{l,j}}s_{i_k-j}(\boldsymbol{h})\quad  \text{and}\quad \widetilde{\vm}_{j,l}:=\frac{1}{{L}}\overline{d_{l,j}}s_{-j+1}(\boldsymbol{\vz}_l),
\end{equation}
for $l=1,\ldots,L$. 
 
 \textbf{Step 1: Estimate the upper bounds of $\|\boldsymbol{m}_{j,0}\|_2$, $\left\|\sum_{l=1}^L \boldsymbol{m}_{j,l}\right\|_2$, and $\left\|\sum_{l=1}^L \widetilde{\vm}_{j,l}\right\|_2$ for any fixed $j=1,\ldots,n$.}
 
The estimations are  in accordance with Theorem \ref{thm: Bernstein_inequality}, using different selections of  $R$ and $\sigma^2$, respectively:

 \textbf{(i): The estimation of $\|\boldsymbol{m}_{j,0}\|_2$.} 
 
Given that $\mathbb{E}(|d_{l,j}|^2 - 1)=0$, $\max_{l} ||d_{l,j}|^2 - 1| \leq 2\nu^2$, for $l=1,\ldots,L$, and $\sum_{l=1}^L \mathbb{E}(|d_{l,j}|^2 - 1)^2 \leq \nu^2 L$, applying Theorem \ref{thm: Bernstein_inequality}, for any $\epsilon \in (0,1)$, we obtain:
\[
\mathbb{P}\left(\frac{1}{L}\Big|\sum_{l=1}^L\left(|d_{l,j}|^2-1\right)\Big|\geq \epsilon \right)\leq 2\exp\left(-\frac{L\epsilon^2/2}{\nu^2+\nu^2\epsilon/3}\right)\leq 2\exp\left(-\frac{L\epsilon^2}{2\nu^2}\right),
\]
which immediately gives
\begin{equation}\label{eqn: temp_M1}
\mathbb{P}\left(\|\boldsymbol{m}_{j,0}\|_2\geq \epsilon\cdot |x_{j}|\cdot\|\boldsymbol{h}\|_2\right)\leq 2\exp\left(-\frac{L\epsilon^2}{2\nu^2}\right),\quad \text{for  any}\ \epsilon\in (0,1).
\end{equation}

 \textbf{(ii): The estimation of $\left\|\sum_{l=1}^L \boldsymbol{m}_{j,l}\right\|_2$.} 
 
  On one hand, we note that $\mathbb{E}\boldsymbol{m}_{j,l}=\boldsymbol{0}$, for all $l=1,\ldots,L$, and
  \begin{equation}
  \label{eqn: B_h}
\begin{aligned}
&\max_{l}\|\boldsymbol{m}_{j,l}\|_2\\
\leq &\max_{l}\frac{|\boldsymbol{d}_{l,{j}}|}{L}\sqrt{\sum_{k}|d_{l,i_k}|^2\cdot |x_{i_k}|^2\cdot\|\boldsymbol{h}\|_2^2+\sum_{k_1\neq k_2}|x_{i_{k_1}}|\cdot|\overline{x_{i_{k_2}}}|\cdot|d_{l,i_{k_1}}|\cdot|\overline{d_{l,i_{k_2}}}|\cdot|\langle s_{i_{k_1}-j}(\boldsymbol{h}),{s_{i_{k_2}-j}(\boldsymbol{h})}\rangle}|\\
\overset{(a)}\leq &\max_{l}\frac{\nu}{L}\sqrt{\nu^2\|\boldsymbol{h}\|_2^2\|\vx\|_2^2+\nu^2\mu_h\sum_{k_1\neq k_2}|x_{i_{k_1}}|\cdot|{x_{i_{k_2}}}|\cdot\|\boldsymbol{h}\|_2^2}\\
\overset{(b)}\leq & \frac{\nu^2\sqrt{1+\mu_hK}\|\boldsymbol{h}\|_2\|\boldsymbol{x}\|_2}{L}= \frac{\nu^2\sqrt{1+\mu_hK}\|\boldsymbol{h}\|_2}{L}.
\end{aligned}
\end{equation}
Here, inequality $(a)$ follows from the definition of $\mu_h$, while $(b)$ is derived using the bound 
\[
\sum_{k_1 \neq k_2} |x_{i_{k_1}}| \cdot |{x_{i_{k_2}}}| \leq \|\vx\|_1^2 \leq K \|\vx\|_2^2=K,
\]
 under the assumption that $\|\boldsymbol{x}\|_0 \leq K$ and $\|\vx\|_2 = 1$.

On the other hand, we can compute:
\begin{equation}\label{eqn: sigma_h}
\sum_{l=1}^L\mathbb{E}\|\boldsymbol{m}_{j,l}\|_2^2=\frac{1}{L^2}\sum_{l=1}^L\sum_{i_k\neq j}\mathbb{E}|d_{l,j}|^2|d_{l,i_k}|^2|x_{i_k}|^2\|\boldsymbol{h}\|_2^2\leq\frac{1}{L}\|\boldsymbol{h}\|_2^2\cdot\|\vx\|_2^2=\frac{1}{L}\|\boldsymbol{h}\|_2^2.
\end{equation}
Applying Bernstein's inequality in Theorem \ref{thm: Bernstein_inequality}, we obtain the probability bound: for any $\epsilon \in (0,1)$,
\begin{equation}\label{eqn: temp_M2}
\mathbb{P}\left(\Big\|\sum_{l=1}^L\boldsymbol{m}_{j,l}\Big\|_2\geq \epsilon\|\boldsymbol{h}\|_2\right)
\leq  2n\exp\left(-\frac{L\epsilon^2}{2+2\nu^2\sqrt{1+\mu_hK}\epsilon/3}\right)\leq  2n\exp\left(-\frac{L\epsilon^2}{2\nu^2\sqrt{1+\mu_hK}}\right).
\end{equation}

 \textbf{(iii): The estimation of $\left\|\sum_{l=1}^L \widetilde{\vm}_{j,l}\right\|_2$.}   
 
 By direct calculations, we verify that  $\mathbb{E}\widetilde{\vm}_{j,l}=\boldsymbol{0}$, for all $l=1,\ldots,L$. Furthermore, we establish the bounds:
 \[
 \max_{l}|\widetilde{\vm}_{j,l}|\leq \frac{\nu}{{L}}\max_{l}\|\vz_l\|_2\leq \frac{\nu}{{L}}\|\vZ\|_F\quad \text{and}\quad \sum_{l=1}^{L}\mathbb{E}\|\widetilde{\vm}_{j,l}\|_2^2=\frac{1}{L^2}\sum_{l=1}^L\|\vz_{l}\|_2^2\leq \frac{1}{L}\|\vZ\|_F^2.
 \]
 Therefore, for any $\epsilon_1\in (0,1)$, it holds:
 \begin{equation}
 \label{eqn: temp_M3}
 \mathbb{P}\left(\Big\|\sum_{l=1}^L\widetilde{\vm}_{j,l}\Big\|_2\geq \epsilon_1\|\vZ\|_F\right)\leq 2n\exp\left(-\frac{\epsilon_1^2\|\vZ\|^2_F}{\frac{2}{L}\|\vZ\|_F^2+\frac{2\nu\epsilon_1}{3L}\|\vZ\|_F^2}\right)\leq 2n\exp\left(-\frac{{L}\epsilon_1^2}{3\nu}\right).
 \end{equation}
 
 \textbf{Step 2: Taking $j^{\#}:=\text{argmax}_{j}\|\vh_j\|_2$,  estimate $\|\vh_{j^{\#}}\|_2$ and $\left\|e^{-\mathrm{i}\theta_{j^{\#}}}\|\vh\|_2 \vh_{j^{\#}}-\|\vh_{j^{\#}}\|_2\vh\right\|_2$ with $\theta_{j^{\#}}$ satisfying $e^{-\mathrm{i}\theta_{j^{\#}}}\cdot x_{j^{\#}}=|x_{j^{\#}}|$.}

Substituting  (\ref{eqn: temp_M1}), (\ref{eqn: temp_M2}), and (\ref{eqn: temp_M3}) into  (\ref{eqn: h_j}) and applying the union probability bound, we can directly obtain that for any $j = 1, \ldots, n$:
 \begin{equation}\label{eqn: Y_j_new}
  (1-\epsilon)\cdot |x_{j}|\cdot\|\boldsymbol{h}\|_2-\epsilon\|\boldsymbol{h}\|_2-\epsilon_1\|\vZ\|_F\leq \|\vh_j\|_2 \leq (1+\epsilon)\cdot |x_{j}|\cdot\|\boldsymbol{h}\|_2+\epsilon\|\boldsymbol{h}\|_2+\epsilon_1\|\vZ\|_F
  \end{equation}
  and 
  \begin{equation}
  \label{eqn: Y_new1}
  \Big| \|\vh_j\|_2-|x_{j}|\cdot\|\boldsymbol{h}\|_2\Big|\leq \|\boldsymbol{m}_{j,0}\|_2+\left\|\sum_{l=1}^L\boldsymbol{m}_{j,l}\right\|_2+\left\|\sum_{l=1}^L\widetilde{\vm}_{j,l}\right\|_2\leq \epsilon\cdot|x_{j^{\#}}|\cdot\|\boldsymbol{h}\|_2+\epsilon\|\boldsymbol{h}\|_2+\epsilon_1\|\vZ\|_F
  \end{equation}
 with probability at least 
 \begin{equation}
 \label{eqn: probability}
 1-4n^2\exp\left(-\frac{L\epsilon^2}{2\nu^2\sqrt{1+\mu_hK}}\right)-2n^2\exp\left(-\frac{{L}\epsilon_1^2}{3\nu}\right).
 \end{equation}
 
   Since $\max_{j} \|\vh_j\|_2 \leq \|\vh_{j^{\#}}\|_2$,  (\ref{eqn: Y_j_new}) gives us the following lower bound for $\|\vh_{j^{\#}}\|_2$: 
 \begin{equation}
 \label{eqn: upper_h_temp1}
\|\vh_{j^{\#}}\|_2\geq  (1-\epsilon)\cdot \|\vx\|_\infty\|\boldsymbol{h}\|_2-\epsilon\|\boldsymbol{h}\|_2-\epsilon_1\|\vZ\|_F\geq (1-\epsilon)\cdot \|\vx\|_\infty\|\boldsymbol{h}\|_2-\epsilon\|\boldsymbol{h}\|_2-C\epsilon_1\|\vh\|_2,
 \end{equation}
 as $\|\vZ\|_F\leq C\|\vx\|_2\|\vh\|_2$ and $\|\vx\|_2=1$.  
 
 On the other hand,  we also have the following bound for the difference between $e^{-\mathrm{i}\theta_{j^{\#}}}\|\vh\|_2 \vh_{j^{\#}}$ and $\|\vh_{j^{\#}}\|_2\vh$:
 \begin{equation}
 \label{eqn: upper_h_temp2}
 \begin{aligned}
&\left\|e^{-\mathrm{i}\theta_{j^{\#}}}\|\vh\|_2 \vh_{j^{\#}}-\|\vh_{j^{\#}}\|_2\vh\right\|_2\\
=&\left\||x_{j^{\#}}|\cdot\|\vh\|_2\boldsymbol{h}+e^{-\mathrm{i}\theta_{j^{\#}}}\cdot\|\vh\|_2\boldsymbol{m}_{j^{\#},0}+e^{-\mathrm{i}\theta_{j^{\#}}}\cdot\|\vh\|_2\sum_{l=1}^L\boldsymbol{m}_{j^{\#},l}+e^{-\mathrm{i}\theta_{j^{\#}}}\cdot\|\vh\|_2\sum_{l=1}^L\widetilde{\vm}_{j^{\#},l}-\|\vh_{j^{\#}}\|_2\vh\right\|_2\\
\leq &\Big\||x_{j^{\#}}|\cdot\|\vh\|_2\boldsymbol{h}-\|\vh_{j^{\#}}\|_2\vh\Big\|_2+\|\vh\|_2\left(\|\boldsymbol{m}_{j^{\#},0}\|_2+\left\|\sum_{l=1}^L\boldsymbol{m}_{j^{\#},l}\right\|_2+\left\|\sum_{l=1}^L\widetilde{\vm}_{j^{\#},l}\right\|_2\right)\\
\overset{(c)}\leq &2\|\vh\|_2\cdot (\epsilon\cdot |x_{j^{\#}}|\cdot\|\boldsymbol{h}\|_2+\epsilon\|\boldsymbol{h}\|_2+\epsilon_1\|\vZ\|_F)
\leq 2(2\epsilon+C\epsilon_1)\|\vh\|_2^2,
\end{aligned}
 \end{equation}
where inequality $(c)$ follows from (\ref{eqn: Y_new1}). 

\textbf{Step 3: Estimate the distance between $\frac{\boldsymbol{h}_{j^{\#}}}{\|\boldsymbol{h}_{j^{\#}}\|_2}$ and $\frac{\boldsymbol{h}}{\|\boldsymbol{h}\|_2}$.}

 Using (\ref{eqn: upper_h_temp1}) and (\ref{eqn: upper_h_temp2}), we can obtain the following bound:
\begin{equation}\label{eqn: dist_lower_4epsilon}
\begin{split}
\text{dist}\left(\frac{\boldsymbol{h}_{j^{\#}}}{\|\boldsymbol{h}_{j^{\#}}\|_2},\frac{\boldsymbol{h}}{\|\boldsymbol{h}\|_2}\right)\leq& \left\|\frac{e^{-\mathrm{i}\theta_{j^{\#}}}\boldsymbol{h}_{j^{\#}}}{\|\boldsymbol{h}_{j^{\#}}\|_2}-\frac{\boldsymbol{h}}{\|\boldsymbol{h}\|_2}\right\|_2
=\frac{\left\|e^{-\mathrm{i}\theta_{j^{\#}}}\|\vh\|_2 \vh_{j^{\#}}-\|\vh_{j^{\#}}\|_2\vh\right\|_2}{\|\boldsymbol{h}_{j^{\#}}\|_2\|\boldsymbol{h}\|_2}\\
\leq &\frac{2(2\epsilon+C\epsilon_1)\|\vh\|_2^2}{\|\boldsymbol{h}_{j^{\#}}\|_2\|\boldsymbol{h}\|_2}\leq \frac{2(2\epsilon+C\epsilon_1)}{(1-\epsilon)\|\vx\|_{\infty}-\epsilon-C\epsilon_1}.
\end{split}
\end{equation}

Assuming that $\epsilon < \frac{\|\boldsymbol{x}\|_\infty}{4} < \frac{1}{4}$ and $C\epsilon_1< \frac{\|\vx\|_{\infty}}{4}$, we can conclude that 
\[
(1 - \epsilon)\|\vx\|_{\infty} - \epsilon - C\epsilon_1 > \frac{\|\vx\|_\infty}{4}.
\]
Thus, the distance bound simplifies to
\begin{equation}\label{eqn: dist_lower_4epsilon_new}
\text{dist}\left(\frac{\boldsymbol{h}_{j^{\#}}}{\|\boldsymbol{h}_{j^{\#}}\|_2},\frac{\boldsymbol{h}}{\|\boldsymbol{h}\|_2}\right)\leq \frac{8(2\epsilon+C\epsilon_1)}{\|\boldsymbol{x}\|_\infty}.
\end{equation}
Let us define $\widetilde{\epsilon} := \frac{16\epsilon}{\|\vx\|_{\infty}}$ and $\widetilde{\epsilon}_1 := \frac{8\epsilon_1}{\|\vx\|_{\infty}}$, so that $\widetilde{\epsilon} < 4$ and $C\widetilde{\epsilon}_1 < 2$. We can further simplify  (\ref{eqn: dist_lower_4epsilon_new}) to:
\begin{equation}
\label{eqn: final}
\text{dist}\left(\frac{\boldsymbol{h}_{j^{\#}}}{\|\boldsymbol{h}_{j^{\#}}\|_2},\frac{\boldsymbol{h}}{\|\boldsymbol{h}\|_2}\right)\leq \widetilde{\epsilon}+C\widetilde{\epsilon}_1.
\end{equation}
Substituting this result into the probability expression given in (\ref{eqn: probability}), we obtain the following bound:
 \begin{equation}
 1-4n^2\exp\left(-\frac{L\|\vx\|^2_\infty\widetilde{\epsilon}^2}{512\cdot\nu^2\sqrt{1+\mu_hK}}\right)-2n^2\exp\left(-\frac{{L}\|\vx\|^2_\infty\widetilde{\epsilon}_1^2}{192\nu}\right).
 \end{equation}
Thus, for sufficiently large $L$, specifically when
\[
L\gtrsim \frac{\sqrt{1+\mu_h K}\nu^2}{\|\boldsymbol{x}\|_\infty^2\min\{\widetilde{\epsilon}^2,\widetilde{\epsilon}_1^2\}}\log n,
\]
 we conclude that (\ref{eqn: final}) holds with probability at least $1-\frac{1}{n}$. This completes the proof.
\end{proof}

\section{Proof of Theorem \ref{thm: x_estimation}}\label{sec: Thm4}

First and foremost, we present two technical lemmas that play  fundamental roles in the proof of Theorem \ref{thm: x_estimation}. For convenience, in the following statements, rewrite $\vy_l$, for $l = 1, \dots, L$, as
\[
\begin{aligned}
\vy_l=&\vh\circledast(\vd_l\odot \vx)+\vz_l=\vh_0\circledast(\vd_l\odot e^{\mathrm{i}\theta_0}\|\vh\|_2\vx)+ (e^{-\mathrm{i}\theta_0}\vh/\|\vh\|_2-\vh_0)\circledast(\vd_l\odot e^{\mathrm{i}\theta_0}\|\vh\|_2\vx)+\vz_{l},
\end{aligned}
\]
which simplifies to
\[
\vy_l = \A_l \left( e^{\mathrm{i} \theta_0} \|\vh\|_2 \vx \right) + \widetilde{\vz}_l,
\]
where 
\begin{equation}
\label{eqn: A_l and z_l}
\A_{l}(\vz):=\vh_0\circledast(\vd_l\odot \vz),\quad \text{and}\quad \widetilde{\vz}_l:=(e^{-\mathrm{i}\theta_0}\vh/\|\vh\|_2-\vh_0)\circledast(\vd_l\odot e^{\mathrm{i}\theta_0}\|\vh\|_2\vx)+\vz_{l},
\end{equation}
for $l=1,\ldots,L$. 

\begin{lem}\label{lem: RIP}
Let $\boldsymbol{D}_l=\mathrm{diag}(\vd_l)$, for $l = 1, \dots, L$, be independent diagonal matrices, where the diagonal entries are independent copies of $g \in \mathbb{C}$ as defined in Definition \ref{def: g} with parameter $\nu$.  
Consider any fixed vector $\vh_{0} \in \mathbb{S}^{n-1}$ and let $\mathcal{A}_l$ be as defined in the first part of equation (\ref{eqn: A_l and z_l}).
Then, for all $\vz \in \mathbb{C}^n$, it holds that:
\begin{equation}
\label{eqn: RIP}
\frac{4}{5}\|\vz\|_2^2\leq \frac{1}{L}\sum_{l=1}^{L}\|\A_l(\vz)\|_2^2\leq \frac{6}{5}\|\vz\|_2^2,
\end{equation}
with probability at least $1-\frac{1}{n}$, provided that $L\gtrsim {{\mu}_0\nu^2}\log n$. Here $\mu_0=\|\widehat{\vh}_0\|^2_\infty/\|\vh_0\|^2_2=\|\widehat{\vh}_0\|^2_\infty$ is the coherence parameter.
\end{lem}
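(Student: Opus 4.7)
The plan is to recast the quantity $\frac{1}{L}\sum_{l=1}^{L}\|\mathcal{A}_l(\vz)\|_2^2$ as a single quadratic form $\vz^{*}\vM\vz$ and then reduce the claim \eqref{eqn: RIP} to the operator-norm concentration inequality $\|\vM-\boldsymbol{I}\|\leq 1/5$. First I would use the identity $\vh_0\circledast(\vd_l\odot\vz)=\vC_{\vh_0}\vD_l\vz$ together with the diagonalization $\vC_{\vh_0}=\vU^{*}\,\mathrm{diag}(\widehat{\vh}_0)\,\vU$ via the unitary DFT $\vU=\vF/\sqrt{n}$, which yields $\|\vC_{\vh_0}\|=\|\widehat{\vh}_0\|_\infty=\sqrt{\mu_0}$. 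Setting $\vA:=\vC_{\vh_0}^{*}\vC_{\vh_0}$ and $\vM_l:=\vD_l^{*}\vA\vD_l$, we get $\tfrac{1}{L}\sum_l\|\mathcal{A}_l(\vz)\|_2^{2}=\vz^{*}\vM\vz$ with $\vM:=\tfrac{1}{L}\sum_l\vM_l$, so \eqref{eqn: RIP} is equivalent to the spectral bound $\|\vM-\boldsymbol{I}\|\leq 1/5$ since $\vM$ is Hermitian.

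Next I would compute $\mathbb{E}\vM_l$ entry-wise: the $(i,j)$-entry is $A_{i,j}\,\mathbb{E}[\overline{d_{l,i}}\,d_{l,j}]$, which vanishes for $i\neq j$ by the zero-mean and independence assumptions on the entries of $\vd_l$, and equals $A_{i,i}$ when $i=j$. Since every column of the circulant $\vC_{\vh_0}$ is a cyclic permutation of $\vh_0$, the diagonal entries $A_{i,i}=\|\vh_0\|_2^{2}=1$, giving $\mathbb{E}\vM_l=\boldsymbol{I}$ and hence $\mathbb{E}\vM=\boldsymbol{I}$.

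Finally I would apply matrix Bernstein (Theorem \ref{thm: Bernstein_inequality}) to the centered sequence $\boldsymbol{W}_l:=(\vM_l-\boldsymbol{I})/L$. Because $\vD_l^{*}\vD_l\preceq\nu^{2}\boldsymbol{I}$ and $\|\vA\|=\mu_0$, each $\vM_l$ is PSD with $\|\vM_l\|\leq\nu^{2}\mu_0$, so $\|\boldsymbol{W}_l\|\lesssim\nu^{2}\mu_0/L$. For the variance parameter, the PSD-sandwich trick $\vM_l^{2}\preceq\|\vM_l\|\,\vM_l$ combined with $\mathbb{E}\vM_l=\boldsymbol{I}$ yields $\mathbb{E}[(\vM_l-\boldsymbol{I})^{*}(\vM_l-\boldsymbol{I})]\preceq\nu^{2}\mu_0\,\boldsymbol{I}$, so $\sigma^{2}\lesssim\nu^{2}\mu_0/L$. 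Plugging $t=1/5$ into matrix Bernstein produces a tail bound of the form $2n\exp\!\bigl(-cL/(\nu^{2}\mu_0)\bigr)$, which is at most $1/n$ as soon as $L\gtrsim\nu^{2}\mu_0\log n$. The step I expect to be most delicate is the variance estimate: the naive bound $\|\mathbb{E}\vM_l^{*}\vM_l\|\leq\|\vM_l\|^{2}\leq\nu^{4}\mu_0^{2}$ would inflate the sampling requirement by an extra factor of $\nu^{2}\mu_0$, and the PSD-sandwich observation $\vM_l^{2}\preceq\|\vM_l\|\vM_l$ is exactly what restores the sharp $\nu^{2}\mu_0\log n$ scaling asserted in the lemma.
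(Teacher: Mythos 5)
Your proposal is correct and follows essentially the same route as the paper: reduce \eqref{eqn: RIP} to the spectral bound $\bigl\|\frac{1}{L}\sum_{l}(\overline{\boldsymbol{D}_l}\vC_{\vh_0}^{*}\vC_{\vh_0}\boldsymbol{D}_l-\boldsymbol{I})\bigr\|\leq 1/5$, use the DFT diagonalization to get $\|\vC_{\vh_0}^{*}\vC_{\vh_0}\|=\mu_0$ and $\mathrm{diag}(\vC_{\vh_0}^{*}\vC_{\vh_0})=\boldsymbol{I}$, and apply matrix Bernstein with $R,\sigma^2\lesssim \mu_0\nu^2/L$. The only (cosmetic) difference is in the variance estimate: you use the PSD sandwich $\vM_l^2\preceq\|\vM_l\|\vM_l$ together with $\mathbb{E}\vM_l=\boldsymbol{I}$, whereas the paper inserts $\boldsymbol{D}_l\overline{\boldsymbol{D}_l}\preceq\nu^2\boldsymbol{I}$ in the middle of $\mathbb{E}(\vM_l^{*}\vM_l)$ and then bounds the diagonal of $(\vC_{\vh_0}^{*}\vC_{\vh_0})^2$; both yield the same $\mu_0\nu^2\log n$ sampling requirement.
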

\begin{proof}
The proof is postponed in Section \ref{sec: RIP}. 
\end{proof}
\begin{lem}\label{lem: noise_upper}
Let $\boldsymbol{D}_l=\mathrm{diag}(\vd_l)$, for $l = 1, \dots, L$, be independent diagonal matrices, where the diagonal entries are independent copies of $g \in \mathbb{C}$ as defined in Definition \ref{def: g} with parameter $\nu$.  
Consider any fixed vector $\vh_{0} \in \mathbb{S}^{n-1}$ and let $\mathcal{A}_l$ and $\widetilde{\vz}_l$, for $l=1,\ldots,L$,  be as defined in (\ref{eqn: A_l and z_l}). Then, we have the following upper bound on the noise term: 
\[
\Big\|\frac{1}{L}\sum_{l=1}^L\mathcal{A}_l^*(\widetilde{\vz}_l)\Big\|_\infty\leq (2\epsilon+C)\|\vh\|_2.
\]
with probability at least $1-\frac{1}{n}$, provided that $L\gtrsim {{\mu}_0\nu^2}\log n$. Here $\mu_0=\|\widehat{\vh}_0\|^2_\infty/\|\vh_0\|^2_2=\|\widehat{\vh}_0\|^2_\infty$, which is the same as in Lemma \ref{lem: RIP}. \end{lem}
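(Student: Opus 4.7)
The plan is to decompose $\widetilde{\vz}_l = \vh_\Delta \circledast (\vd_l \odot \widetilde{\vx}) + \vz_l$, where $\vh_\Delta := e^{-\mathrm{i}\theta_0}\vh/\|\vh\|_2 - \vh_0$ satisfies $\|\vh_\Delta\|_2 < \epsilon$ by the optimality of $\theta_0$, and $\widetilde{\vx} := e^{\mathrm{i}\theta_0}\|\vh\|_2\vx$ satisfies $\|\widetilde{\vx}\|_2 = \|\vh\|_2$. Writing the adjoint as $\mathcal{A}_l^* = \overline{\vD_l}\vC_{\vh_0}^*$ this induces
\[
\frac{1}{L}\sum_{l=1}^L \mathcal{A}_l^*(\widetilde{\vz}_l) \;=\; T_1 + T_2, \quad T_1 := \frac{1}{L}\sum_{l=1}^L \overline{\vD_l}\,\vM\, \vD_l \widetilde{\vx}, \quad T_2 := \frac{1}{L}\sum_{l=1}^L \overline{\vD_l}\vC_{\vh_0}^*\vz_l,
\]
with $\vM := \vC_{\vh_0}^*\vC_{\vh_\Delta}$. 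Since $\vM$ is circulant, its diagonal entries are all equal to $M_{11} = \langle \vh_0,\vh_\Delta\rangle$ (bounded in modulus by $\epsilon$), and the Fourier bound $\|\vm\|_2 \leq \|\widehat{\vh_0}\|_\infty\|\vh_\Delta\|_2 \leq \sqrt{\mu_0}\,\epsilon$ controls the generating vector $\vm$ of $\vM$. The targets are $\|T_1\|_\infty \leq 2\epsilon\|\vh\|_2$ probabilistically and $\|T_2\|_\infty \leq C\|\vh\|_2$ deterministically; the triangle inequality then delivers the claim.

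The term $T_2$ admits an elementary deterministic bound: for each coordinate $i$, using $|d_{l,i}| \leq \nu$, $|(\vC_{\vh_0}^*\vz_l)_i| \leq \|\vh_0\|_2\|\vz_l\|_2 = \|\vz_l\|_2$, and Cauchy--Schwarz in $l$,
\[
|(T_2)_i| \leq \frac{\nu}{L}\sum_{l=1}^L \|\vz_l\|_2 \leq \frac{\nu}{\sqrt{L}}\|\vZ\|_F \leq \frac{\nu C}{\sqrt{L}}\|\vh\|_2,
\]
which is $\leq C\|\vh\|_2$ whenever $L \geq \nu^2$; this is automatic under $L \gtrsim \mu_0\nu^2\log n$ because $\mu_0,\log n \geq 1$. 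So the entire probability budget is spent on $T_1$.

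For $T_1$, the identity $\mathbb{E}[\overline{d_{l,i}}d_{l,j}] = \delta_{ij}$ together with the constancy of $\mathrm{diag}(\vM)$ gives $\mathbb{E}[T_1] = M_{11}\widetilde{\vx}$, whence $\|\mathbb{E}[T_1]\|_\infty \leq \epsilon\|\widetilde{\vx}\|_\infty \leq \epsilon\|\vh\|_2$. I will then apply scalar Bernstein (Theorem \ref{thm: Bernstein_inequality}) coordinate-wise to $(T_1 - \mathbb{E}[T_1])_i$, whose $l$-th increment decomposes as the diagonal part $(|d_{l,i}|^2 - 1)M_{11}\widetilde{x}_i/L$ and the off-diagonal part $\sum_{j\neq i}\overline{d_{l,i}}M_{ij}d_{l,j}\widetilde{x}_j/L$; because $\mathbb{E}[d_{l,j}] = 0$, the cross-expectation vanishes and the two parts' variances add. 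Using $|M_{11}|\leq\epsilon$, $\mathbb{E}||d_{l,i}|^2-1|^2 \leq \nu^2$, and $\sum_{j\neq i}|M_{ij}|^2 \leq \|\vm\|_2^2 \leq \mu_0\epsilon^2$, the per-$l$ variance is bounded by $(\nu^2+\mu_0)\epsilon^2\|\vh\|_2^2/L^2$, and the per-$l$ almost-sure magnitude is $\lesssim \nu^2\sqrt{\mu_0}\,\epsilon\|\vh\|_2/L$ (the latter via Cauchy--Schwarz on the off-diagonal sum). Bernstein with $t = \epsilon\|\vh\|_2$ yields failure probability $\leq \exp\bigl(-cL/(\mu_0\nu^2)\bigr)$ per coordinate, so a union bound over $i = 1,\dots,n$ gives $\|T_1 - \mathbb{E}[T_1]\|_\infty \leq \epsilon\|\vh\|_2$ with probability at least $1 - 1/n$ once $L \gtrsim \mu_0\nu^2\log n$. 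Summing the three bounds completes the proof. The main obstacle is the variance bookkeeping for $T_1$: reaching the advertised sampling complexity $L \gtrsim \mu_0\nu^2\log n$ (rather than the trivial $L \gtrsim n\log n$) hinges on invoking the Fourier estimate $\|\vm\|_2 \leq \sqrt{\mu_0}\,\epsilon$ in place of an $\ell_1$--$\ell_2$ bound, and on separating the quadratic form in $\vd_l$ into diagonal and off-diagonal pieces so that their contributions do not inflate one another.
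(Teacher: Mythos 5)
Your proposal is correct and follows essentially the same route as the paper's proof: the same decomposition of $\widetilde{\vz}_l$ into a kernel-mismatch term and the raw noise, the same Fourier/coherence estimate $\|\vm\|_2\le\sqrt{\mu_0}\,\epsilon$ (this is exactly the paper's bound \eqref{eqn: s_j_temp}), and the same coordinate-wise Bernstein argument with a union bound over the $n$ coordinates yielding $L\gtrsim\mu_0\nu^2\log n$. The only (harmless, slightly cleaner) deviations are that you bound the noise piece $T_2$ deterministically by Cauchy--Schwarz in $l$ where the paper runs a second Bernstein argument, and that you compute $\mathbb{E}[T_1]=M_{11}\widetilde{\vx}$ exactly via the circulant structure rather than bounding the mean of each increment by Cauchy--Schwarz as the paper does.
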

\begin{proof}
The proof is postponed in Section \ref{sec: noise_upper}.
\end{proof}
Now we begin to prove Theorem \ref{thm: x_estimation}.
\begin{proof}[Proof of Theorem \ref{thm: x_estimation}]
In \cite[Theorem 3.1]{JL14}, Lin and Li established that if the measurement model $\vy = \vA \vx_0 + \vz$ involves a linear measurement matrix $\vA$ and a noise term $\vz$ satisfying $\|\vA^* \vz\|_\infty \leq \frac{\lambda}{2},$ and the matrix $\vA$ satisfies the following restricted isometry property (RIP)-type condition:\begin{equation}
\label{eqn: RIP_A}
\frac{4}{5}\|\vw\|_2\leq \|\vA\vw\|_2^2\leq \frac{6}{5}\|\vw\|_2^2
\end{equation} 
for all $\vw$, then the solution $\boldsymbol{x}^{\#}$ to the LASSO problem
\begin{equation}
\min_{\widetilde{\boldsymbol{x}}}\ \frac{1}{2}\|\boldsymbol{A}\widetilde{\boldsymbol{x}}-{\boldsymbol{y}}\|_2^2+\lambda\|\widetilde{\boldsymbol{x}}\|_1
\end{equation}
obeys the following error bound:
\[
\|{\boldsymbol{x}}^{\#}-\boldsymbol{x}_0\|_2\leq \min_{1\leq k\leq K}\left(C_1\sqrt{k}\lambda+C_2\frac{\|\boldsymbol{x}-(\boldsymbol{x})_{[k]}\|_1}{\sqrt{k}}\right),
\]
where $C_1$ and $C_2$ are absolute constants. Although the result in \cite{JL14} primarily focuses on the RIP for sparse signals, the condition in \eqref{eqn: RIP_A}, which applies to all signals, is a stronger assumption.

Applying this result to the structured measurement model given in (\ref{eqn: model01}), we define:
\[
\vA\widetilde{\vx}=\frac{1}{\sqrt{L}}\begin{bmatrix}
\mathcal{A}_1(\widetilde{\vx})\\
\vdots\\
\mathcal{A}_{L}(\widetilde{\vx})
\end{bmatrix},
\quad 
\vz=\frac{1}{\sqrt{L}}\begin{bmatrix}
\widetilde{\vz}_1\\
\vdots\\
\widetilde{\vz}_n
\end{bmatrix}
\quad \text{and}\quad \vx_0=e^{\mathrm{i} \theta_0} \|\vh\|_2 \vx.
\]

By Lemma \ref{lem: RIP} and Lemma \ref{lem: noise_upper}, the condition in (\ref{eqn: RIP_A}) holds with probability at least $1 - \frac{1}{n}$, provided that $L \gtrsim \mu_0 \nu^2 \log n.$ Furthermore, the bound on the noise term follows as $\|\vA^* \vz\|_\infty = \Big\|\frac{1}{L} \sum_{l=1}^L \mathcal{A}_l^* (\widetilde{\vz}_l) \Big\|_\infty \leq (2\epsilon + C) \|\vh\|_2$ with probability at least $1 - \frac{1}{n}$, provided that $L \gtrsim \mu_0 \nu^2 \log n.$ 
Setting $\lambda := 2(2\epsilon + C) \|\vh\|_2 = 2(2\epsilon + C) \|\vh\|_2 \|\vx\|_2,$ we can arrive at the desired conclusion.
\end{proof}

\section{Proof of Lemma \ref{lem: RIP}}\label{sec: RIP}
\begin{proof}[Proof of Lemma \ref{lem: RIP}]
First, we directly observe that
\begin{equation}
\label{eqn: c_h_upper}
\|\vC_{\vh_{0}}^{*}  \vC_{\vh_{0}}\|=\max_{\|\boldsymbol{z}\|_2=1}\boldsymbol{z}^{*}  \vC_{\vh_{0}}^{*}  \vC_{\vh_{0}}\boldsymbol{z}\overset{(a)}=\max_{\|\boldsymbol{z}\|_2=1}\frac{1}{n}\|(\boldsymbol{F}\vh_{0})\odot(\boldsymbol{F}\boldsymbol{z})\|_2^2=\max_{\|\boldsymbol{z}\|_2=1}\|(\boldsymbol{F}\vh_{0})\odot\boldsymbol{z}\|_2^2=\|\boldsymbol{F}\vh_{0}\|_\infty^2=\mu_0.
\end{equation}
Here, $(a)$ follows from the identity $\boldsymbol{F}(\vC_{\vh_0} \boldsymbol{z}) = (\boldsymbol{F} \vh_0) \odot (\boldsymbol{F} \boldsymbol{z})$ and the fact that $n \|\vC_{\vh_0} \boldsymbol{z}\|_2^2 = \|\boldsymbol{F} (\vC_{\vh_0} \boldsymbol{z})\|_2^2$.  

Let $\boldsymbol{M}_l := \frac{1}{L}\left(\overline{\boldsymbol{D}_{l}} \vC_{\vh_{0}}^{*} \vC_{\vh_{0}} \boldsymbol{D}_{l} - \boldsymbol{I}\right)$, for $l = 1, \ldots, L$. A direct calculation shows that proving (\ref{eqn: RIP}) is equivalent to demonstrating the following bound:
\begin{equation}
\label{eqn: M_upper}
\Big\|\sum_{l=1}^{L} \boldsymbol{M}_l\Big\|\leq \frac{1}{5}.
\end{equation}
Since $\|\vh_{0}\|_2 = 1$, it follows that
\[
\mathbb{E} \boldsymbol{M}_l=\frac{1}{L}\left(\text{diag}(\vC_{\vh_{0}}^{*}  \vC_{\vh_{0}})-\boldsymbol{I}\right)=\boldsymbol{0}.
\]
Next, for $l = 1, \ldots, L$, we obtain the following upper bound for $\|\boldsymbol{M}_l\|$:
\begin{equation}\label{eqn: mid1}
\begin{split}
\|\boldsymbol{M}_l\|
=&\left\|\frac{1}{L}\overline{\boldsymbol{D}_{l}}\vC_{\vh_{0}}^{*}  \vC_{\vh_{0}}\boldsymbol{D}_l-\frac{1}{L}\boldsymbol{I}\right\|\leq \left\|\frac{1}{L}\overline{\boldsymbol{D}_{l}}\vC_{\vh_{0}}^{*}  \vC_{\vh_{0}}\boldsymbol{D}_{l}\right\|+\frac{1}{L}\leq\frac{\nu^2}{L}\|\vC_{\vh_{0}}^{*}  \vC_{\vh_{0}}\|+\frac{1}{L}\\
\overset{(b)}\leq & \frac{\mu_0\nu^2+1}{L}\overset{(c)}\leq \frac{2\mu_0\nu^2}{L}.
\end{split}
\end{equation}
Here $(b)$ follows from (\ref{eqn: c_h_upper}), and $(c)$ is derived from
 $
 \mu_0=\|\boldsymbol{F}\vh_{0}\|_\infty^2\geq \frac{1}{n}\|\boldsymbol{F}\vh_{0}\|_2^2=\|\vh_{0}\|_2^2=1.
 $

Besides, we have
\begin{equation}\label{eqn: mid2}
\begin{split}
\Big\|\sum_{l=1}^{L}\mathbb{E}(\boldsymbol{M}_l\boldsymbol{M}_l^{*} )\Big\|=&\Big\|\sum_{l=1}^{L}\mathbb{E}(\boldsymbol{M}_l^{*} \boldsymbol{M}_l)\Big\|\\
=&\Big\|\frac{1}{L^2}\sum_{l=1}^{L}\mathbb{E}\left(\left(\overline{\boldsymbol{D}_{l}}\vC_{\vh_{0}}^{*}  \vC_{\vh_{0}}\boldsymbol{D}_{l}-\boldsymbol{I}\right)\left(\overline{\boldsymbol{D}_{l}}\vC_{\vh_{0}}^{*}  \vC_{\vh_{0}}\boldsymbol{D}_{l}-\boldsymbol{I}\right)\right)\Big\|\\
=&\Big\|\frac{1}{L^2}\sum_{l=1}^{L}\mathbb{E}\left(\overline{\boldsymbol{D}_{l}}\vC_{\vh_{0}}^{*}  \vC_{\vh_{0}}{\boldsymbol{D}_{l}} \overline{\boldsymbol{D}_{l}}  \vC_{\vh_{0}}^{*}  \vC_{\vh_{0}}\boldsymbol{D}_{l}-2\overline{\boldsymbol{D}_{l}}\vC_{\vh_{0}}^{*}  \vC_{\vh_{0}}\boldsymbol{D}_{l}+\boldsymbol{I}\right)\Big\|\\
\overset{(d)}\leq &\frac{\nu^2}{L}\left\|\text{diag}(\vC_{\vh_{0}}^{*}  \vC_{\vh_{0}}\vC_{\vh_{0}}^{*}  \vC_{\vh_{0}})\right\|+\frac{1}{L}\leq \frac{\nu^2}{L}\left\| \vC_{\vh_{0}}\vC_{\vh_{0}}^{*}\right\|+\frac{1}{L}\\
\overset{(e)}\leq & \frac{\mu_0\nu^2+1}{L}\leq \frac{2\mu_0\nu^2}{L}.
\end{split}
\end{equation}
Here $(d)$ relies on the fact that 
\[
\mathbb{E}\left(\overline{\boldsymbol{D}_{l}}\vC_{\vh_{0}}^{*}  \vC_{\vh_{0}}{\boldsymbol{D}_{l}} \overline{\boldsymbol{D}_{l}}  \vC_{\vh_{0}}^{*}  \vC_{\vh_{0}}\boldsymbol{D}_{l}\right)\preceq \nu^2\mathbb{E}\left(\overline{\boldsymbol{D}_{l}}\vC_{\vh_{0}}^{*}  \vC_{\vh_{0}} \vC_{\vh_{0}}^{*}  \vC_{\vh_{0}}\boldsymbol{D}_{l}\right)=\nu^2\text{diag}(\vC_{\vh_{0}}^{*}  \vC_{\vh_{0}}\vC_{\vh_{0}}^{*}  \vC_{\vh_{0}}),
\]
and $(e)$ also relies on  (\ref{eqn: c_h_upper}). 

Next, applying Bernstein's inequality from Theorem \ref{thm: Bernstein_inequality} with the sequence $\{\boldsymbol{M}_k\}$, and setting $R := \frac{2 \mu_0 \nu^2}{L}$  and $\sigma^2 := \frac{2 \mu_0 \nu^2}{L}$ as per (\ref{eqn: mid1}) and (\ref{eqn: mid2}), we obtain:
\begin{equation}
\label{eqn: berstein_RIP}
\mathbb{P}\left\{ \Big\|\sum_{k}\boldsymbol{M}_{k}\Big\|\geq t\right\} \leq 2n\exp\left(\frac{-Lt^{2}/2}{2\mu_0\nu^2+2\mu_0\nu^2t/3}\right).
\end{equation}
By setting $t = \frac{1}{4}$ and $L \gtrsim \mu_0 \nu^2 \log n$ in (\ref{eqn: berstein_RIP}), we obtain
\[
\Big\|\sum_{l=1}^{L} \boldsymbol{M}_l\Big\|\leq \frac{1}{5}
\]
with probability at least $1-\frac{1}{n}$. This concludes the proof of the bound in (\ref{eqn: M_upper}).
 \end{proof}

\section{Proof of Lemma \ref{lem: noise_upper}}\label{sec: noise_upper}
\begin{proof}[The proof of Lemma \ref{lem: noise_upper}]
By direct calculations, we obtain:
\[
\frac{1}{L}\sum_{l=1}^L\mathcal{A}_l^*(\widetilde{\vz}_l)=\frac{1}{L}\sum_{l=1}^L e^{\mathrm{i}\theta_{0}}\|\boldsymbol{h}\|_2\overline{\boldsymbol{D}_{l}}\vC_{\vh_{0}}^{*} (e^{-\mathrm{i}\theta_{0}}\vC_{\boldsymbol{h}/\|\boldsymbol{h}\|_2}-\vC_{\vh_{0}})\boldsymbol{D}_{l}\boldsymbol{x}+\frac{1}{L}\sum_{l=1}^L\overline{\vD_{l}}\vC^*_{\vh_0}\vz_l.
\]
Let $\vd_l = [d_{l,1}, \ldots, d_{l,n}]^{T}$ for $l = 1, \ldots, L$, and denote $s_{\tau}(\vh_0)$ for $\tau \in \{0, \ldots, n-1\}$ as specified in equation (\ref{eqn: s_tau}). We express $\frac{1}{L}\sum_{l=1}^L\mathcal{A}_l^*(\widetilde{\vz}_l)$ in component-wise form as:
\[
\frac{1}{L}\sum_{l=1}^L\mathcal{A}_l^*(\widetilde{\vz}_l)=[u_1,\ldots,u_n]^{T}.
\] 
For each $j = 1, \ldots, n$, we decompose $u_j$  as follows:
\[
\begin{aligned}
u_j=\sum_{l=1}^L \alpha_{j,l}+\sum_{l=1}^{L}\beta_{j,l},
\end{aligned}
\]
where the terms $\alpha_{j,l}$ and $\beta_{j,l}$ are defined as:
\[
\alpha_{j,l}:=\frac{1}{L}e^{\mathrm{i}\theta_{0}}\|\boldsymbol{h}\|_2\overline{d_{l,j}}s_{-j+1}^*(\vh_0) (e^{-\mathrm{i}\theta_{0}}\vC_{\boldsymbol{h}/\|\boldsymbol{h}\|_2}-\vC_{\vh_{0}})\boldsymbol{D}_{l}\boldsymbol{x},\qquad \text{and}\qquad \beta_{j,l}:=\frac{1}{L}\overline{d_{l,j}}s_{-j+1}^*(\vh_0)\vz_l.
\]
To establish an upper bound for $\Big\|\frac{1}{L}\sum_{l=1}^L\mathcal{A}_l^*(\widetilde{\vz}_l)\Big\|_\infty$, we first analyze the upper bounds of $\Big|\sum_{l=1}^L\alpha_{j,l}\Big|$  and $\Big|\sum_{l=1}^L\beta_{j,l}\Big|$ for any fixed $j = 1, \ldots, n$:

\textbf{(1): Estimation of the upper bound of $|\sum_{l=1}^L\alpha_{j,l}|$.}

We begin by noting that
\[
\begin{aligned}
|\mathbb{E}(\alpha_{j,l})|=&\frac{1}{L}\|\vh\|_2\cdot|x_{j}|\cdot|s^*_{-j+1}(\vh_0)(e^{-\mathrm{i}\theta_0}s_{-j+1}(\vh/\|\vh\|_2)-s_{-j+1}(\vh_0))|\\
=&\frac{1}{L}\|\vh\|_2\cdot|x_{j}|\cdot\|s^*_{-j+1}(\vh_0)\|_2\cdot \|(e^{-\mathrm{i}\theta_0}s_{-j+1}(\vh/\|\vh\|_2)-s_{-j+1}(\vh_0))\|_2\\
\leq& \frac{\epsilon}{L}\|\vh\|_2,
\end{aligned}
\]
and 
\[
|\alpha_{j,l}|\leq \frac{1}{L}\nu\|\vh\|_2\cdot \|s_{-j+1}^*(\vh_0) (e^{-\mathrm{i}\theta_{0}}\vC_{\boldsymbol{h}/\|\boldsymbol{h}\|_2}-\vC_{\vh_{0}})\|_2\cdot\|\vD_l\vx\|_2\leq \frac{\epsilon\sqrt{\mu_0}\nu^2}{L}\|\vh\|_2,
\]
where the inequality follows from $\| e^{-\mathrm{i}\theta_0} \vh/\|\vh\|_2 - \vh_0 \|_2 \leq \epsilon$, $\|\vh_0\|_2 = \|\vx\|_2 = 1$, and the following relation:
\begin{equation}
\label{eqn: s_j_temp}
\begin{aligned}
 \|s_{-j+1}^*(\vh_0) (e^{-\mathrm{i}\theta_{0}}\vC_{\boldsymbol{h}/\|\boldsymbol{h}\|_2}-\vC_{\vh_{0}})\|_2=&\|\vC^*_{\widetilde{\vh}}s_{-j+1}(\vh_0)\|_2=\left\|(\frac{1}{n}\vF^*\text{diag}(\vF\widetilde{\vh})\vF)^*s_{-j+1}(\vh_0)\right\|_2\\
 =&\left\|\frac{1}{n}\vF^*\overline{\text{diag}(\vF\widetilde{\vh})}\vF s_{-j+1}(\vh_0)\right\|_2=\left\|\frac{1}{n}\vF^*{\text{diag}(\vF s_{-j+1}(\vh_0))}\overline{\vF\widetilde{\vh}}\right\|_2\\
 \leq &\frac{1}{n}\|\vF^*\|\cdot \|\overline{\vF}\|\cdot \|\widetilde{\vh}\|_2\cdot \|\vF s_{-j+1}(\vh_0)\|_{\infty}\leq \epsilon \sqrt{\mu_0}
 \end{aligned}
\end{equation}
with $\widetilde{\vh}:=e^{-\mathrm{i}\theta_0}\vh/\|\vh\|_2-\vh_0$. The final line above is based on the assumption that $\|\widetilde{\vh}\|_2 \leq \epsilon$, and $\|\vF s_{-j+1}(\vh_0)\|_{\infty}^2 = \|\vF(\vh_0)\|_{\infty}^2 = \mu_0 \geq 1$. 

Furthermore, since $\|\vx\|_\infty \leq \|\vx\|_2 = 1$, we obtain the following:
\[
\begin{aligned}
\sum_{l=1}^{L}\mathbb{E}|\alpha_{j,l}-\mathbb{E}\alpha_{j,l}|^2\leq &\sum_{l=1}^{L}\mathbb{E}|\alpha_{j,l}|^2\\
\leq& \frac{\nu^2}{L}\|\vh\|_2^2\cdot\mathbb{E}|s_{-j+1}^*(\vh_0) (e^{-\mathrm{i}\theta_{0}}\vC_{\boldsymbol{h}/\|\boldsymbol{h}\|_2}-\vC_{\vh_{0}})\boldsymbol{D}_{l}\boldsymbol{x}|^2\\
= &\frac{\nu^2}{L}\|\vh\|_2^2\cdot \|s_{-j+1}^*(\vh_0) (e^{-\mathrm{i}\theta_{0}}\vC_{\boldsymbol{h}/\|\boldsymbol{h}\|_2}-\vC_{\vh_{0}})\|_2^2\cdot \|\vx\|_2^2\\
\leq & \frac{\mu_0\epsilon^2\nu^2}{L}\|\vh\|_2^2.
\end{aligned}
\]
The last inequality also follows from (\ref{eqn: s_j_temp}). 

Therefore, we can apply Theorem \ref{thm: Bernstein_inequality} to the sequence $\{ \alpha_{j,l} - \mathbb{E}[\alpha_{j,l}] \}_{l=1}^{L}$, where $R := \frac{2 \epsilon \sqrt{\mu_0} \nu^2\|\vh\|_2}{L}$ and $\sigma^2 := \frac{\mu_0 \epsilon^2 \nu^2}{L} \|\vh\|_2^2$, yielding the following bound for all $t \geq 0$:
\[
\mathbb{P}\left\{ \Big|\sum_{l=1}^{L}(\alpha_{j,l} - \mathbb{E}[\alpha_{j,l}])\Big|\geq t\right\} \leq 2\exp\left(\frac{-Lt^{2}/2}{\mu_0 \epsilon^2 \nu^2\|\vh\|_2^2+2\epsilon \sqrt{\mu_0} \nu^2\|\vh\|_2t/3}\right).
\]
Taking $t=\epsilon\|\vh\|_2$, it directly leads to 
\begin{equation}
\label{eqn: alpha}
\left|\sum_{l=1}^L\alpha_{j,l}\right|\leq \epsilon\|\vh\|_2+\sum_{l=1}^L|\mathbb{E}\alpha_{j,l}|\leq 2\epsilon\|\vh\|_2
\end{equation}
with probability at least 
\[
1-2\exp\left(-\frac{L}{4\mu_0\nu^2}\right).
\]

\textbf{(2): Estimation of the upper bound of $|\sum_{l=1}^L\beta_{j,l}|$.}

Here we have $\mathbb{E}\beta_{j,l}=0$ with 
\[
|\beta_{j,l}|\leq \frac{\nu}{L}\|\vz_l\|_2\leq \frac{\nu}{L}\|\vZ\|_F\quad \text{and}\quad \sum_{l=1}^L\mathbb{E}|\beta_{j,l}|^2=\frac{1}{L}\sum_{l=1}^{L}\|\vz_{l}\|_2^2=\frac{1}{L}\|\vZ\|_F^2. 
\]
Applying Theorem \ref{thm: Bernstein_inequality} to the sequence $\{ \beta_{j,l} \}_{l=1}^L$, with $R := \frac{\nu}{L} \|\vZ\|_F$ and $\sigma^2 := \frac{1}{L} \|\vZ\|_F^2$. The inequality from the theorem states that for all $t \geq 0$,
\[
\mathbb{P}\left\{ \big|\sum_{l=1}^{L}\beta_{j,l} \big|\geq t\right\} \leq 2\exp\left(\frac{-Lt^{2}/2}{\|\vZ\|^2_F+\nu\|\vZ\|_Ft/3}\right).
\]
Taking $t = \|\vZ\|_F$, and recalling the assumption that $\|\vZ\|_F \leq C \|\vx\|_2 \|\vh\|_2 = C \|\vh\|_2$, we directly obtain
\begin{equation}
\label{eqn: beta}
\Big|\sum_{l=1}^{L}\beta_{j,l} \Big|\leq \|\vZ\|_F\leq C\|\vh\|_2
\end{equation}
with probability at least 
\[
1-2\exp\left(-\frac{L}{3\nu}\right).
\]

Now we aim to estimate the upper bound of $|u_j|$, for each $j=1,\ldots,n$. 
Using the results from (\ref{eqn: alpha}) and (\ref{eqn: beta}), we can conclude that for any fixed $j = 1, \ldots, n$,
\[
\begin{aligned}
|u_j|\leq \Big|\sum_{l=1}^L \alpha_{j,l}\Big|+\Big|\sum_{l=1}^{L}\beta_{j,l}\Big|
\leq  (2\epsilon+C)\|\vh\|_2.
\end{aligned}
\]
This holds with probability at least
\[
1-2\exp\left(-\frac{L}{4\mu_0\nu^2}\right)-2\exp\left(-\frac{L}{3\nu}\right).
\]

Finally, by applying the union probability bound, and assuming $L \gtrsim \mu_0 \nu^2 \log n$, we conclude that
\[
\Big\|\frac{1}{L}\sum_{l=1}^L\mathcal{A}_l^*(\widetilde{\vz}_l)\Big\|_\infty=\max_{j}|u_j|_{\infty}\leq (2\epsilon+C)\|\vh\|_2,
\]
with probability at least $1-\frac{1}{n}$.

\end{proof}

\bibliographystyle{plain}
\bibliography{references}
\end{document}